\RequirePackage{etoolbox}

\newbool{preprint}
\newbool{ipco}  

\setbool{preprint}{true}

\documentclass{article}

\usepackage[utf8]{inputenc}

\usepackage[letterpaper,margin=1in]{geometry}
\usepackage{amsmath, amsthm}

\usepackage{amssymb}

\usepackage{complexity}

\usepackage{dsfont} %
\usepackage{bbm} %

\usepackage{hyperref}
\hypersetup{colorlinks, linkcolor=darkblue, citecolor=darkgreen, urlcolor=darkblue}
\usepackage[capitalize, nameinlink]{cleveref}
\crefname{theorem}{Theorem}{Theorems}
\Crefname{lemma}{Lemma}{Lemmas}
\Crefname{claim}{Claim}{Claims}
\Crefname{fact}{Fact}{Facts}
\Crefname{remark}{Remark}{Remarks}
\Crefname{observation}{Observation}{Observations}
\Crefname{figure}{Figure}{Figures}
\Crefname{line}{Line}{Lines}
\Crefname{algocf}{Algorithm}{Algorithms}
\crefalias{AlgoLine}{line}
\Crefname{stepsromani}{Step}{Steps}
\Crefname{stepsarabici}{Step}{Steps}

\newtheorem{theorem}{Theorem}
\newtheorem{lemma}[theorem]{Lemma}

\newtheorem{definition}[theorem]{Definition}
\newtheorem{remark}[theorem]{Remark}
\newtheorem{corollary}[theorem]{Corollary}

\newtheorem{observation}[theorem]{Observation}
\newtheorem{claim}[theorem]{Claim}

\usepackage{times}

\usepackage[T1]{fontenc}
\usepackage{fix-cm}

\usepackage[inline]{enumitem}
\setlist[enumerate,1]{label=(\roman*), leftmargin=2.2em}
\setlist[enumerate,2]{label=(\alph*)}
\setlist{nosep,topsep=0.1em}
\setlist[itemize,1]{label={\bfseries--}}

\newlist{stepsroman}{enumerate}{1}
\setlist[stepsroman]{label=(\roman*), leftmargin=2.2em}

\newlist{stepsarabic}{enumerate}{1}
\setlist[stepsarabic]{rightmargin=0.2em, label=\arabic*., ref=\arabic*}

\usepackage{mathtools}

\usepackage[linewidth=0.8pt]{mdframed}
\mdfsetup{skipabove=\bigskipamount, skipbelow=\bigskipamount, innerleftmargin=0.5em, innertopmargin=0.5em, innerrightmargin=0.5em, innerbottommargin=0.5em, align=center}

\usepackage[color=darkgreen!40!white]{todonotes}
\setuptodonotes{size=\scriptsize}

\usepackage{xcolor}
\definecolor{darkblue}{rgb}{0,0,0.38}
\definecolor{darkred}{rgb}{0.6,0,0}
\definecolor{darkgreen}{rgb}{0.1,0.35,0}

\usepackage[
backend=biber,
style=alphabetic,
citestyle=alphabetic,
maxalphanames=6,
maxcitenames=99,
mincitenames=98,
maxbibnames=99,
giveninits=true,
url=false,
doi=true,
isbn=true,
backref=true
]{biblatex}

\usepackage{xpatch}

\makeatletter
\patchcmd\blx@bblinput{\blx@blxinit}
                      {\blx@blxinit
                      }{}{\fail}
\makeatother

\newcommand{\footremember}[2]{%
    \footnote{#2}
    \newcounter{#1}
    \setcounter{#1}{\value{footnote}}%
}
\newcommand{\footrecall}[1]{%
    \footnotemark[\value{#1}]%
}

\usepackage{xspace}

\makeatletter
\newcommand{\linkdest}[1]{\Hy@raisedlink{\hypertarget{#1}{}}}
\makeatother

\newcommand{\conv}{\operatorname{conv}}
\newcommand{\linspan}{\operatorname{span}}

\newcommand{\Lall}{\mathcal{L}_{\mathrm{all}}}

\newcommand{\maxcut}{\protect\hyperlink{prb:max-cut}{\textsc{Max-Cut}}\xspace}
\newcommand{\cubicmaxcut}{\protect\hyperlink{prb:cubic-max-cut}{\textsc{Cubic-Max-Cut}}\xspace}
\newcommand{\domP}{P^{\uparrow}_{\textup{odd}}}
\newcommand{\domCut}{P^{\uparrow}_{\textup{cut}}}

\makeatletter
\DeclareFieldFormat{eprint:arxiv}{%
  arXiv\addcolon\space
  \ifhyperref
    {\href{https://arxiv.org/abs/#1}{%
       \nolinkurl{#1}%
       \iffieldundef{eprintclass}
     {}
     {\addspace\mkbibbrackets{\thefield{eprintclass}}}}}
    {\nolinkurl{#1}
     \iffieldundef{eprintclass}
       {}
       {\addspace\mkbibbrackets{\thefield{eprintclass}}}}}
\makeatother

\usepackage{wrapfig}

\usepackage[margin=25pt,font=small,labelfont=bf]{caption}

\usepackage{subcaption}

\usepackage{graphicx}

\graphicspath{{.}{graphics/}}
\makeatletter
\newcommand\appendtographicspath[1]{%
  \g@addto@macro\Ginput@path{#1}%
}
\makeatother

\usepackage{tikz}
\usetikzlibrary{shapes.geometric}
\usetikzlibrary{calc}

\usepackage[vlined,ruled,algo2e]{algorithm2e}
\SetAlCapHSkip{0.2em}
\SetAlgoInsideSkip{smallskip}
\AtBeginDocument{%
  \SetCustomAlgoRuledWidth{0.95\textwidth}%
  \setlength{\algowidth}{0.975\textwidth}%
}
\makeatletter
\patchcmd{\@algocf@start}
  {\begin{lrbox}{\algocf@algobox}}
  {%
   \hspace*{0.025\textwidth}%
   \begin{lrbox}{\algocf@algobox}%
   \begin{minipage}{0.95\textwidth}%
  }
  {}{}
\patchcmd{\@algocf@finish}
  {\end{lrbox}}
  {\end{minipage}\end{lrbox}}
  {}{}
\makeatother

\usepackage{xfrac}

\let\oldtop\top
\renewcommand{\top}{{\scriptscriptstyle{\oldtop}}}

\makeatletter
\def\@fnsymbol#1{\ensuremath{\ifcase#1\or *\or %
\ddagger\or
    \mathsection\or \mathparagraph\or \|\or **\or \dagger\dagger
    \or \ddagger\ddagger \else\@ctrerr\fi}}
\makeatother

\makeatletter
\let\save@mathaccent\mathaccent
\newcommand*\if@single[3]{%
  \setbox0\hbox{${\mathaccent"0362{#1}}^H$}%
  \setbox2\hbox{${\mathaccent"0362{\kern0pt#1}}^H$}%
  \ifdim\ht0=\ht2 #3\else #2\fi
}
\newcommand*\rel@kern[1]{\kern#1\dimexpr\macc@kerna}
\newcommand*\widebar[1]{\@ifnextchar^{{\wide@bar{#1}{0}}}{\wide@bar{#1}{1}}}
\newcommand*\wide@bar[2]{%
  \if@single{#1}{\wide@bar@{#1}{#2}{1}}{\wide@bar@{#1}{#2}{2}}%
}
\newcommand*\wide@bar@[3]{%
  \begingroup
  \def\mathaccent##1##2{%
    \let\mathaccent\save@mathaccent
    \if#32 \let\macc@nucleus\first@char \fi
    \setbox\z@\hbox{$\macc@style{\macc@nucleus}_{}$}%
    \setbox\tw@\hbox{$\macc@style{\macc@nucleus}{}_{}$}%
    \dimen@\wd\tw@
    \advance\dimen@-\wd\z@
    \divide\dimen@ 3
    \@tempdima\wd\tw@
    \advance\@tempdima-\scriptspace
    \divide\@tempdima 10
    \advance\dimen@-\@tempdima
    \ifdim\dimen@>\z@ \dimen@0pt\fi
    \rel@kern{0.6}\kern-\dimen@
    \if#31
    \overline{\rel@kern{-0.6}\kern\dimen@\macc@nucleus\rel@kern{0.4}\kern\dimen@}%
    \advance\dimen@0.4\dimexpr\macc@kerna
    \let\final@kern#2%
    \ifdim\dimen@<\z@ \let\final@kern1\fi
    \if\final@kern1 \kern-\dimen@\fi
    \else
    \overline{\rel@kern{-0.6}\kern\dimen@#1}%
    \fi
  }%
  \macc@depth\@ne
  \let\math@bgroup\@empty \let\math@egroup\macc@set@skewchar
  \mathsurround\z@ \frozen@everymath{\mathgroup\macc@group\relax}%
  \macc@set@skewchar\relax
  \let\mathaccentV\macc@nested@a
  \if#31
  \macc@nested@a\relax111{#1}%
  \else
  \def\gobble@till@marker##1\endmarker{}%
  \futurelet\first@char\gobble@till@marker#1\endmarker
  \ifcat\noexpand\first@char A\else
  \def\first@char{}%
  \fi
  \macc@nested@a\relax111{\first@char}%
  \fi
  \endgroup
}
\makeatother
\makeatletter
\NewCommandCopy\@@pmod\pmod
\DeclareRobustCommand{\pmod}{\@ifstar\@pmods\@@pmod}
\def\@pmods#1{\mkern4mu({\operator@font mod}\mkern 6mu#1)}
\makeatother
\tikzset{highlight/.style={line width=4pt, gray!60}}
\tikzset{node_style/.style={circle, thick, draw, fill=white, minimum size=6pt, inner sep=0pt}}
\tikzset{standard_line/.style={line width=1pt}}
 
\title{On the Complexity of the Odd-Red Bipartite Perfect Matching Polytope%
  \texorpdfstring{%
    \thanks{%
        This work was partially supported by the Swiss National Science
        Foundation (grant no.\ P500PT\_206742) and the Deutsche
        Forschungsgemeinschaft (DFG, German Research Foundation) under
        Germany's Excellence Strategy~--~EXZ-2047/1~--~390685813.
    }%
  }{}
}

\author{%
  Martin Nägele%
  \footremember{ETH}{%
    Department of Mathematics, ETH Zurich, Switzerland.
    Email:
    ${\{\text{\href{mailto:martinn@ethz.ch}{martinn},
          \href{mailto:cnoebel@ethz.ch}{cnoebel},
    \href{mailto:ricoz@ethz.ch}{ricoz}}\}}$@ethz.ch.
    Parts of this work were done while M.~Nägele was employed at
    University of Bonn.%
  }
  \and
  Christian Nöbel%
  \footrecall{ETH}
  \and
  Rico Zenklusen%
  \footrecall{ETH}
}%
\date{}

\begin{document}
  \renewcommand{\coloneqq}{:=}
  \renewcommand{\eqqcolon}{=:}

\maketitle

\begin{abstract}
The odd-red bipartite perfect matching problem asks to find a perfect matching containing an odd number of red edges in a given red-blue edge-colored bipartite graph.
While this problem lies in $\P$, its polyhedral structure remains elusive---despite renewed attention to achieving better polyhedral understanding, nurtured by recent advances from two complementary angles.
Apart from being a special case of bimodular integer programs, whose polyhedral structure is also badly understood, it is related to one of the most notorious open derandomization questions in theoretical computer science: whether there is a deterministic efficient algorithm for the exact bipartite perfect matching problem, which asks to find a perfect matching with exactly $k$ red edges.

Recent progress towards deterministic algorithms for this problem crucially relies on a good polyhedral understanding.
Motivated by this, Jia, Svensson, and Yuan show that the extension complexity of the exact bipartite perfect matching polytope is exponential in general.
Interestingly, their result is true even for the easier odd-red bipartite perfect matching problem.
For this problem, they introduce an exponential-size relaxation and leave open whether it is an exact description.

Apart from showing that this description is not exact and even hard to separate over, we show, more importantly, that the red-odd bipartite perfect matching polytope exhibits complex facet structure:
any exact description needs constraints with large and diverse coefficients.
This rules out classical relaxations based on constraints with all coefficients in~$\{0,\pm1\}$, such as the above-mentioned one, and suggests that significant deviations from prior approaches may be needed to obtain an exact description.
More generally, we obtain that also polytopes corresponding to bimodular integer programs have complex facet structure.
 \end{abstract}

  \begin{tikzpicture}[overlay, remember picture, shift = {(current
    page.south east)}]
    \node[anchor=south east, outer sep=5mm] {
      \begin{tikzpicture}[outer sep=0] %
        \node (dfg) {\includegraphics[height=12mm]{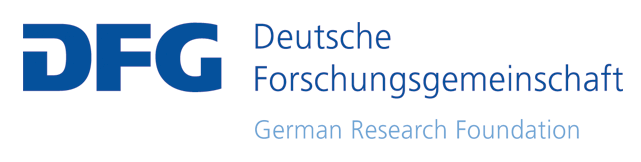}};
        \node[left=5mm of dfg, yshift=1mm] (snf)
        {\includegraphics[height=8mm]{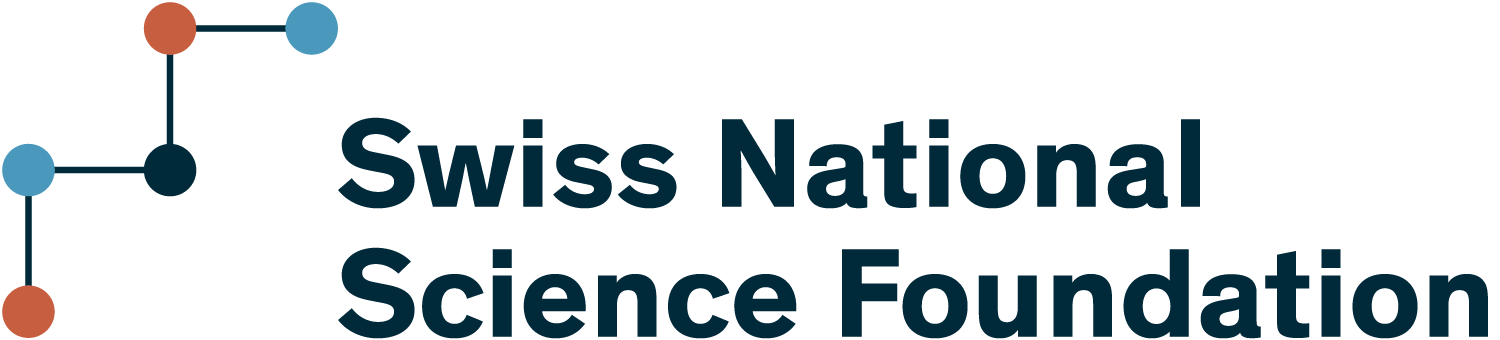}};
      \end{tikzpicture}
    };
  \end{tikzpicture}

  \thispagestyle{empty}
\addtocounter{page}{-1}
\newpage

\section{Introduction}

Parity-constrained variations of classical combinatorial optimization problems have been studied in the literature for a long time.
The arguably most prominent efficiently solvable example is the minimum odd cut problem, which was first discussed by \textcite{padberg_1982_odd}: Here, the goal is to find a cut $C\subseteq V$ minimizing $|\delta(C)|$, i.e., the number of edges crossing the cut, under the extra constraint that $|C|$ is odd.%
\footnote{
    Given a graph $G=(V,E)$, we denote for $C\subseteq V$ by $\delta(C)$ the set of edges $e\in E$ with exactly one endpoint in $C$. For $v\in V$, we use the shorthand notation $\delta(v) \coloneqq \delta(\{v\})$.
}
We consider another efficiently solvable%
\footnote{%
    The bipartite odd-red perfect matching problem can be solved in polynomial time by starting with an arbitraty perfect matching and, if necessary, swapping the role of matching and non-matching edges on an alternating cycle with an odd number of red edges.
    This has, e.g., been observerd in \cite{artmann_2017_bimodular}.
}
parity-constrained problem, namely odd-red perfect matching in bipartite graphs, defined as follows.

\begin{mdframed}[userdefinedwidth=0.95\linewidth]
\linkdest{prb:BORPM}{\textbf{Bipartite Odd-Red Perfect Matching:}}
    Given a bipartite graph $G=(V,E)$, a subset $R\subseteq E$ of whose edges are red, the task is to decide whether there exists a perfect matching $M\subseteq E$ in $G$ with $|M\cap R|$ odd.
\end{mdframed}
This problem has gained increased attention recently due to its connections to both Integer Programming with bounded subdeterminants and derandomization questions in Theoretical Computer Science, on which we elaborate next.

Recently, parity-constrained problems have come up in an interesting connection to a well-known conjecture in Integer Programming, claiming efficient solvability of integer programs when the constraint matrix fulfills certain subdeterminant bounds.
More precisely, an IP $\min\{ c^\top x\colon Ax\leq b, x\in\mathbb{Z}^n_{\geq 0} \}$ is called \emph{$\Delta$-modular} (for some $\Delta\in \mathbb{Z}_{\geq 1}$) if the constraint matrix $A$ has full column rank and all its $n\times n$ subdeterminants are bounded in absolute value by $\Delta$.
It is conjectured that $\Delta$-modular IPs can be solved in polynomial time for every fixed~$\Delta$.%
\footnote{When talking about polynomial time procedures, we use the standard convention that these procedures are deterministic. When dealing with randomized procedures, we explicitly mention this.}
One can see that this conjecture implies another well-known variant of it, where instead of requiring $A$ to be $\Delta$-modular, one requires that all subdeterminants of $A$ (not only the $n\times n$ ones) are bounded in absolute value by~$\Delta$ (these are so-called \emph{totally $\Delta$-modular matrices}).%
\footnote{Hence, this version does not need the constraint matrix $A$ to have full column rank.
Concretely, a totally $\Delta$-modular problem can be interpreded as a $\Delta$-modular problem with full-rank constraint matrix by enlarging the inequality system $Ax\leq b$ through the addition of non-negativity constraints $x\geq 0$.
If $A$ is totally $\Delta$-modular, this modification leads to another totally $\Delta$-modular constraint matrix~$\begin{psmallmatrix}A \\ -I\end{psmallmatrix}$ with then full column rank, which is therefore also a $\Delta$-modular constraint matrix.%
}
\textcite{artmann_2017_bimodular} proved the general version of the conjecture to be true for $\Delta=2$, the so-called \emph{bimodular} case.
Their approach prominently features parity constraints: They show how to reduce bimodular IPs to IPs with totally unimodular constraint matrices and a single parity constraint, and then show that such problems can be solved efficiently.
Efforts towards generalizing the result of \citeauthor{artmann_2017_bimodular} have triggered interest in extending simple parity constraints to more general congruency constraints.
Some interesting questions in this context are, for example, asking for a minimum cut of size congruent to $k$ modulo $m$, for some given $k,m\in\mathbb{Z}$ \cite{naegele_2019_submodular,naegele_2020_newContraction}, or attempts towards solving totally unimodular IPs with general congruency constraints with modulus $m>2$, which can be seen as special cases of $m$-modular IPs \cite{naegele_2023_congruency,naegele_2023_advances}.
Additionally, congruency constraints have recently been studied in matroid base problems, addressing the question of whether, in a matroid whose elements are labelled by integers, there is a basis whose labels sum up to $r$ modulo $m$, for some given integers $r$ and $m$ \cite{liu_2024_congruency,hoersch_2024_group}.
Intriguingly, a common theme in all these extensions is that they face barriers when it comes to general moduli $m$; many approaches are limited to constant prime power moduli, or need to allow for other compromises such as obtaining approximate solutions only or having to involve randomization in the algorithms.
The unsatisfactory situation might be explained by a general lack of techniques to approach congruency-constrained problems.
Present work typically follows either a partial enumeration approach, or relies on strong proximity results between solutions of the original and the congruency-constrained problem, but these techniques seem to come with inherent limitations.
An alternative approach would be to develop a better polyhedral understanding of congruency-constrained problems, which is currently lacking.

\medskip

The limited polyhedral understanding of Odd-Red Bipartite Perfect Matchings has recently been highlighted by \textcite{jia_2023_exactMatching} in a completely different context.
They were interested in one of the most notorious open derandomization questions in Theoretical Computer Science, namely whether there is a deterministic efficient algorithm for the Bipartite Exact Perfect Matching problem, often also referred to as the Red-Blue Matching Problem.
The setting is identical to Bipartite Odd-Red Perfect Matching, except that the goal is to find a perfect matching with exactly some given number $k$ of red edges.
This problem can be solved by a celebrated randomized algorithm of \textcite{mulmuley_1987_matching} in polynomial time, but it is a major open question whether there is a deterministic polynomial-time algorithm for it.
There has been no notable progress towards derandomizing the above-mentioned approach until a relatively recent breakthrough of \textcite{fennerBipartitePerfectMatching2016} almost completely derandomized the approach of \citeauthor{mulmuley_1987_matching} for the bipartite perfect matching problem (without colors), which had interesting implications in the field of parallel algorithms.\footnote{We refer to an almost complete derandomization because the resulting deterministic parallel algorithm needs quasi-polynomially many processors.}
Building up on this, derandomizations of the technique of \citeauthor{mulmuley_1987_matching} have been developed for the perfect matching polytope for general graphs~\cite{svenssonMatchingProblemGeneral2017}, matroid intersection~\cite{gurjarLinearMatroidIntersection2020}, and totally unimodular polytopes~\cite{gurjarIsolatingVertexLattices2021}.
What is common in all of these advances is that they rely on a good polyhedral understanding of the underlying problem.
However, this is completely lacking for the Bipartite Exact Perfect Matching problem.
As a natural first step in this direction, \textcite{jia_2023_exactMatching} studied the polyhedral structure of the Bipartite Odd-Red Perfect Matching problem.
One reason why this is a natural first step is that a description of the Bipartite Exact Perfect Matching polytope could be used to obtain a description (in extended space) of the Bipartite Odd-Red Perfect Matching polytope.
This can be achieved by observing that the latter polytope is the convex hull of the union of slices of the former polytope, namely one slice per each odd number of red edges, and by using Balas' disjunctive programming approach (see~\cite{balasDisjunctiveProgramming2018}).

\citeauthor{jia_2023_exactMatching} showed that the Bipartite Exact Perfect Matching polytope has exponential extension complexity.
However, there remains hope that a good (exponential) description of it may still be helpful for derandomization purposes, the same way as it was helpful for derandomizing the approach of  \citeauthor{mulmuley_1987_matching} for perfect matchings in general graphs~\cite{svenssonMatchingProblemGeneral2017}, which is also a problem in \P\ with exponential extension complexity~\cite{rothvoss2017matching}.
To this end, \citeauthor{jia_2023_exactMatching} introduced a candidate description for the Bipartite Odd-Red Perfect Matching polytope, and showed that it has the desirable property that it is empty if and only if there is no odd-red perfect matching in the underlying graph.
Before we state our results, we briefly review this candidate description.

\subsection[The candidate description of Jia, Svensson, and Yuan]{The candidate description of \citeauthor{jia_2023_exactMatching}}
\label{sec:candidate-descripton}

Let $G=(V,E)$ be a bipartite graph with $|V|=2n$ vertices for some $n\in\mathbb{Z}_{\geq 1}$, with $n$ vertices on each side of the bipartition, and let $R\subseteq E$ be the set of red edges.
We denote by $P_{(G,R)}$ the \emph{bipartite odd-red perfect matching polytope}, i.e., 
\begin{equation*}
    P_{(G,R)} \coloneqq \conv(\{\chi^M\colon M\text{ is a perfect matching in $G$ with $|M\cap R|$ odd}\})\enspace,
\end{equation*}
where $\chi^M\in\{0,1\}^E$ is the incidence (or characteristic) vector of $M$.

\textcite{jia_2023_exactMatching} defined a family of valid constraints for $P_{(G,R)}$ based on the following.
Consider a set of $\{0, 1\}$ labelings of $V$ given by
$$
\Lall(G) \coloneqq \{L \colon V \rightarrow \{0,1\} \text{
with } |L^{-1}(1)| \equiv n \pmod*{2}\}\enspace,
$$
and for every $L\in \Lall(G)$, define
$$
  E_L = \{e\in E\setminus R \colon L(u) = L(v)\} \cup \{e \in
  R \colon L(u) \neq L(v)\}\enspace.
$$
One can observe that for each $L\in \Lall(G)$, every odd-red perfect matching must use at least one edge in $E_L$.
In other words, constraints of the form $x(E_L) \geq 1$ are valid, i.e., they will be satisfied by all incidence vectors of odd-red perfect matchings.
We refer to these constraints as \emph{label constraints}.
They lead to the following relaxation of $P_{(G,R)}$:
\begin{align}\label{eq:label-constraint-relaxation}\tag{label constraint relaxation}
    Q_{(G,R)} \coloneqq \left\{
        x \in \mathbb{R}_{\geq 0}^E \colon
        \begin{array}{rl}
          x(\delta(u)) = 1 & \ \forall u\in V\\
          x(E_L)  \geq 1 & \ \forall L \in \Lall(G)
        \end{array}
      \right\}\enspace.
\end{align}
As mentioned, it was left open by \citeauthor{jia_2023_exactMatching} whether the label constraint relaxation $Q_{(G,R)}$ is a correct description of $P_{(G,R)}$, i.e., whether $P_{(G,R)}=Q_{(G,R)}$.
Intriguingly, they showed that their relaxation has the remarkable property that $Q_{(G,R)} = \emptyset$ if and only if $P_{(G,R)} = \emptyset$.

\subsection{Our results}

In this work, we refute in a strong way the possibility that the label constraint relaxation could capture $P_{(G,R)}$ by showing that any inequality description of $P_{(G,R)}$ must be surprisingly complex.
More precisely, any description must deviate significantly not only from the proposed label constraint relaxation $Q_{(G,R)}$, but also from many other classical constraint families for combinatorial optimization problems.
Concretely, our main result exhibits an exponentially sized family of facets of $P_{(G,R)}$ that require large and diverse coefficients in any description through a linear constraint.
Note that because $P_{(G,R)}$ is a subset of the bipartite perfect matching polytope, it is not full-dimensional.
Thus, a facet of $P_{(G,R)}$ may be described by linear inequalities with different normal vectors that are not scaled versions of each other.
\cref{thm:diverse_and_large_coeff} shows that any such inequality has complicated structure.

\begin{theorem}\label{thm:diverse_and_large_coeff}
    For every even $n\in \mathbb{Z}_{>0}$, there exists a bipartite graph $G = (V,E)$ on $n$ vertices, red edges $R\subseteq E$, and an exponentially (in $n$) sized family $\mathcal{F}$ of facets of the polytope $P_{(G,R)}$ with the following property:
    For every $F\in\mathcal{F}$ and all $a\in \mathbb{Z}^E$, $b\in \mathbb{Z}$ such that $F = \{x\in P_{(G,R)}\colon a^\top x=b\}$, we have
    \[
    \max_{e\in E} |a_e| \geq \frac{n-4}{2}
    \qquad\text{and}\qquad
    \lvert\{a_e\colon e\in E\}\rvert\geq \sqrt{\frac{n-1}{2}}\enspace.
    \]
\end{theorem}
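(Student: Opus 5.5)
The plan is to build $G$ from a long chain of small gadgets (4-cycles glued along a path, i.e.\ a ``ladder''-type bipartite graph). Perfect matchings of such a graph are encoded by $0/1$ choice vectors, and the red edges are placed so that the parity of $|M\cap R|$ becomes a prescribed parity function of these choices. The facets in $\mathcal{F}$ will be indexed by an exponential family of combinatorial parameters, such as a subset of gadgets or a position along the chain. For each parameter I will write down one explicit inequality $a^\top x\ge b$ with coefficients of order up to $n/2$.

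\textbf{Step 1: validity and facet-defining property.} Validity is checked by evaluating $a^\top\chi^M$ for every odd-red perfect matching $M$, which reduces to a one-dimensional counting argument along the chain. To show the inequality defines a facet, I first compute $\dim P_{(G,R)}$. I expect it to equal the dimension of the bipartite perfect matching polytope of $G$, namely $|E|-|V|+1$ for a connected elementary bipartite graph; this holds if odd-red matchings still span its affine hull. I then exhibit $\dim P_{(G,R)}$ affinely independent odd-red perfect matchings that are tight for the inequality. These are found by local modifications at each gadget and form an essentially triangular family.

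\textbf{Step 2: extracting invariants that do not depend on the chosen representation.} If $F=\{x\in P_{(G,R)}: a^\top x=b\}$, then $a$ is determined only up to a positive multiple plus an element of the row space of the equations defining the affine hull. These are the degree equations $x(\delta(u))=1$, together with $x_e=0$ for edges lying in no odd-red perfect matching. For every even cycle $C=(e_1,\dots,e_{2k})$ of $G$, the alternating sum $\sigma_C(a)=\sum_i (-1)^i a_{e_i}$ is unchanged by adding degree-equation multiples. So for any valid integral representation, $\sigma_C(a)=\lambda\,\sigma_C(a^*)$ for some $\lambda>0$, where $a^*$ is my explicit normal vector. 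Because $a$ and $b$ are integral, and because some gadget 4-cycle $C_0$ will have $\sigma_{C_0}(a^*)=\pm1$, we get $\lambda\in\mathbb{Z}_{\ge1}$. The construction is arranged so that $a^*$ has alternating sums along the gadget 4-cycles taking many distinct values, the largest of order $n-4$.

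\textbf{Step 3: both bounds.} For the first bound, $\sigma_C(a)$ for a 4-cycle is a signed sum of four coefficients. Hence $4\max_e|a_e|\ge|\sigma_C(a)|\ge\lambda\,|\sigma_C(a^*)|$. To obtain the sharp constant $\frac{n-4}{2}$, I will instead choose $C$ to be a suitable long cycle whose invariant is about $(n-4)$ times a length-normalised quantity, and divide by its length accordingly. For the second bound, suppose $a$ takes only $s$ distinct values. The quantity $a_{e_1}-a_{e_2}$ (or the 4-cycle invariant) then takes at most $s^2$ distinct values. If the construction forces at least $\frac{n-1}{2}$ distinct nonzero values among these invariants, then $s^2\ge\frac{n-1}{2}$, which gives $s\ge\sqrt{(n-1)/2}$. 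The construction must therefore force roughly $n/2$ distinct invariant values, for example $1,2,\dots,\frac{n-4}{2}$ appearing consecutively along the chain.

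\textbf{Main obstacle.} The hard part is Step 1 for the specific large-coefficient inequality: designing the red set and the coefficients so that the inequality is genuinely tight on a full-dimensional affinely independent set of odd-red matchings. The inequality must look like a ``weighted parity'' cut. My first attempt is to copy the structure of known facets of parity-knapsack-type polytopes, in which coefficients grow linearly along a chain. I would verify tightness by an explicit inductive family of matchings that switches one gadget at a time. The exponential size of $\mathcal{F}$ then comes from the freedom to choose independently, in each gadget, which of its two matchings plays the ``red'' role, since such choices preserve the facet argument.
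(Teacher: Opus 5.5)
There is a genuine gap: the proposal never constructs the objects whose existence the theorem asserts. The graph, the red set, the inequality, its validity and its facetness are all deferred to your ``main obstacle'', backed only by the hope that a ladder of $4$-cycles with parity-knapsack-like coefficients works. Nothing supports this hope, and the naive version provably fails. If the gadgets are independent, each $4$-cycle contributes a free bit $z_i$. Then $P_{(G,R)}$ is an affine image of (parity polytope)$\times$(cube), whose facets all admit representations with coefficients in $\{0,\pm1\}$. Any complexity would therefore have to come from the gluing, which you leave unspecified. Your source of exponentially many facets also does not work as written: choosing per gadget which matching ``plays the red role'' changes $R$, and hence the polytope, whereas the theorem needs many facets of one fixed $P_{(G,R)}$.

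The missing idea is the paper's reduction to the dominant of the odd cycle polytope. Double every vertex of $K_n$ ($n=2k+3$) into $v^+,v^-$ and colour all edges $\{u^+,v^-\}$ with $u\neq v$ red. Then odd cycles yield odd-red perfect matchings, and every odd-red perfect matching projects to a cycle union containing an odd cycle. For an $(n-2)$-cycle $C$ with $V\setminus V(C)=\{s,t\}$, the $C$-induced inequality has coefficient $\ell(e)$ (the length of the odd $C$-path between the endpoints) on edges inside $V(C)$, $k$ on $\delta(C)$, $1$ on $\{s,t\}$, and right-hand side $2k+1$. It is valid by a parity argument and facet-defining via explicit tight triangles and cycles $C_{i,j}$. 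Its canonical transformation stays facet-defining because it is $(n-1)$-expressing with $Q=\delta(s)$, and distinct $C$ give the exponential family. Note that the paper's bounds are in terms of the number $n$ of vertices of $K_n$; the bipartite graph has $2n$ vertices.

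Your Step~2 is exactly the paper's tool: alternating sums along closed sequences cancel the row and column shifts and scale with the normal vector. The $4$-cycle version of your first bound is also what the paper does. The invariants $2$ (from $(s,t)$) and $2k+1$ (from $(v_1,v_2,v_3)$) force the scaling factor to be a nonzero integer. The $4$-cycle on $v_1^\pm,v_3^\pm$ has invariant $4k-2$, giving $\max|a_e|\ge(4k-2)/4=(n-4)/2$.

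Your count for the second bound, however, is wrong as stated. A $4$-cycle invariant is a signed sum of four entries, so with $s$ distinct entries it can take on the order of $s^4$ values, which yields only $\Omega(n^{1/4})$. A plain difference $a_{e_1}-a_{e_2}$, on the other hand, is not invariant. The correct $s^2$ argument fixes two vertices $u,w$ on one side and considers $d(v)=a_{uv}-a_{wv}$ over the other side. The admissible transformations only scale $d$ and shift it by the constant $\lambda_u-\lambda_w$, so the number of distinct values of $d$ is invariant, and it is at most $s^2$. For rows $v_1^+$ and $s^+$, $d$ takes $k+1$ distinct values on columns $v_1^-,\dots,v_{k+1}^-$, so $s\ge\sqrt{k+1}=\sqrt{(n-1)/2}$. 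This is the paper's two-row pigeonhole argument in another guise.
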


In particular, \cref{thm:diverse_and_large_coeff} excludes any inequality description of $P_{(G,R)}$ consisting of constraints with all coefficients in $\{0,\pm1\}$, as for example the candidate description by \citeauthor{jia_2023_exactMatching}.
More importantly, though, it underlines that any exact description must involve constraints distinctly different from those typically appearing in most other polytopes associated with combinatorial optimization problems.
This is despite the fact that $P_{(G,R)}$ is well-behaved from other points of view, in particular that we can optimize over it in polynomial time~\cite{artmann_2017_bimodular}.
We obtain \Cref{thm:diverse_and_large_coeff} by identifying and exploiting a connection to the dominant of the odd cycle polytope, for which we show a similar (actually even slightly stronger) facet-complexity result (see \Cref{thm:diverse_and_large_coeff_odd_cycle}).
This puts both the odd-red perfect matching problem and the minimum odd cycle problem in a rare class of natural and efficiently solvable combinatorial problems (not involving any numbers as part of their input) whose associated polytopes have highly complex inequality descriptions.
There are other notable examples of problems with that property.
This includes, in paticular, the stable set problem in claw-free graphs~\cite{pecherFacetsStableSet2007}.
Another example, this time of an unbounded polyhedron instead of a polytope, is the dominant of the cut polytope, even for restricted graph classes.%
\footnote{%
One way this is implied by prior results is as follows: The blocker of the dominant $\domCut$ of the cut polytope is the polyhedron $P = \{x\in \mathbb{R}_{\geq 0}^E\colon x(\delta(C))\geq 2 \ \forall C\subsetneq V, C\neq\emptyset\}$, which is strongly related to the subtour elimination polytope.
In particular, any vertex of $P$ is also a vertex of the subtour elimination polytope.
The subtour elimination polytope, and therefore also $P$, can have vertices with many different coefficients (this follows, for example, from \cite{boyd_thesis}).
Finally, it remains to observe that the blocker of $P$ is again $\domCut$, and each vertex of $P$ appears as the coefficient vector of a facet-defining inequality of $\domCut$.
}%

In order to identify complex facets as in \cref{thm:diverse_and_large_coeff}, it is enough to consider the bipartite graph $G_0$ with two parallel edges connecting every two vertices on different sides of the bipartition, and red edges $R_0$ containing precisely one of the two parallel edges between each such pair of vertices.
Indeed, every other bipartite graph $G$ on the same number of vertices with red edges $R$ is a subgraph of $(G_0,R_0)$, and therefore $P_{(G,R)}$ is a face of $P_{(G_0,R_0)}$ that is obtained by setting to zero all variables corresponding to edges in $E(G_0)\setminus E(G)$.
Because a description of $P_{(G,R)}$ can be obtained from that of $P_{(G_0,R_0)}$ by simply removing all variables corresponding to edges in $E(G_0)\setminus E(G)$, the complexity of a description of $P_{(G,R)}$ in the sense of \cref{thm:diverse_and_large_coeff} carries over to $P_{(G_0,R_0)}$.
Nonetheless, we stick to considering concrete graphs $(G,R)$ for the simplicity of the presentation.

\medskip

We show that we can carry over our facet complexity result to bimodular integer programs by exploiting a natural bimodular integer programming formulation for bipartite red-odd perfect matchings.
In this transfer, one has to take care of the fact that the natural bimodular formulation lives in extended space, whereas the statement we want to obtain is about the original space.
\begin{theorem}\label{thm:bimodular_intro}
    For every $m\in \mathbb{Z}_{>0}$, there is an $m$-variable bimodular integer program whose associated polytope has facets that, in any inequality description, require $\Omega(m^{1/4})$ different coefficients.
\end{theorem}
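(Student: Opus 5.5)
The plan is to apply \cref{thm:diverse_and_large_coeff} to a natural bimodular integer programming formulation of bipartite odd-red perfect matching. Fix an even $n$ and let $(G,R)$ be the instance from \cref{thm:diverse_and_large_coeff}. If needed, replace it by the supergraph $(G_0,R_0)$ from the discussion after that theorem, which keeps at most $O(n^2)$ edges. The formulation uses variables $x\in\mathbb{Z}^E_{\geq 0}$ and one extra variable $z\in\mathbb{Z}_{\geq 0}$, with constraints
\[
x(\delta(u))=1\ \ \forall u\in V,\qquad x(R)-2z=1 .
\]
Each equation is written as two inequalities. The integer points are exactly the pairs $(\chi^M,(|M\cap R|-1)/2)$ with $M$ an odd-red perfect matching. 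The number of variables is $m=|E|+1=O(n^2)$. For general $m$ I pick the largest admissible $n$ and pad with dummy variables fixed to $0$ by $0\leq y\leq 0$. These rows only append unit rows, so they neither change the subdeterminant bound nor the facet structure.

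\textbf{Step 1 (bimodularity).} I must show that the constraint matrix, stacked with $-I$, has full column rank and all relevant subdeterminants bounded by $2$. Let $B$ be the bipartite incidence matrix, which is TU. Laplace expansion along the $z$-column gives two cases. If the expansion uses the entry $-2$, the minor is $\pm 2$ times a minor of $B$ (augmented by unit rows), so its absolute value is at most $2$. Otherwise the minor is a minor of $\begin{psmallmatrix}B\\ \chi^R\end{psmallmatrix}$ augmented by unit rows. This second case is the step I expect to be the main obstacle. A bipartite incidence matrix with one extra $0/1$ row need not be TU, so bounding its minors by $2$ requires an argument specific to the structure of $B$ and the chosen $R$.
\begin{itemize}
\item My first attempt is to exploit that the full $n\times n$ minors only matter once the rank is fixed, together with a unimodular change of variables.
\item If that fails, I would re-encode parity in the Artmann et al.\ style: a TU system plus a single parity row, with the red-edge count routed through auxiliary flow variables. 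This still yields an $O(n^2)$-variable program whose $x$-projection is the graph of an affine map as in Step 2.
\end{itemize}

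\textbf{Step 2 (transfer of facets).} The polytope $P'=\conv$ of feasible integer points is the image of $P_{(G,R)}$ under the injective affine map $x\mapsto (x,(x(R)-1)/2)$. Facets of $P'$ are therefore exactly the images of facets $F\in\mathcal{F}$. Take any inequality $a^\top x+cz\leq b$ defining such a facet, and scale it to be integral; scaling changes neither the number of distinct coefficients nor the conclusion. Substituting $z=(x(R)-1)/2$ and multiplying by $2$ gives the integral inequality
\[
(2a+c\chi^R)^\top x\leq 2b+c ,
\]
which defines $F$ on $P_{(G,R)}$. Each coefficient $2a_e+c[e\in R]$ is determined by $a_e$ together with whether $e$ is red. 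Hence the number of distinct values of $2a+c\chi^R$ is at most $2\,|\{a_e\}|$.

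\textbf{Step 3 (counting).} By \cref{thm:diverse_and_large_coeff}, the vector $2a+c\chi^R$ has at least $\sqrt{(n-1)/2}$ distinct entries. It follows that
\[
|\{a_e\}\cup\{c\}|\;\geq\; \tfrac12\sqrt{(n-1)/2}\;=\;\Omega(\sqrt n)\;=\;\Omega(m^{1/4}),
\]
using $m=O(n^2)$. This gives the claimed bound for every facet coming from the exponentially large family $\mathcal{F}$.
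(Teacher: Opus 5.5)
Your Steps 2 and 3 match the paper: the facet correspondence of \cref{obs:facet_translation}, and the observation that passing from $(a,c)$ to $2a+c\chi^R$ at most doubles the number of distinct coefficients, which yields $\sqrt{(n-1)/8}$. The gap is Step 1. You leave it open, and as you set it up it cannot be closed, because your second case (which comes only from the row encoding $z\ge 0$) really does produce large minors. In $\widehat G$, take the hexagon $H=(v_1^+,v_1^-,v_2^+,v_2^-,v_3^+,v_3^-)$, whose edges alternate between non-red edges $\{v_i^+,v_i^-\}$ and red edges. Select the rows of five of its vertices, the row $(\chi^{\widehat R},-2)$, the row of $z\ge 0$, and the unit rows of all edges outside $H$. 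This full-size minor equals $\pm(\chi^{\widehat R})^\top g$, where $g$ is the alternating $\pm1$ kernel vector on $H$, so it equals $\pm 3$; alternating cycles of length $2p$ give $\pm p$. A unimodular change of variables multiplies every full-size minor by $\pm1$, so your first fallback cannot repair this. Your second fallback is unspecified, and adding auxiliary variables would also break the ``graph of an affine map'' structure that your Step 2 relies on.

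The missing idea is to drop the sign constraint on $z$. For integral points it is implied, since $2z+1=x(R)\ge 0$ forces $z\ge 0$, so the integer hull is unchanged. Then the only rows with a nonzero entry in the $z$-column are $\pm(\chi^R,-2)$. A nonzero full-size minor must contain exactly one of them: with neither, the $z$-column is zero; with both, two rows are negatives of each other. Expanding along the $z$-column then gives $\pm 2$ times a minor of the totally unimodular matrix $\begin{psmallmatrix}B\\-B\\-I\end{psmallmatrix}$. Full column rank follows from the rows of $-I$ together with the entry $-2$. This is exactly the content of the paper's one-line assertion that all full-rank subdeterminants lie in $\{-2,0,2\}$, i.e., your Case~1. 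Note that the system displayed in the paper also lists $y\ge 0$, and that row likewise has to be omitted for the assertion to hold literally. With this fix, your Steps 2 and 3 go through unchanged.
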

Previously, it was known that bimodular programs capture problems with exponential extension complexity (see, for example,~\cite{cevallosLiftingLinearExtension2018,jia_2023_exactMatching}).
However, prior to our work, the potential existence of clean inequality descriptions with small coefficients was not ruled out.

\medskip

As a side result, we show that the candidate description $Q_{(G,R)}$ of \citeauthor{jia_2023_exactMatching} does not admit efficient separation (unless $\P = \NP$).

\begin{theorem}\label{thm:separation}
    Let $G=(V,E)$ be a bipartite graph with red edges $R\subseteq E$.
    Given a point $x\in\mathbb{R}_{\geq 0}^E$, it is \NP-complete to decide whether $x\notin Q_{(G,R)}$.
    The problem remains \NP-complete if $x$ is known to be in the perfect matching polytope of $G$.
\end{theorem}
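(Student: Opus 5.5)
Membership in \NP\ is immediate: a certificate that $x\notin Q_{(G,R)}$ is either a violated degree or nonnegativity constraint, which can be checked directly, or a labeling $L\in\Lall(G)$ with $x(E_L)<1$, which can be verified in polynomial time. The substance is hardness, and I will get it from a reduction from \maxcut, in fact from \cubicmaxcut, to stay in a structured graph class.

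The key reformulation is to view a labeling $L$ as a cut $C=L^{-1}(1)$, with the parity condition $|C|\equiv n\pmod*{2}$. For every edge $e=uv$, write $c_e := x_e$ if $e$ is blue and $-x_e$ if $e$ is red. Then
\[
x(E_L)=x(E\setminus R)-\sum_{e\in\delta(C)\setminus R}x_e+\sum_{e\in\delta(C)\cap R}x_e = x(E\setminus R)-c(\delta(C)).
\]
So asking whether some $L$ has $x(E_L)<1$ amounts to asking whether some parity-constrained cut satisfies $c(\delta(C))>x(E\setminus R)-1$. This is a signed max-cut problem. Choosing all edges blue with uniform weights already gives plain max-cut weights. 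The difficulty is that max-cut is hard on general graphs, whereas $G$ must be bipartite, where max-cut without signs is trivial. Red edges fix this: in a bipartite graph, flipping the labels of one side (i.e., replacing $C$ by $C\triangle B$ for a side $B$) turns every edge's cut/uncut status around. Concretely, I take a cubic graph $H$ and form its bipartite subdivision or incidence-type graph $G$, where each edge $uv$ of $H$ becomes a path through new vertices and edges are colored so that, after the side flip, a path is ``cut'' in the max-cut sense exactly when $u,v$ lie on different sides.

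The plan has four steps.

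First, build $G$ from the cubic graph $H$ by subdividing each edge twice, so that it becomes a path of length three. This keeps $G$ bipartite, and I add a matching-friendly structure so that a natural point exists.

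Second, define $x$, taking care that $x$ lies in the perfect matching polytope of $G$. A convenient choice is to make $G$ regular, or to add parallel edges, and put $x$ uniform on the edges that carry signs plus slack edges with weight $0$ coefficient. To do this I use parallel red/blue pairs: an edge pair of a red and a blue copy with equal $x$ contributes $x_e$ to $x(E_L)$ regardless of $L$. This lets me pad degrees to $1$ without affecting the objective.

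Third, verify that internal subdivision vertices can be labeled optimally, so that each $H$-edge path contributes a fixed value minus a gain of one exactly when its endpoints are separated. The parity condition on $|C|$ is then fixed by toggling a free padding vertex.

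Fourth, choose the threshold so that $x(E_L)<1$ holds if and only if $H$ has a cut of size at least $k$. Here I scale $x$ by a factor of $\Theta(1/|E(H)|)$ and put the remaining mass on neutral parallel pairs.

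The main obstacle is step two combined with the threshold. The point must satisfy the degree constraints exactly, be nonnegative, and still make the comparison against the constant $1$ encode the max-cut target $k$. I would first try the neutral red/blue parallel pairs, since they absorb arbitrary mass at each vertex without depending on $L$. This would leave a free affine shift, which I would tune so that $x(E_L)=\alpha-\beta\cdot\mathrm{cut}_H(C)$ with $\alpha-\beta k<1\le\alpha-\beta(k-1)$.
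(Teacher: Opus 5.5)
There is a genuine gap in your Steps 2 and 4: the neutral red/blue parallel pairs are not neutral with respect to the threshold $1$. For every labeling, exactly one of the two copies lies in $E_L$, so a pair with value $y$ on both copies adds $y$ to \emph{every} $x(E_L)$. The degree constraints force $x(E)=n$, the number of vertices per side. So if the non-neutral edges carry total mass $m_g$, the pairs carry $2\sum y = n-m_g$, and $x(E_L)\ge (n-m_g)/2$ for all $L\in\Lall(G)$. With your gadget scaled by $\Theta(1/|E(H)|)$ you have $m_g=O(1)$, hence $x(E_L)\ge n/2-O(1)>1$. No label constraint is ever violated, so large instances all become no-instances. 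In particular, the constant term in your $x(E_L)=\alpha-\beta\cdot\mathrm{cut}_H(C)$ is not a free shift at all.

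More generally, $x(E_L)<1$ means $x(E\setminus E_L)>n-1$. So essentially all the mass must sit on \emph{single} heavy edges that the violating labeling keeps out of $E_L$. You then also need an argument that labelings which do put a heavy edge into $E_L$ are still not violated. Your proposal has no such mechanism, and it treats the parity condition in $\Lall$ only as a nuisance to be fixed with a free vertex.

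That parity condition is exactly where the paper's proof does its work. It works on the double cover $V^+\cup V^-$ of the cubic graph. The heavy edges $\{v^+,v^-\}$ form a red perfect matching with value $1-3\varepsilon\ge\frac12$, and every blue edge $\{v^+,w^-\},\{v^-,w^+\}$ gets $\varepsilon=\frac{1}{2(|E|-k+1)}$. Since $|V|$ is even, the condition $|L^{-1}(1)|\equiv |V| \pmod{2}$ forces an even number of red edges into $E_L$. So any $L$ that uses a red edge has $x(E_L)\ge 2\cdot\frac12=1$. The remaining labelings satisfy $L(v^+)=L(v^-)$ and are exactly cuts $S$, with $x(E_L)=2\varepsilon(|E|-|\delta(S)|)$.

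Separately, your Step 1 is wrong as stated. Subdividing each edge twice turns an $\ell$-cycle into a $3\ell$-cycle, so odd cycles survive and $G$ is not bipartite. One subdivision per edge, with a blue--red path $u$--$m$--$v$, would work as a gadget. However, any subdivision leaves the two sides unbalanced ($|V(H)|$ versus $|E(H)|$), so the degree constraints cannot hold without added vertices. That sends you straight back to the padding problem above.
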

This is in interesting contrast to the fact that the equivalence of optimization and separation implies that polynomial-time separation over $P_{(G,R)}$ is possible, and may arguably be a reason that would make it surprising (though not impossible) to see this constraint family play a major role in an exact description of $P_{(G,R)}$.

\subsection{Organization of this paper}
\cref{sec:approach} is dedicated to giving an overview of our approach and techniques to prove \cref{thm:diverse_and_large_coeff}, with some of the technical proofs deferred to \cref{sec:complex-facets}.
\Cref{thm:bimodular_intro} is shown in \cref{sec:bimodular_polytope}.
Moreover, a proof of \cref{thm:separation} is provided in \cref{sec:separation}, and a concrete counterexample showing $Q_{(G,R)} \neq P_{(G,R)}$ is given in \cref{sec:explicit-example}.

\section{An overview of our approach towards complex facets}
\label{sec:approach}

We now give an overview of our approach for proving \cref{thm:diverse_and_large_coeff}.
Concretely, we will explicitly construct a family of facets of $P_{(G,R)}$ with the properties claimed in the theorem.
To this end, we exploit a link between the odd-red bipartite perfect matching polytope and the odd cycle polytope.
For a (not necessarily bipartite) graph $G=(V,E)$, the \emph{odd cycle polytope} is defined as
\[
	P_{\textup{odd}}(G) = \mathrm{conv}\{\chi^C\colon C\subseteq E \text{ is an odd cycle in G}\}\enspace.
\]

To obtain a correspondence between odd cycles and odd-red matchings, we use the following construction:
Starting from a graph $G=(V,E)$, we consider a bipartite graph $\widehat G = (\widehat V, \widehat E)$, where $\widehat V$ consists of two copies $V^+\coloneqq \{v^+\colon v\in V\}$ and $V^-\coloneqq \{v^-\colon v\in V\}$ of $V$, and $\widehat E$ contains all edges $\{v^+, v^-\}$ for $v\in V$, as well as all edges $\{u^+, v^-\}$ and $\{u^-, v^+\}$ for every edge $\{u,v\}\in E$.
Moreover, we define $\widehat R\coloneqq \widehat E \setminus \{\{v^+, v^-\}\colon v\in V\}$ to be the subset of red edges in $\widehat G$.
See \Cref{fig:cycle-to-matching} for an illustration of this construction.

\begin{figure}[ht]
	\begin{center}
\pgfdeclarelayer{bg}    %
\pgfsetlayers{bg,main}  %

\begin{tikzpicture}[scale=0.4]

	\coordinate (3) at (6.60,3.0) {};
	\coordinate (4) at (3.60,6.0) {};
	\coordinate (1) at (0.60,3.0) {};
	\coordinate (2) at (3.60,0.0) {};

	\begin{scope}[every node/.style={node_style}]
		\node at (1) {};
		\node at (2) {};
		\node at (3) {};
		\node at (4) {};
	\end{scope}

	\begin{scope}
		\begin{scope}
			\node[right=1pt] at (3) {$c$};
			\node[above=1pt] at (4) {$d$};
			\node[left=1pt] at (1) {$a$};
			\node[below=1pt] at (2) {$b$};
		\end{scope}

		\begin{pgfonlayer}{bg}
			\begin{scope}[
					every path/.style={
						standard_line,
					},
				]
				\draw[highlight] (1) -- (2);
				\draw (1) -- (2);
				\draw (2) -- (3);
				\draw[highlight] (2) -- (4);
				\draw (2) -- (4);
				\draw (3) -- (4);
				\draw[highlight] (4) -- (1);
				\draw (4) -- (1);
			\end{scope}
		\end{pgfonlayer}
	\end{scope}

	\begin{scope}[xshift=15cm]
		\coordinate (11) at (0,2) {};
		\coordinate (21) at (0,0) {};
		\coordinate (31) at (0,6) {};
		\coordinate (41) at (0,4) {};
		\coordinate (12) at (5,2) {};
		\coordinate (22) at (5,0) {};
		\coordinate (32) at (5,6) {};
		\coordinate (42) at (5,4) {};
		\begin{scope}[every node/.style={node_style}]
			\node at (11) {};
			\node at (21) {};
			\node at (31) {};
			\node at (41) {};
			\node at (12) {};
			\node at (22) {};
			\node at (32) {};
			\node at (42) {};
		\end{scope}

		\begin{scope}
			\node[left=1pt] at (11) {$c^+$};
			\node[left=1pt] at (21) {$d^+$};
			\node[left=1pt] at (31) {$a^+$};
			\node[left=1pt] at (41) {$b^+$};
			\node[right=1pt] at (12) {$c^-$};
			\node[right=1pt] at (22) {$d^-$};
			\node[right=1pt] at (32) {$a^-$};
			\node[right=1pt] at (42) {$b^-$};
		\end{scope}

		\begin{pgfonlayer}{bg}
			\begin{scope}[
					every path/.style={
						standard_line,
						black
					},
				]
				\draw (21) -- (22);
				\draw (31) -- (32);
				\draw (41) -- (42);
				\draw[highlight] (11) -- (12);
				\draw (11) -- (12);
			\end{scope}
			\begin{scope}[
					every path/.style={
						standard_line,
						red
					},
				]
				\draw (41) -- (12);
				\draw (12) -- (21);
				\draw (22) -- (31);
				\draw (32) -- (41);
				\draw (42) -- (11);
				\draw (42) -- (21);
				\draw (11) -- (22);
				\draw[highlight] (21) -- (32);
				\draw (21) -- (32);
				\draw[highlight] (31) -- (42);
				\draw (31) -- (42);
				\draw[highlight] (41) -- (22);
				\draw (41) -- (22);
			\end{scope}
		\end{pgfonlayer}
	\end{scope}
\end{tikzpicture}
 	\end{center}
	\caption{An initial graph $G$ (left), and the corresponding bipartite graph $\widehat G$ with edges of $\widehat R$ indicated in red (right).
	It is highlighted (in gray) how an odd cycle in $G$ gets mapped to an odd-red perfect matching in $\widehat G$.}
	\label{fig:cycle-to-matching}
\end{figure}
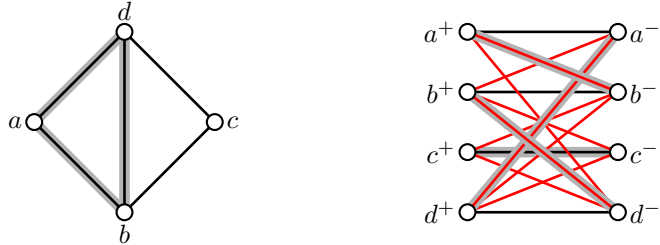

Now an odd cycle $C$ on vertices $v_1, \ldots, v_k, v_{k+1}=v_1$ in $G$ can be translated, for example, to the odd-red perfect matching $M$ in $\widehat G$ with edges $\{v_i^+, v_{i+1}^-\}$ for $i\in [k]$, and $\{v^+, v^-\}$ for all $v\notin C$.
Note that we are not claiming a one-to-one correspondence between odd cycles and odd-red perfect matchings here: one could also construct another odd-red perfect matching from the same odd cycle by traversing it in the opposite direction, thus using the edges $\{v_i^-, v_{i+1}^+\}$.
Conversely, by reversing the construction above, one can see that in general, an odd-red perfect matching $M$ in $\widehat G$ can be mapped to a disjoint union of cycles in $G$ for which the total number of edges is odd (some of the cycles may be degenerate cycles that consist of one edge used twice).
Crucially, such a union of cycles will always contain at least one odd cycle, and thus the corresponding incidence vector is contained in the \emph{dominant} of the odd cycle polytope, which is defined as the Minkowski sum
\[
	\domP(G) = P_{\textup{odd}}(G) + \mathbb{R}_{\geq 0}^E\enspace.
\]
We will in the following identify a family of complex facets of $\domP(G)$ for complete graphs $G$, and then transfer these to facets of $P_{(\widehat{G}, \widehat{R})}$.
The main ingredients for this transformation are laid out next.

\subsection[Complex facets of \texorpdfstring{$\domP(G)$}{the dominant of the odd cycle polytope}]{\boldmath Complex facets of \texorpdfstring{$\domP(G)$}{the dominant of the odd cycle polytope}\unboldmath}%

We begin by showing that $\domP(G)$ belongs to the interesting family of polyhedra arising from combinatorial optimization problems that allow for efficient optimization algorithms, but have complex facets.
The result we obtain in terms of facet complexity is even stronger than in \cref{thm:diverse_and_large_coeff}, as we can show the need for a linear number of different coefficients.

\begin{theorem}\label{thm:diverse_and_large_coeff_odd_cycle}
    For every odd $n\in \mathbb{Z}_{>0}$, there exists a graph $G = (V,E)$ on $n$ vertices and an exponentially (in $n$) sized family $\mathcal{F}$ of facets of the polytope $\domP(G)$ with the following property:
    For every $F\in\mathcal{F}$ and all $a\in \mathbb{Z}^E$, $b\in \mathbb{Z}$ such that $F = \{x\in \domP(G)\colon a^\top x=b\}$, we have
    \[
      \max_{e\in E} |a_e| \geq n - 4
      \qquad\text{and}\qquad
      \lvert\{a_e\colon e\in E\}\rvert\geq \frac{n-3}{2}\enspace.
    \]
\end{theorem}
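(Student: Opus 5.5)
The approach goes through blocking polyhedra, since $\domP(G)$ is a blocking-type polyhedron. Its facets (other than the nonnegativity constraints $x_e\ge 0$) are exactly the inequalities $a^\top x\ge 1$, with $a$ ranging over the vertices of the blocker
\[
B(G)=\{a\in\mathbb{R}^E_{\ge 0}\colon a(C)\ge 1 \ \forall\, C \text{ odd cycle of } G\}\enspace.
\]
Because $\domP(G)$ is full-dimensional, a nontrivial facet determines its inequality up to positive scaling. So any integer description $a^\top x\ge b$ of such a facet is $b\cdot\bar a$, where $\bar a$ is the corresponding vertex of $B(G)$. The task therefore reduces to two things: exhibit exponentially many vertices $\bar a$ of $B(G)$ with many distinct coordinates, and show that the smallest integer multiple of $\bar a$ has a coordinate of size at least $n-4$. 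I would take $G=K_n$ with $n$ odd.

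The vertices come from a potential-type construction. Fix an ordering $v_1,\dots,v_n$, or more flexibly a partition of $V$ into a few special vertices plus a set $S$ whose internal structure is any of exponentially many choices (for example, a choice of a near-perfect matching or a Hamiltonian ordering on $S$). Assign edge weights $a_{v_iv_j}$ depending on $|i-j|$, or on a "distance" to the special structure, with values $k/N$ for $k$ ranging over about $(n-3)/2$ distinct integers and $N$ roughly $n-4$. The intuition is a weighting where every odd cycle must contain some heavy edge, or accumulate enough light edges. An odd cycle of length $\ell$ traversing a bipartite-like "spine" is forced to use edges whose weights sum to at least $1$, with a graded family of tight odd cycles of all odd lengths $3,5,\dots,n$.

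Vertexhood would be verified by exhibiting $|E|$ linearly independent tight constraints. These are tight odd cycles $C$ with $\bar a(C)=1$, together with $a_e=0$ for the zero-weight edges. I would build the tight cycles inductively, so that each new cycle introduces one new edge and the system is triangular. Once vertexhood holds, the coefficient bounds follow directly. The coordinates of $\bar a$ take at least $(n-3)/2$ distinct values, and this is invariant under scaling. The coordinates also have reduced denominators with least common multiple at least $n-4$. Since $b\bar a$ must be integral, we need $b\ge n-4$ (the denominators divide $b$), so some coefficient $b\bar a_e$ is at least the smallest nonzero weight times $n-4$, which is $\ge 1$ after scaling. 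Exponentially many facets then arise from the exponentially many choices of the structure on $S$, up to symmetries.

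The main obstacle is validity: showing $\bar a(C)\ge 1$ for every odd cycle of $K_n$, not just the designed tight ones. I would attempt this by showing that the edges of weight below some threshold form a bipartite subgraph with a layered structure. Any odd cycle must then either use a "crossing" edge whose weight compensates, or wind through the layers with a telescoping sum of weights bounded below by $1$. Concretely, I would write $\bar a_{uv}\ge |\phi(u)-\phi(v)|$ or $\bar a_{uv}\ge 1-\phi(u)-\phi(v)$ for a suitable potential $\phi$, and sum around the cycle using parity.
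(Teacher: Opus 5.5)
\paragraph{Verdict.} The proposal has a genuine gap: it never constructs the facets, and the integrality step is miscomputed.

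\paragraph{What is missing.} Your reduction is sound and is essentially the paper's framework in dual form. $\domP(K_n)$ is full-dimensional, so a facet's normal vector is unique up to scaling. Certifying a blocker vertex by $|E|$ independent tight odd cycles is the same linear algebra as the paper's argument that the tight odd cycles span $\mathbb{R}^E$. But the theorem is an existence statement, and its whole content is an explicit family of valid, facet-defining inequalities. You never write one down. ``Weights depending on $|i-j|$ or on a distance to the special structure'' is not a definition. So neither validity (which you yourself call the main obstacle, and meet only with a potential-function heuristic) nor vertexhood (an unspecified triangular system of tight cycles) can be checked.

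\paragraph{The construction you need.} The missing idea is the paper's $C$-induced constraint. For $n=2k+3$, fix a cycle $C=(v_1,\dots,v_{2k+1})$ avoiding exactly two vertices $s,t$.
\begin{itemize}
\item Each edge $\{v_i,v_j\}$ inside $V(C)$ gets weight $\ell(\{v_i,v_j\})$, the length of the odd-length $v_i$--$v_j$ path along $C$.
\item Each edge of $\delta(C)$ gets weight $k$, and $\{s,t\}$ gets weight $1$.
\item The right-hand side is $2k+1$.
\end{itemize}

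Validity is then a parity argument rather than a potential argument.
\begin{itemize}
\item An odd cycle meeting $s$ or $t$ uses two edges of $\delta(C)$ plus at least one further edge, so it has weight at least $2k+1$.
\item An odd cycle inside $V(C)$ becomes, after replacing each chord by its odd path, an odd closed walk on $C$ whose length equals the left-hand side. This walk must traverse every edge of $C$. If it missed one, cutting $C$ open there would make it a closed walk on a path, hence of even length.
\end{itemize}

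Facetness comes from explicit linear combinations of tight cycles. These are the triangles $(s,t,v_i)$, $(s,v_i,v_{i+1})$, $(t,v_i,v_{i+1})$, the cycle $C$ itself, and, for every chord, the cycle it closes with the even path of $C$. In order, they produce $\chi^{\{s,t\}}$, then the unit vectors of the edges of $C$, then those of the chords. Finally they produce those of $\delta(C)$, via an alternating sum of triangles around the odd cycle $C$.

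For $k\ge 2$ the weight-$1$ edges are exactly $C\cup\{\{s,t\}\}$. So distinct $C$ give distinct facets, which yields the exponential family.

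\paragraph{The integrality step.} As written, this step is also flawed. Multiplying the lcm of the denominators by the \emph{smallest} nonzero weight only gives a constant, not $n-4$; you need the largest weight.
\begin{itemize}
\item In blocker normalization the paper's facet is $\bar a = a/(2k+1)$. The entry $1/(2k+1)$ on $\{s,t\}$ forces $b$ to be a multiple of $2k+1$.
\item Hence $\max_e |b\bar a_e| \ge (2k+1)\cdot (2k-1)/(2k+1) = 2k-1 = n-4$.
\item Equivalently, the integral normal $a$ has an entry equal to $1$, so every integral representative is an integer multiple of it.
\end{itemize}
The bound on distinct values is then read off from the coefficients $1,3,\dots,2k-1$, which give $k=(n-3)/2$ values.
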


To prove \cref{thm:diverse_and_large_coeff_odd_cycle}, it is convenient to parameterize the odd size $n$ of the graph by $n = 2k+3$.
Let $G\coloneqq K_{n}$ be the complete graph on $n$ vertices with vertex set $V$ and edge set $E$.
We explicitly construct a family of facets $\mathcal{F}$ induced by $(n-2)$-cycles in $G$, i.e., cycles of length $n-2 = 2k+1$ in $G$.
Concretely, consider a $(2k+1)$-cycle $C = (v_1, \dots, v_{2k+1})$ in $G$, and let $s$ and $t$ be the two vertices of $G$ not contained in $C$.
For an edge $\{v_i, v_j\}$ between two vertices of $C$, set
\begin{equation*}
	\ell(\{v_i, v_j\}) = \begin{cases}
		|j-i| & \text{if } |j-i| \text{ is odd}\\
		2k+1 - |j-i| & \text{otherwise}
	\end{cases}\enspace.
\end{equation*}
Note that equivalently, $\ell(\{v_i, v_j\})$ can be defined as the length of the odd-length path on $C$ from $v_i$ to $v_j$ (also see \cref{fig:definition-ell}).

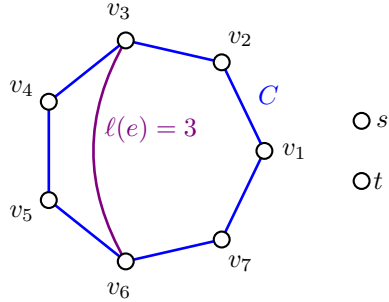
\begin{figure}[ht]
	\centering

	\begin{tikzpicture}[scale=0.4]

		\pgfdeclarelayer{background}
		\pgfsetlayers{background,main}

		\begin{scope}[every
			node/.style={node_style
      }]
			\node (v) at (7,1) {};
			\node (w) at (7,-1) {};
		\end{scope}

		\node[right=2pt] at (v) {$s$};
		\node[right=1pt] at (w) {$t$};

		\node[regular polygon, regular polygon sides=7, minimum size=3cm, rotate=-90] (a) at (0,0) {};

		\begin{scope} [
				every path/.style={
					standard_line,
				},
				]
			\begin{scope} [blue, line width=1pt]
		  	\draw (a.corner 3) -- (a.corner 2) -- (a.corner 1) -- (a.corner 7) -- (a.corner 6);
			\end{scope}
			\begin{scope} [blue, line width=1pt]
		  	\draw (a.corner 3) -- (a.corner 4) -- (a.corner 5) -- (a.corner 6);
			\end{scope}

			\begin{scope}[violet, line width=1pt]
				\draw (a.corner 3) to[bend right] node[pos=0.4, right] {$\ell(e)=3$} (a.corner 6);
			\end{scope}
		\end{scope}

		\foreach \i in {1,2,...,7} {
			\node[node_style] (\i) at (a.corner \i) {};
		}

		\node[regular polygon, regular polygon sides=7, minimum size=3.8cm, rotate=-90] (labels) at (0,0) {};
		\foreach \i in {1,2,...,7} {
			\node (\i) at (labels.corner \i) {$v_\i$};
		}

		\node[above right=-10pt, blue] at ($(1)!0.5!(2)$) {$C$};
	\end{tikzpicture}
 	\caption{Partial visualization of the graph $G$ for $k=3$ ($n=9$). The edge $e=\{v_3, v_6\}$ has $\ell$-value $3$ because the odd-length path in the cycle $C$ connecting its endpoints is $(v_3, v_4, v_5, v_6)$ of length $3$.}
		\label{fig:definition-ell}
\end{figure}

With this notation at hand, we can define the following family of constraints, where $E[S]$, for $S\subseteq V$, denotes the set of all edges with both endpoints in $S$.

\begin{definition}[\boldmath$C$\unboldmath-induced constraint]
  Let $G = K_n$ for some odd $n\in \mathbb{Z}_{\geq 3}$, and $k=\frac{n-3}{2}$.
  Consider an $(n-2)$-cycle $C$ in $G$, and let $s$ and $t$ be the two vertices of $G$ not contained in $C$. 
  We define the \emph{$C$-induced constraint} for $\domP(G)$ by
    \begin{equation*}
	    x(\{s,t\}) + k \cdot x(\delta(C)) + \sum_{e\in E[V\setminus \{s,t\}]} \ell(e) \cdot x(e) \geq 2k+1 \enspace,
    \end{equation*}
    Moreover, denoting by $\mathcal{C}$ the set of all $(n-2)$-cycles in $G$, we refer to the set of all $C$-induced constraints for $C\in \mathcal{C}$ as the \emph{$\mathcal{C}$-induced constraints} for $\domP(G)$.
\end{definition}

\cref{fig:clique-constraint} shows the coefficients of a $C$-induced constraint of $\domP(K_9)$.
Note that a $C$-induced constraint indeed only depends on the choice of a cycle $C$ of length $n-2$ in $G$, and two different such cycles lead to different constraints.
In particular, the $\mathcal{C}$-induced constraints form an exponential family of constraints.
Moreover, for $n = 2k + 3 \geq 5$, the coefficients of a $C$-induced constraint are $\{1, 3, 5, \ldots, 2k-1\} \cup \{k\}$, which are at least $k=\frac{n-3}{2}$ distinct values.

The validity of a $C$-induced constraint can be verified as follows.
Let $x= \chi^{\overline{C}}$ be the characteristic vector of an odd cycle $\overline{C}$ in $G$.
If $\overline{C}$ goes through $s$ or $t$, then the left-hand side term $k \cdot x(\delta(C))$ already contributes $2 k$ units, and one more unit is obtained because there must be a third edge in $\overline{C}$, and all edges have strictly positive coefficients.
Otherwise, if $\overline{C}$ does not go through $s$ or $t$, then we can map each edge $e\in \overline{C}$ to the odd-length path $P_e$ in $C$ connecting its endpoints.
If the paths $P_e$ for $e\in \overline{C}$ cover all of $C$, the constraint is fulfilled because the sum of the $\ell$-values of the edges in $\overline{C}$, which is part of the left-hand side of the $C$-induced constraint, is equal to $\sum_{e\in \overline{E}}|P_e|$, which is at least $2k+1$ because the full cycle is covered.
A key observation, which finishes the reasoning, is that it is impossible that the paths $P_e$ do not cover all of $C$.
If this happened, then the paths $P_e$ for $e\in \overline{C}$ would correspond to an odd number of odd length intervals on a line, as we may break the cycle open at the uncovered edge without breaking up any of the paths $P_e$.
However, it is impossible that such intervals form a closed walk, because for that, their total length would have to be even.

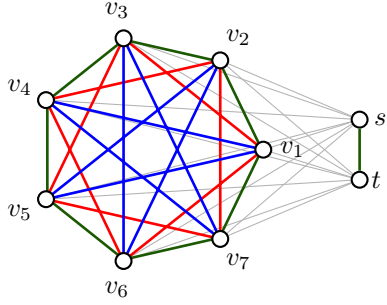
\begin{figure}[!th]
	\centering
	\begin{tikzpicture}[scale=0.4]

		\pgfdeclarelayer{background}
		\pgfsetlayers{background,main}

		\begin{scope}[every
			node/.style={node_style
      }]
			\node (s) at (7,1) {};
			\node (t) at (7,-1) {};
		\end{scope}

		\node[right=2pt] at (s) {$s$};
		\node[right=1pt] at (t) {$t$};

		\begin{scope}[
				every path/.style={
					standard_line,
					darkgreen
				},
			]
		\draw[darkgreen, line width=1pt] (s) -- (t);

		\node[draw, regular polygon, regular polygon sides=7, minimum size=3cm, rotate=-90] (a) at (0,0) {};
	\end{scope}

		\begin{scope}[
				every path/.style={
					standard_line,
					red
				},
			]
			\draw (a.corner 1) -- (a.corner 3) -- (a.corner 5) -- (a.corner
			7) -- (a.corner 2) -- (a.corner 4) --(a.corner 6) -- (a.corner 1);
		\end{scope}

		\begin{scope}[
				every path/.style={
					standard_line,
					blue
				},
			]
			\draw (a.corner 1) -- (a.corner 4) -- (a.corner 7) -- (a.corner
			3) -- (a.corner 6) -- (a.corner 2) --(a.corner 5) -- (a.corner 1);
		\end{scope}

		\foreach \i in {1,2,...,7} {
			\begin{pgfonlayer}{background}
				\draw[thin, black!30] (s) -- (a.corner \i);
				\draw[thin, black!30] (t) -- (a.corner \i);
			\end{pgfonlayer}
			\node[node_style] (\i) at (a.corner \i) {};
		}

		\node[regular polygon, regular polygon sides=7, minimum size=3.8cm, rotate=-90] (labels) at (0,0) {};
		\foreach \i in {1,2,...,7} {
			\node (\i) at (labels.corner \i) {$v_\i$};
		}
	\end{tikzpicture}
 	\caption{Visualization of a $C$-induced-constraint of $\domP(K_9)$.
		The coefficient of green edges is $1$, of blue edges $3$, and of red edges $5$.
		Furthermore, the edges between the $7$-gon and the vertices $s$ and $t$, marked in light gray, also have a coefficient of $3$.
        The right-hand side of the constraint is $7$.}
		\label{fig:clique-constraint}
\end{figure}

The intuition behind the choice of the coefficients for a $C$-induced constraint is---besides the goal of constructing a valid constraint---to ensure that many odd-length cycles are tight for the constraint.
The definition of $\ell$ ensures that the odd-length cycle defined by any chord $\{v_i,v_j\}$ of $C$ together with the even part of $C$ connecting $v_i$ to $v_j$ is tight for the constraint.
Additionally, all triangles that contain an edge of $C$ or the edge $\{s,t\}$, together with two appropriate edges crossing the cut $\delta(C)$, are tight for the constraint.
Showing that enough of these tight cycles (respectively, their incidence vectors) are linearly independent will allow for showing that a $C$-induced constraint is indeed facet-defining.

\begin{lemma}\label{lem:c-induced-facet}
    $\mathcal{C}$-induced constraints are facet-defining for $\domP(G)$.
\end{lemma}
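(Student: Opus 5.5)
Since $\domP(G)$ is a full-dimensional polyhedron in $\mathbb{R}^E$ (it is an up-closed dominant), a valid inequality $a^\top x\geq b$ with $a\geq 0$ and $b>0$ is facet-defining as soon as we find $|E|$ affinely independent points of $\domP(G)$ on the hyperplane $a^\top x=b$. Validity was already argued above. We use tight odd cycles together with rays: if $\overline{C}$ is a tight odd cycle, then $\chi^{\overline{C}}$ is tight, and so is $\chi^{\overline{C}}+\lambda e_f$ for every edge $f$ with coefficient $a_f=0$. Here all coefficients are positive, so no rays are available and we need $|E|$ tight odd cycles that are linearly independent. Linear independence suffices, because $b\neq 0$ makes the points affinely independent as well. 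An equivalent formulation, which is convenient to verify: let $\alpha^\top x\geq\beta$ be any inequality that is satisfied with equality by every tight cycle in our family. We then show that $(\alpha,\beta)$ is a scalar multiple of $(a,b)$. This is the standard technique, and I would use it.

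The tight cycles to use are the following.
\begin{enumerate}
\item The cycle $C$ itself. Its value is $\sum_{e\in C}\ell(e)=2k+1$, since $\ell=1$ on the edges of $C$.
\item For each chord $\{v_i,v_j\}$ with $|j-i|$ even, the cycle formed by the chord and the even path of $C$ between its endpoints.
\item For each edge $\{v_i,v_{i+1}\}$ of $C$ and each $u\in\{s,t\}$, the triangle $u v_i v_{i+1}$, with value $1+2k$.
\item The triangles $s t v_i$ for every $i$, with value $1+2k$.
\end{enumerate}
One should also double-check chords with $|j-i|$ odd: in that case the even path has length $2k+1-|j-i|$ and $\ell=|j-i|$, so the cycle made of the chord and the even path again has value $2k+1$. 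So every chord, whatever the parity of $|j-i|$, gives a tight cycle through the even path.

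The argument for uniqueness of $(\alpha,\beta)$ runs as follows. Comparing the triangles $s v_i v_{i+1}$ and $t v_i v_{i+1}$ gives $\alpha_{sv_i}+\alpha_{sv_{i+1}}=\alpha_{tv_i}+\alpha_{tv_{i+1}}$. Comparing the triangles $s t v_i$ across different $i$ gives $\alpha_{sv_i}+\alpha_{tv_i}=\beta-\alpha_{st}$, which is constant in $i$. Since $C$ has odd length, going around the cycle and solving the alternating-sum relations $x_i+x_{i+1}=c_i$ determines each variable uniquely. From this we should get that $\alpha_{sv_i}=\alpha_{tv_i}=:\gamma$ is the same for all $i$, and that all edges of $C$ have a common value $\alpha_{v_iv_{i+1}}=:\mu$. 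Indeed, the combination of the relations $\alpha_{sv_i}+\alpha_{sv_{i+1}}+\alpha_{v_iv_{i+1}}=\beta$ around an odd cycle pins everything down. Next, the chord cycles express each chord coefficient as $\beta-(\text{even path length})\cdot\mu$. The cycle $C$ gives $(2k+1)\mu=\beta$. Scaling so that $\mu=1$, we get $\beta=2k+1$ and chord coefficients equal to $\ell(e)$. The triangles $u v_i v_{i+1}$ give $2\gamma+1=2k+1$, so $\gamma=k$, and then the triangles $s t v_i$ give $\alpha_{st}=1$. Hence $(\alpha,\beta)=\mu\,(a,b)$.

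The main obstacle is the linear-algebra step on the cut edges $\delta(C)$, namely showing that the triangle relations force $\alpha_{sv_i}$ and $\alpha_{tv_i}$ to be constant, and not merely that their pairwise sums are constant. Counting equations: we have $2(2k+1)$ unknowns $\alpha_{sv_i},\alpha_{tv_i}$, the unknowns $\alpha_{v_iv_{i+1}}$, and $\alpha_{st}$. I would handle this by using the odd length of $C$: the system $y_i+y_{i+1}=c_i$ over an odd cycle is nonsingular. If the triangles listed above are not enough, I would add further tight cycles, for example pentagons $s\,v_i\,v_{i+1}\,v_{i+2}\,v_{i+3}$ closed by a suitable chord, or cycles through both $s$ and $t$ whose value works out to exactly $2k+1$, and check their tightness directly before using them.
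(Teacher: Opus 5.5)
Your proposal is correct and uses exactly the paper's family of tight cycles ($C$, the chord cycles $C_{i,j}$, and the triangles $T^i$, $T^i_s$, $T^i_t$), only in dual form: you show that every equation satisfied by these cycles is a multiple of $(a,b)$, whereas the paper writes down explicit linear combinations, including an alternating sum of the $T^i_s$ around the odd cycle $C$, to show that their span is $\mathbb{R}^E$. The step you flag as the main obstacle is already settled by your own relations, so no extra pentagons are needed: $d_i := \alpha_{sv_i}-\alpha_{tv_i}$ satisfies $d_i+d_{i+1}=0$ around the odd cycle $C$, hence $d\equiv 0$, and together with $\alpha_{sv_i}+\alpha_{tv_i}=\beta-\alpha_{st}$ this makes $\alpha_{sv_i}=\alpha_{tv_i}$ constant in $i$.
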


A proof of \cref{lem:c-induced-facet} is 
provided in \cref{sec:complex-facets}.
Note that, together with the explicit description of $\mathcal{C}$-induced constraints and full-dimensionality of $\domP(G)$, \cref{lem:c-induced-facet} immediately implies \cref{thm:diverse_and_large_coeff_odd_cycle}.
Indeed, $\domP(G)$ being full dimensional implies that, given a facet, any inequality describing it uses the same normal vector up to scaling.
Our goal is to transfer these complex facets of $\domP(G)$ to facets of $P_{(\widehat G, \widehat R)}$, while showing that much of their complexity is preserved.
We detail how this can be achieved next.

\subsection[Transferring facets of \texorpdfstring{$\domP(G)$}{the dominant of the odd cycle polytope} to \texorpdfstring{$P_{(\widehat G, \widehat R)}$}{the odd-red perfect matching polytope}]{\boldmath Transferring facets of \texorpdfstring{$\domP(G)$}{the dominant of the odd cycle polytope} to \texorpdfstring{$P_{(\widehat G, \widehat R)}$}{the odd-red perfect matching polytope}\unboldmath}

Because $\domP(G)$ is a dominant and therefore up-closed, it can be described by so-called \emph{covering inequalities}, i.e., inequalities of the form
\begin{align}\label{eq:facet-odd-cycle-dom}
	\sum_{e\in E} a_e x_e \geq b
\end{align}
with $a \in \mathbb{R}^E_{\geq 0}$ and $b\in \mathbb{R}$.
For any such constraint, we define its \emph{canonical transformation} as
\begin{align}\label{eq:facet-red-odd-matching}
	\sum_{\{u, v\} \in E} a_{\{u,v\}}\left(y_{\{u^+, v^-\}} + y_{\{u^-, v^+\}}\right) \geq b\enspace,
\end{align}
where $y$ is defined on the edges of $\widehat{G}$.
The following is an immediate observation.

\begin{observation}\label{obs:feasible}
	The canonical transformation of any constraint that is valid for $\domP(G)$ is valid for $P_{(\widehat G, \widehat R)}$.
\end{observation}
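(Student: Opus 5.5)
Since $P_{(\widehat G,\widehat R)}$ is the convex hull of the incidence vectors $\chi^M$ of odd-red perfect matchings $M$ in $\widehat G$, and the canonical transformation is linear in $y$, it suffices to check the transformed inequality on each such $\chi^M$. The plan is to map every odd-red perfect matching $M$ to a vector $x^M\in\mathbb{Z}_{\geq 0}^E$ lying in $\domP(G)$, chosen so that the left-hand side of \eqref{eq:facet-red-odd-matching} evaluated at $\chi^M$ equals the left-hand side of \eqref{eq:facet-odd-cycle-dom} evaluated at $x^M$. Validity of \eqref{eq:facet-odd-cycle-dom} on $\domP(G)$ then yields validity of \eqref{eq:facet-red-odd-matching} at $\chi^M$.

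Concretely, for $\{u,v\}\in E$ set
\[
x^M_{\{u,v\}} \coloneqq \chi^M_{\{u^+,v^-\}} + \chi^M_{\{u^-,v^+\}}\in\{0,1,2\}.
\]
With this definition, the equality of the two left-hand sides holds by construction. The only remaining task is to show $x^M\in\domP(G)$. For this, read $M$ as a permutation $\sigma$ of $V$ with $\sigma(u)=v$ whenever $\{u^+,v^-\}\in M$. Fixed points of $\sigma$ are exactly the non-red edges $\{v^+,v^-\}$, and every other pair $u\mapsto\sigma(u)$ is a red edge, which requires $\{u,\sigma(u)\}\in E$.

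Decompose $\sigma$ into cycles. A nontrivial cycle of $\sigma$ of length $m$ uses exactly $m$ red edges and traces a closed walk of length $m$ in $G$. For $m\geq 3$ this closed walk is a cycle of $G$ with $m$ edges. For $m=2$ it is a single edge $\{u,v\}$ used twice, which contributes $x^M_{\{u,v\}}=2$. The multiset union of all these edges is exactly the support of $x^M$ with its multiplicities, and the total number of red edges $|M\cap\widehat R|$ is the sum of the cycle lengths, which is odd. Since $2$-cycles contribute even lengths, some cycle of $\sigma$ must have odd length $m\geq 3$, and this gives an odd cycle $C'$ of $G$. Because every contribution is nonnegative, $x^M\geq \chi^{C'}$, so
\[
x^M\in P_{\textup{odd}}(G)+\mathbb{R}^E_{\geq 0}=\domP(G).
\]

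Then $a^\top x^M\geq b$, and this equals the left-hand side of \eqref{eq:facet-red-odd-matching} at $\chi^M$. By convexity, the inequality holds on all of $P_{(\widehat G,\widehat R)}$.

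The only point needing care is the bookkeeping between $M$ and the cycles of $\sigma$. The direction of each matched pair has to be tracked consistently, so that both orientations $\{u^+,v^-\}$ and $\{u^-,v^+\}$ are counted toward $x_{\{u,v\}}$. Once this is done, the multiplicity-$2$ case for $2$-cycles is handled, and the parity argument goes through cleanly.
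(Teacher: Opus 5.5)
Your proof is correct and follows essentially the same route as the paper: you read the odd-red perfect matching as a permutation of $V$, decompose it into cycles of $G$ (with $2$-cycles giving doubled edges), and use parity to extract an odd cycle $C'$. The only difference is cosmetic. You compare against the full multiplicity vector $x^M\geq\chi^{C'}$, which lies in $\domP(G)$ and gives equality of the two left-hand sides, whereas the paper compares directly against $\chi^{C'}$ using nonnegativity of the coefficients; you also spell out the permutation and parity bookkeeping that the paper only sketches.
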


\begin{proof}
	Consider a valid constraint $\kappa$ for $\domP(G)$ of the form given in \eqref{eq:facet-odd-cycle-dom}, and its canonical transformation as given in \eqref{eq:facet-red-odd-matching}.
    Let $y\in P_{(\widehat G, \widehat R)}$ be the incidence vector of an odd-red perfect matching $M$ in $\widehat G$.
    As discussed earlier, $M$ can be mapped to a union of cycles in $G$ that contains at least one odd cycle $C$, by mapping edges $\{v^\pm, w^\mp\}$ in $M$ back to the edge $\{v,w\}$ in $G$.
    Let $x\in \domP(G)$ be the incidence vector of $C$.
    Then, the left-hand side of the canonical transformation of $\kappa$ evaluated at $y$ is at least the left-hand side of $\kappa$ evaluated at $x$.
    Consequently, validity of $\kappa$ for $\domP(G)$ implies validity of its canonical transformation for~$P_{(\widehat G, \widehat R)}$.
\end{proof}

For $\mathcal{C}$-induced constraints, which are facets of $\domP(G)$ by \cref{lem:c-induced-facet}, we claim that the canonical transformation not only gives feasible constraints as shown in \cref{obs:feasible} above, but facet-defining constraints.
In order to show that a constraint is facet-defining, we need to identify enough independent points on the facet.
Naturally, for a facet $\kappa$ of $\domP(G)$, the hope is to be able to take independent (characteristic vectors of) odd cycles for which $\kappa$ is tight, transfer these to odd-red matchings for which the canonical transformation of $\kappa$ is tight, and argue that these matchings are independent enough to conclude that the canonical transformation of $\kappa$ is facet-defining for $P_{(\widehat G, \widehat R)}$.
For carrying out this final argument of independence on the odd-red matching side in an almost black-box way, we exploit the following notion of \emph{expressibility} on the side of odd cycles.

\begin{definition}[Expressibility]\label{def:expressibility}
    Let $G=(V,E)$ be a graph and consider a constraint $\kappa$ of the form given in \eqref{eq:facet-odd-cycle-dom} that is valid for $\domP(G)$.
    Let 
    $$
    \mathcal{T} \coloneqq \left\{ C \subseteq E \colon C\text{ is an odd cycle in }G \text{ and } \kappa\text{ is tight for }\chi^C\right\}
    $$
    be the set of odd cycles in $G$ whose incidence vectors satisfy $\kappa$ with equality.
    \begin{enumerate}
        \item For $Q\subseteq E$, we define the $Q$-expressible edges $E_Q\subseteq E$ through the following iterative procedure: We start with $E_Q = Q$, and as long as there is an edge $e\in E\setminus E_Q$ and a cycle $C\in \mathcal{T}$ with $e\in C$ and $C\setminus\{e\}\subseteq E_Q$, we add $e$ to~$E_Q$.
        \item We say that $\kappa$ is \emph{$k$-expressing} if there is a set $Q\subseteq E$ of size at most $k$ such $E_Q=E$.
    \end{enumerate}
\end{definition}

In other words, a constraint is $k$-expressing if it suffices to provide $k$ edges through which all others can be expressed using odd cycles as defined above.
As one might imagine, this number $k$ is closely related to an upper bound on the co-dimension of the resulting face when a $k$-expressing constraint of $\domP(G)$ is transformed to a constraint of $P_{(\widehat G, \widehat R)}$.
For small enough $k$, exploiting that we know the dimension of $P_{(\widehat G, \widehat R)}$, we will hence be able to certify that the resulting constraint is facet-defining.
Concretely, our result is the following.
(See %
\cref{sec:complex-facets}
for proofs of the following statements.)

\begin{lemma}\label{lem:expressing-facet-transfer}
    Let $G$ be an $n$-vertex graph, and let $\kappa$ be an $(n-1)$-expressing facet-defining constraint of $\domP(G)$ that is not tight for all of $P_{\textup{odd}}(G)$.
    Then, the canonical transformation of $\kappa$ is facet-defining for $P_{(\widehat G, \widehat R)}$.
\end{lemma}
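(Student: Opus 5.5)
The plan is to use the standard ``every valid equation is a combination of the defining ones'' argument. Let $\widehat\kappa$ denote the canonical transformation of $\kappa$ and let $F\coloneqq\{y\in P_{(\widehat G,\widehat R)}\colon \widehat\kappa \text{ tight}\}$. Take any equation $\lambda^\top y=\mu$ that holds on all of $F$. I will show that it is a linear combination of the degree equations $y(\delta(w))=1$ for $w\in\widehat V$ and of $\widehat\kappa$ itself. Two further facts then finish the proof.
\begin{enumerate}
\item The affine hull of $P_{(\widehat G,\widehat R)}$ is cut out by the degree equations alone. For $\widehat G$ built from $G=K_n$ this can be checked directly, e.g.\ with the same kind of argument as below applied to all odd cycles instead of only tight ones.
\item $\widehat\kappa$ is not tight on all of $P_{(\widehat G,\widehat R)}$. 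Indeed, a non-tight odd cycle $C$ exists by assumption, and it maps to an odd-red perfect matching that is not tight.
\end{enumerate}
Together these give that $F$ is a facet.

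\textbf{Step 1 (normalization by vertex potentials).} Subtract degree equations so that $\lambda_{\{v^+,v^-\}}=0$ for all $v\in V$. After this, the only remaining potential freedom is $\pi_{v^-}=-\pi_{v^+}$. It changes $\lambda_{\{u^+,v^-\}}$ by $\pi_{u^+}-\pi_{v^+}$ and $\lambda_{\{u^-,v^+\}}$ by the negative of this.

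Split $\lambda$ into a symmetric part $s_e\coloneqq\tfrac12(\lambda_{\{u^+,v^-\}}+\lambda_{\{u^-,v^+\}})$ and an antisymmetric part $d_e\coloneqq\tfrac12(\lambda_{\{u^+,v^-\}}-\lambda_{\{u^-,v^+\}})$, where $d_e$ is taken with respect to an orientation of $e=\{u,v\}$. The remaining freedom leaves $s$ unchanged and adds an arbitrary gradient $\nabla\pi$ to $d$.

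\textbf{Step 2 (using tight cycles).} Each tight odd cycle $C\in\mathcal T$ yields two tight odd-red perfect matchings in $F$: traverse $C$ in either direction and use the edges $\{v^+,v^-\}$ everywhere else. Since $\lambda$ vanishes on the identity edges, adding and subtracting the two resulting equations gives
\[
s(C)=\mu
\qquad\text{and}\qquad
\textstyle\sum_{e\in C}\pm d_e=0
\quad\text{for all } C\in\mathcal T ,
\]
where the signs in the second sum follow the traversal direction.

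For $d$, the second family of equations lets us propagate values along the expressibility procedure. If $d$ is known on $C\setminus\{e\}$, then it is determined on $e$. Hence the solution space for $d$ injects into $\mathbb{R}^Q$ and has dimension at most $|Q|\le n-1$. The gradients $\nabla\pi$ satisfy all these equations, because the signed sum of a gradient around a closed walk vanishes. They form an $(n-1)$-dimensional space, as $G$ is connected since $\kappa$ is $(n-1)$-expressing. So every solution $d$ is a gradient, and we can normalize $d=0$. This makes $\lambda$ symmetric: $\lambda_{\{u^+,v^-\}}=\lambda_{\{u^-,v^+\}}=s_{\{u,v\}}$.

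\textbf{Step 3 (matching $s$ to $a$).} It remains to show $(s,\mu)=c\,(a,b)$ for some scalar $c$. Since $\domP(G)$ is full-dimensional and $\kappa$ defines a facet, the face of $\domP(G)$ defined by $\kappa$ equals $\conv(\mathcal T)+\operatorname{cone}\{\chi^e\colon a_e=0\}$, and its affine hull is $\{x\colon a^\top x=b\}$. Therefore it suffices to prove two things:
\begin{enumerate}
\item $s(C)=\mu$ for all $C\in\mathcal T$, which Step 2 already gives;
\item $s_e=0$ for every edge $e$ with $a_e=0$.
\end{enumerate}
For the second item, take a tight cycle $C$ avoiding $u$ and $v$ and replace $\{u^+,u^-\},\{v^+,v^-\}$ by $\{u^+,v^-\},\{v^+,u^-\}$. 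This keeps the red parity, because it adds two red edges, and it keeps tightness because $a_e=0$. Comparing the two equations gives $2s_e=0$.

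I expect this last point to be the main obstacle. One needs a tight cycle disjoint from the endpoints of each zero-coefficient edge, or otherwise a substitute exchange argument. In the worst case this requires an extra hypothesis, or one can argue within the expressibility closure instead. Once $s\in\linspan$ of the affine-hull equations, we get $s=c\,a$ and $\mu=c\,b$, which completes the argument.
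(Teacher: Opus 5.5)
\textbf{Step 3(ii) is a genuine gap, and no argument can close it:} the lemma is false as stated once $\kappa$ has zero coefficients. An exchange argument cannot close it, and neither can an argument inside the expressibility closure.

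Take the wheel $G=W_5$ with hub $w$ and rim $v_1,\dots,v_5$, and let $\kappa$ be $x(\text{rim})\ge 1$.
\begin{itemize}
\item Every odd cycle uses a rim edge, and the tight odd cycles are exactly the triangles $T_i=\{wv_i,wv_{i+1},v_iv_{i+1}\}$.
\item The face is $\conv\{\chi^{T_i}\}+\operatorname{cone}\{\chi^{\{w,v_i\}}\}$, of dimension $9=|E|-1$. So $\kappa$ is facet-defining.
\item $\kappa$ is not tight for the rim $5$-cycle, and it is $5$-expressing with $Q$ the five spokes.
\end{itemize}
However, every tight odd-red perfect matching is an oriented $T_i$ plus identity edges, so $\widehat F\subseteq\{y_{\{w^+,w^-\}}=0\}$. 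Consider three matchings: the one from the rim cycle, the one from $T_1$ together with the digon on $v_3v_4$, and the one from $T_1$. Their values $\bigl(\widehat\kappa,\,y_{\{w^+,w^-\}}\bigr)$ are $(5,1)$, $(3,0)$ and $(1,0)$. These are affinely independent, so $\widehat F$ has codimension at least $2$ in $P_{(\widehat G,\widehat R)}$.

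In your notation, take $s_{\{w,v_i\}}=\tfrac12$, $s=0$ on the rim and $\mu=1$. This satisfies $s(C)=\mu$ for all $C\in\mathcal T$ without being a multiple of $a$. No exchange can kill it, because every tight cycle contains $w$.

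The paper's proof hides the same assumption. Its first claim writes $\chi^e$, for $e\in Q$, as a linear combination of tight odd cycles, and this fails here for every spoke. That step does follow from facetness when $a>0$, because then the face is $\conv(\mathcal T)$. Positivity is all the application needs, since $\mathcal C$-induced constraints have all coefficients at least $1$. With the extra hypothesis $a>0$, your Step~3(ii) becomes vacuous and your argument is complete.

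\textbf{Comparison with the paper's route.} Under $a>0$, your proof is a correct alternative that is dual to the paper's primal argument. The paper exhibits a set $H$ of $2n-1$ vectors, namely all $\chi^{\{v^+,v^-\}}$ and one orientation $\chi^{\overrightarrow{e}}$ per $e\in Q$, with $\linspan(\widehat F\cup H)=\mathbb R^{\widehat E}$. The pieces correspond as follows.
\begin{itemize}
\item Your normalization $\lambda_{\{v^+,v^-\}}=0$ plays the role of the identity part of $H$.
\item Your injection of the antisymmetric solutions into $\mathbb R^Q$, filled by the gradients, plays the role of the $Q$-part of $H$ together with the paper's propagation claim.
\item Your Step~3 replaces the claim built on the half-integral points $y^C$.
\end{itemize}
Your version shows why the threshold is exactly $n-1$: that is the dimension of the space of potentials. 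It also cleanly separates the part that uses only expressibility from the part that uses facetness of $\kappa$. That separation is what exposes the positivity issue.

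\textbf{Two smaller points.}
\begin{itemize}
\item Connectivity of $G$ does not follow from $(n-1)$-expressibility: adding an isolated vertex preserves all the hypotheses. So connectivity must be assumed. It holds for $K_n$, and the paper uses it tacitly in the bound $\dim P_{(\widehat G,\widehat R)}\le|\widehat E|-|\widehat V|+1$.
\item You do not need Fact~1 as an input, and it is actually false for $K_3$ and $K_4$. The degree equations are valid for $P_{(\widehat G,\widehat R)}$, and $\widehat\kappa$ is not tight on all of it. The standard criterion then already gives that $\widehat F$ is a facet, and Fact~1 follows afterwards.
\end{itemize}
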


With \cref{lem:expressing-facet-transfer} at hand, using that for $K_n$ no facet is tight for all odd cycles, all that remains to show is that $\mathcal{C}$-induced constraints of $\domP(G)$ are $(n-1)$-expressing.

\begin{lemma}\label{lem:n-1expressing}
    $\mathcal{C}$-induced constraints are $(n-1)$-expressing.
\end{lemma}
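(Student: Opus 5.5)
The plan is to exhibit an explicit set $Q$ of $n-1=2k+2$ edges and show, following the iterative procedure of \cref{def:expressibility}, that every edge of $G=K_n$ becomes $Q$-expressible. Only a few families of odd cycles that are tight for the $C$-induced constraint are needed, so the first step is to list them and check their tightness.

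Write $C=(v_1,\dots,v_{2k+1})$, with $s,t$ the two vertices outside $C$. The tight cycles I will use are the following.
\begin{enumerate}
\item The cycle $C$ itself. Every edge of $C$ has $\ell$-value $1$, so the left-hand side equals $2k+1$.
\item For each non-adjacent pair $v_i,v_j$ on $C$, the cycle formed by the chord $\{v_i,v_j\}$ and the even-length arc of $C$ between $v_i$ and $v_j$. This cycle is odd. Its left-hand side is $\ell(\{v_i,v_j\})$ plus the length of the even arc. Since $\ell(\{v_i,v_j\})$ is the length of the complementary odd arc, the two arcs together give exactly $2k+1$.
\item The triangles $\{s,v_i,v_{i+1}\}$ and $\{t,v_i,v_{i+1}\}$ for $i\in[2k+1]$, with indices mod $2k+1$. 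Each has value $k+k+1=2k+1$.
\item The triangles $\{s,t,v_i\}$. Each has value $1+k+k=2k+1$.
\end{enumerate}

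Now set
\[
Q \coloneqq \bigl(C\setminus\{\{v_{2k+1},v_1\}\}\bigr)\cup\bigl\{\{s,v_1\},\{t,v_1\}\bigr\},
\]
which has $2k+2=n-1$ edges. The expansion proceeds in four steps.
\begin{enumerate}
\item Cycle~(i) has all edges but $\{v_{2k+1},v_1\}$ in $E_Q$, so that edge is added. Now all of $C$ lies in $E_Q$.
\item Use the triangles~(iii) inductively along the cycle. From $\{s,v_i\}\in E_Q$ and $\{v_i,v_{i+1}\}\in E_Q$ we obtain $\{s,v_{i+1}\}$. Starting at $v_1$, this yields all edges between $s$ and $C$. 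The same argument from $\{t,v_1\}$ yields all edges between $t$ and $C$.
\item The triangle $\{s,t,v_1\}$ from~(iv) now has two edges in $E_Q$, which gives $\{s,t\}$.
\item For every chord $\{v_i,v_j\}$ of $C$, the cycle~(ii) consists of that chord together with edges of $C$, which are all already in $E_Q$. Hence each chord is added.
\end{enumerate}
This covers every edge of $K_n$: the edges of $C$, the chords of $C$, the edges between $\{s,t\}$ and $C$, and $\{s,t\}$. Therefore $E_Q=E$, and the constraint is $(n-1)$-expressing.

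I expect no serious obstacle. The only points needing care are the tightness computations, in particular confirming that the chord cycles in~(ii) are odd and have value exactly $2k+1$, and ordering the propagation so that each step uses a tight cycle with exactly one edge not yet in $E_Q$. For small $n$ there are degenerate cases: when $k=0$, $C$ is a single vertex, and when $k=1$, $C$ is a triangle with no chords. These I would check directly, or exclude if the statement implicitly assumes $n\ge 5$.
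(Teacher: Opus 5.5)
Your proof is correct and follows the same strategy as the paper: you pick an explicit seed set of $n-1$ edges and propagate through the tight triangles $(s,v_i,v_{i+1})$, $(t,v_i,v_{i+1})$, $(s,t,v_i)$ and the chord cycles. Your tightness checks are right, including the one for $C$ itself; the case $k=1$ goes through unchanged, and $n=3$ is vacuous since $K_3$ has no cycle of length $1$.

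The only difference is the seed set. The paper takes $Q=\delta(s)$, obtains each $\{t,v\}$ from the triangle $(s,t,v)$, and obtains the cycle edges from $(s,v_i,v_{i+1})$. You instead seed with the path $C\setminus\{\{v_{2k+1},v_1\}\}$ plus $\{s,v_1\},\{t,v_1\}$, close the path up using the tight cycle $C$, and then propagate around the cycle.
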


Together, \cref{lem:expressing-facet-transfer,lem:n-1expressing} immediately yield the following corollary that provides the desired facets of $P_{(\widehat G, \widehat R)}$.

\begin{corollary}
	The canonical transformation of any $\mathcal{C}$-induced constraint is facet-defining for $P_{(\widehat G, \widehat R)}$.
\end{corollary}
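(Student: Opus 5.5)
This follows by combining \cref{lem:expressing-facet-transfer} with \cref{lem:c-induced-facet,lem:n-1expressing}. Fix $G=K_n$ with $n=2k+3$ odd and $n\ge 5$, and fix an $(n-2)$-cycle $C$ with $\kappa$ its $C$-induced constraint. We need to check the three hypotheses of \cref{lem:expressing-facet-transfer}:
\begin{enumerate}
\item $\kappa$ is facet-defining for $\domP(G)$, which is \cref{lem:c-induced-facet};
\item $\kappa$ is $(n-1)$-expressing, which is \cref{lem:n-1expressing};
\item $\kappa$ is not tight for all of $P_{\textup{odd}}(G)$.
\end{enumerate}
The canonical transformation of $\kappa$ is then facet-defining for $P_{(\widehat G,\widehat R)}$.

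Only the third hypothesis needs an argument. The text says it holds because no facet of $K_n$ is tight for all odd cycles, but for $\kappa$ we can exhibit a non-tight odd cycle directly. Take a triangle $\{s,v_1,v_2\}$. Its left-hand side is $2k+\ell(\{v_1,v_2\})=2k+1$, so it is tight, and we must look elsewhere. Instead take a triangle $\{s,v_1,v_3\}$. Here $|3-1|=2$ is even, so $\ell(\{v_1,v_3\})=2k-1$. The left-hand side is therefore $2k+(2k-1)=4k-1$, which exceeds $2k+1$ whenever $k\ge 2$.

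For $k=1$, i.e.\ $n=5$, the cycle $C$ is a triangle and all chords have $\ell=1$. In this case we use a triangle through both $s$ and $t$, say $\{s,t,v_1\}$. Its left-hand side is $1+k+k=2k+1=3$, which is again tight. A pentagon through $s$ and $t$ uses four edges of $\delta(C)$ and gives $4k+1>2k+1$ for $k\ge1$, so it is strictly slack. In fact this pentagon works for every $k\ge 1$, and we would simply use it in all cases. Its incidence vector lies in $P_{\textup{odd}}(G)$ with $\kappa$ slack, which gives the third hypothesis.

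The main obstacle lies not in this corollary but in the two lemmas it invokes, chiefly \cref{lem:n-1expressing}. There one must find $n-1$ seed edges from which every edge is generated via tight cycles. Natural seeds are the edges of $C$ together with $\{s,t\}$ and one edge of $\delta(C)$. The chords then follow from the tight cycles formed by a chord plus the even arc of $C$, and the cut edges follow from the tight triangles. Assuming those lemmas, the corollary is immediate.
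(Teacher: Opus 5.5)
Your proof is correct and follows the paper: \cref{lem:c-induced-facet,lem:n-1expressing} supply the hypotheses of \cref{lem:expressing-facet-transfer}, and your slack pentagon through $s$ and $t$ (left-hand side at least $2k+3>2k+1$) just makes explicit the paper's remark that no facet of $\domP(K_n)$ is tight for all odd cycles. One aside on your closing remark, which the argument does not rely on: the seed set you suggest for \cref{lem:n-1expressing} ($C$, $\{s,t\}$, and one edge of $\delta(C)$) has $n$ edges rather than $n-1$, whereas the paper uses $Q=\delta(s)$, which has exactly $n-1$ edges.
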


We refer to the facets of $P_{(\widehat G, \widehat R)}$ obtained in this way as \emph{transformed $\mathcal{C}$-induced facets} of $P_{(\widehat G, \widehat R)}$.

\subsection[Asserting complexity of transformed \texorpdfstring{$\mathcal{C}$}{C}-induced facets]{Asserting complexity of transformed \texorpdfstring{\boldmath$\mathcal{C}$\unboldmath}{C}-induced facets}
\label{sec:complexity-analysis}

Note that transformed $\mathcal{C}$-induced facets of $\domP(G)$ inherit the complexity of the original facets in terms of coefficients in the linear inequality representations, as these remain unchanged through the transformation.
However, while facets of $\domP(G)$ have uniquely defined normal vectors (up to scaling) because $\domP(G)$ is full-dimensional, this is not the case for $P_{(\widehat G, \widehat R)}$ because $$\dim(P_{(\widehat G, \widehat R)}) = |\widehat E| - |\widehat V| + 1\enspace.$$
In fact, the minimal subspace containing $P_{(\widehat G, \widehat R)}$ is determined by the degree constraints $y(\delta(v)) = 1$ for all $v\in \widehat V$.
In other words, any facet-defining constraint $a^\top y \geq b$ of $P_{(\widehat G, \widehat R)}$ can be equivalently expressed in the form
\begin{equation}\label{eq:equivalent-facet-rep}
    \mu \cdot a^\top y + \sum_{v\in \widehat V} \lambda_v \cdot y(\delta(v)) \geq \mu \cdot b + \sum_{v\in \widehat V} \lambda_v
\end{equation}
for arbitrary $\mu > 0$ and $\lambda \in \mathbb{R}^{\widehat V}$.

Hence, to make sure that a facet of $P_{(\widehat G, \widehat R)}$ admits no low complexity representation, we need to show than every such equivalent representation has high complexity. 
(Note that we typically assume that facets are scaled so that they have integral coefficients and integral right-hand side.)

To this end, we change our viewpoint and treat coefficient vectors $a\in\mathbb{Z}^{\widehat E}$ as matrices in $\mathbb{Z}^{V^+ \times V^-}$, with rows indexed by $V^+$ and columns indexed by $V^-$.
More precisely, for $u\in V^+$ and $v\in V^-$, the matrix entry $a(u, v)$ shall be the coefficient $a_{\{u, v\}}$ of the variable associated with the edge $\{u, v\}\in \widehat E$.
In \cref{fig:matrix-example}, we exemplify this matrix viewpoint by presenting the matrix corresponding to the transformation of the $C$-induced facet of $\domP(K_9)$ presented earlier in \cref{fig:clique-constraint}.
To see how the matrix representation changes under moving to an equivalent form as in \eqref{eq:equivalent-facet-rep}, which we write as $(a^{\mu,\lambda})^\top y \geq b^{\mu,\lambda}$ for $\mu > 0$ and $\lambda \in \mathbb{R}^{\widehat V}$, observe that for $u\in V^+$ and $v\in V^-$, we have
\begin{equation}\label{eq:matrix-representation-equivalence}
	a^{\mu,\lambda}(u, v) = \mu \cdot a(u, v) + \lambda_{u} + \lambda_{v}\enspace.
\end{equation}
In other words, the matrix $a^{\mu,\lambda}$ can be obtained from the matrix $a$ by scaling with $\mu$, and then adding a value $\lambda_u$ to all entries in row $u$ and a value $\lambda_v$ to all entries in column $v$, for all $u\in V^+$ and $v\in V^-$.
Thus, the analysis of equivalent facet representations boils down to analyzing the impact of these simple matrix operations on the matrix obtained from the coefficient vector $a$.

\begin{figure}[h]
	\centering
	\setlength{\tabcolsep}{0.05cm}
	\begin{tabular}{c|@{\hspace{0.1cm}}ccccccc@{\hspace{0.2cm}}cc}
		& $v_1^-$ & $v_3^-$ & $v_5^-$ & $v_7^-$ & $v_2^-$ & $v_4^-$ & $v_6^-$ & $s^-$ & $t^-$ \\[0.05cm]
		\hline
		\rule{0pt}{1.2em}
		$v_1^+$ & 0 & 5 & 3 & 1 & 1 & 3 & 5 & 3 & 3 \\
		$v_3^+$ & 5 & 0 & 5 & 3 & 1 & 1 & 3 & 3 & 3 \\
		$v_5^+$ & 3 & 5 & 0 & 5 & 3 & 1 & 1 & 3 & 3 \\
		$v_7^+$ & 1 & 3 & 5 & 0 & 5 & 3 & 1 & 3 & 3 \\
		$v_2^+$ & 1 & 1 & 3 & 5 & 0 & 5 & 3 & 3 & 3 \\
		$v_4^+$ & 3 & 1 & 1 & 3 & 5 & 0 & 5 & 3 & 3 \\
		$v_6^+$ & 5 & 3 & 1 & 1 & 3 & 5 & 0 & 3 & 3 \\[0.2cm]
		$s^+$   & 3 & 3 & 3 & 3 & 3 & 3 & 3 & 0 & 1 \\
		$t^+$   & 3 & 3 & 3 & 3 & 3 & 3 & 3 & 1 & 0 \\
	\end{tabular}
	\caption{A matrix representation of the coefficient vector of the translated $C$-induced facet of $\domP(K_9)$ presented earlier in \cref{fig:clique-constraint}.}
	\label{fig:matrix-example}
\end{figure}

Our complexity results focus on lower bounds on the largest entry in absolute value and the number of different entries that can be achieved through the matrix operations described above.
Using specific invariant alternating sums of matrix entries, we can provide a linear lower bound on the first quantity.
Interestingly, for the number of different entries, an analysis based on the pigeonhole principle that takes into account only two rows of the matrix already results in a lower bound of order $\Omega(\sqrt{n})$, completing  the proof of \cref{thm:diverse_and_large_coeff} (see %
\cref{sec:complex-facets}
for details).

While the precise asymptotic behavior of the smallest achievable number of different coefficients in inequality representations of transformed $\mathcal{C}$-induced facets is left open, we present a way to reduce to $O(n^{\sfrac{2}{3}})$ many different coefficients (again, we refer to %
\cref{sec:complex-facets}
for details).

\section{Bimodular polytopes with complex facets}%
\label{sec:bimodular_polytope}

In this section, we prove \cref{thm:bimodular_intro} by showing that a natural bimodular formulation of the red-odd perfect matching problem admits the postulated facet complexity.
To this end, let $G=(V,E)$ be a bipartite graph with red edges $R\subseteq E$, and observe that the integral solutions of the system
\begin{align}\label{eq:bimodular_system}
\begin{array}{rclr}
  x(\delta(v)) & = & 1 & \forall v\in V \\
  x(R) - 2y & = & 1 &\\ 
  x & \in & \mathbb{R}^E_{\geq 0} \\
  y & \in & \mathbb{R}_{\geq 0} 
\end{array}
\end{align}
are precisely the pairs $(\chi^M, r)$, where $M$ is a red-odd perfect matching in $G$ and $r = \frac{|M\cap R| - 1}{2}$.
Also, the system \eqref{eq:bimodular_system} is indeed bimodular: one can observe that all full-rank subdeterminants of the constraint matrix are in $\{-2,0,2\}$.

Let $P_{\textup{bimod}}$ be the convex hull of all integral solutions to \eqref{eq:bimodular_system}.
By the above observation, we have
\[
P_{\textup{bimod}} = \left\{ \left(x,\frac{x(R)-1}{2}\right) : x\in P_{(G,R)} \right\}\enspace,
\]
hence $P_{\textup{bimod}}$ is obtained from $P_{(G,R)}$ by lifting along the new variable $y$ to the hyperplane defined by $y = \frac{x(R)-1}{2}$.
Thus, we get the following.

\begin{observation}\label{obs:facet_translation}
For every facet $F$ of $P_{\textup{bimod}}$, there is a facet $F'$ of $P_{(G,R)}$ such that
\[
  F = \left\{ \left(x,\frac{x(R)-1}{2}\right) : x\in F'\right\}\enspace,
\]
and vice versa. In particular, for all $a\in\mathbb{R}^E$ and $b,c\in\mathbb{R}$, we have $F=\{x\in P_{\textup{bimod}}\colon a^\top x + c y = b\}$ if and only if
\[
  F' = \left\{x\in P_{(G,R)}\colon\left(a+\frac{c}{2}\chi^R\right)^\top x = b + \frac{c}{2}\right\}\enspace.
\]
\end{observation}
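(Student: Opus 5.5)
The plan is to exploit the fact that $P_{\textup{bimod}}$ is the image of $P_{(G,R)}$ under the affine map
\[
\phi\colon x\mapsto \left(x,\tfrac{x(R)-1}{2}\right),
\]
which is injective. Any affine bijection between two polytopes (here between $P_{(G,R)}$ and its image, $\phi$ being an affine isomorphism onto the graph hyperplane $H=\{(x,y)\colon y=\tfrac{x(R)-1}{2}\}$) maps faces to faces and preserves dimensions. It therefore induces a bijection between the facets of $P_{(G,R)}$ and those of $P_{\textup{bimod}}$ via $F=\phi(F')$. This gives the first part of the observation.

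First I would verify the identity $P_{\textup{bimod}}=\phi(P_{(G,R)})$. The integral points of \eqref{eq:bimodular_system} are exactly $\phi(\chi^M)$ for odd-red perfect matchings $M$, and convex hulls commute with affine maps. Next, for a face $F'$ of $P_{(G,R)}$ with a valid inequality $\alpha^\top x\le\beta$ exposing it, the inequality $\alpha^\top x+0\cdot y\le\beta$ is valid for $P_{\textup{bimod}}$ and exposes $\phi(F')$. Conversely, a valid inequality $a^\top x+cy\le b$ for $P_{\textup{bimod}}$ pulls back along $\phi$ to an affine functional on $P_{(G,R)}$. Since $\phi$ preserves affine dimension, facets correspond to facets.

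For the second part, I would substitute $y=\tfrac{x(R)-1}{2}$, which holds on all of $P_{\textup{bimod}}$:
\[
a^\top x+cy=b \iff a^\top x+\tfrac{c}{2}\chi^{R\top}x-\tfrac c2=b \iff \left(a+\tfrac c2\chi^R\right)^\top x=b+\tfrac c2 .
\]
Hence $\{(x,y)\in P_{\textup{bimod}}\colon a^\top x+cy=b\}=\phi(\{x\in P_{(G,R)}\colon (a+\tfrac c2\chi^R)^\top x=b+\tfrac c2\})$, and injectivity of $\phi$ yields the equivalence $F=\ldots\iff F'=\ldots$. If one wants the hyperplanes to be supporting (valid), the same substitution transfers validity of the inequality in both directions.

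I expect no real obstacle. The only point needing care is making precise that facets correspond, i.e.\ that $\dim\phi(F')=\dim F'$ and that every face of $P_{\textup{bimod}}$ arises from a face of $P_{(G,R)}$. Both follow from $\phi$ being an affine bijection onto $H\supseteq P_{\textup{bimod}}$.
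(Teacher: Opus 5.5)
Your proposal is correct and follows the paper's own reasoning: the paper likewise derives the observation directly from $P_{\textup{bimod}}$ being the image of $P_{(G,R)}$ under the injective affine lifting $x\mapsto\left(x,\frac{x(R)-1}{2}\right)$, and obtains the coefficient identity by substituting $y=\frac{x(R)-1}{2}$. You simply spell out the face-correspondence and dimension-preservation details that the paper leaves implicit.
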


This observation has the following two consequences.
First, it allows to directly lift any complex facet $F'$ of $P_{(G,R)}$ to a facet $F$ of $P_{\textup{bimod}}$.
Second, it implies that any representation of such a lifted facet $F$ through a linear equality can be translated to a representation of the original facet $F'$.
Because the transformation detailed in \cref{obs:facet_translation} can at most double the number of different coefficients, and increase the largest absolute value by a factor of at most $\frac32$, \cref{thm:diverse_and_large_coeff} directly implies the following strengthening of \cref{thm:bimodular_intro}.

\begin{corollary}\label{thm:diverse_and_large_coeff_bimodular}
  For every even $n\in \mathbb{Z}_{>0}$, there exists a bipartite graph $G = (V,E)$ on $n$ vertices, red edges $R\subseteq E$, and an exponentially (in $n$) sized family $\mathcal{F}$ of facets of the polytope $P_{\textup{bimod}}$ with the following property:
  For every $F\in\mathcal{F}$ and all $a\in \mathbb{Z}^E$, $b\in \mathbb{Z}$ such that $F = \{x\in P_{\textup{bimod}}\colon a^\top x=b\}$, we have
  \[
    \max_{e\in E} |a_e| \geq \frac{n-4}{3}
    \qquad\text{and}\qquad
    \lvert\{a_e\colon e\in E\}\rvert\geq \sqrt{\frac{n-1}{8}}\enspace.
  \]
\end{corollary}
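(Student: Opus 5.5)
Take the graph $G$, red edges $R$, and family $\mathcal{F}'$ of facets of $P_{(G,R)}$ from \cref{thm:diverse_and_large_coeff}. Let $\mathcal{F}$ be the family of lifted facets $F=\{(x,\frac{x(R)-1}{2})\colon x\in F'\}$, one for each $F'\in\mathcal{F}'$. By \cref{obs:facet_translation}, each such $F$ is a facet of $P_{\textup{bimod}}$, and distinct $F'$ give distinct $F$, so $\mathcal{F}$ is exponentially large. Now fix $F\in\mathcal{F}$ and an integral representation $F=\{(x,y)\in P_{\textup{bimod}}\colon a^\top x + c y = b\}$. Here the coefficient vector is $(a,c)$, with $c$ the coefficient of $y$. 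By \cref{obs:facet_translation}, $F'=\{x\in P_{(G,R)}\colon a'^\top x = b'\}$, where $a'=a+\frac c2\chi^R$ and $b'=b+\frac c2$. To obtain integral data, multiply by $2$: the vector $2a'=2a+c\chi^R\in\mathbb{Z}^E$ together with $2b'\in\mathbb{Z}$ also represents $F'$. We then apply \cref{thm:diverse_and_large_coeff} to $2a'$.

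\emph{Distinct coefficients.} Every entry of $2a'$ lies in the set $\{2a_e\}\cup\{2a_e+c\}$, which has at most $2N$ elements, where $N$ is the number of distinct values among the entries of $(a,c)$. Hence $2N\ge\sqrt{(n-1)/2}$, so $N\ge\frac12\sqrt{(n-1)/2}=\sqrt{(n-1)/8}$.

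\emph{Largest coefficient.} Let $A=\max(\max_e|a_e|,|c|)$. Each entry of $2a'$ is bounded in absolute value by $2|a_e|+|c|\le 3A$. \Cref{thm:diverse_and_large_coeff} gives that some entry of $2a'$ has absolute value at least $\frac{n-4}{2}$, so $3A\ge\frac{n-4}{2}$ and thus $A\ge\frac{n-4}{6}$. This falls short of the stated $\frac{n-4}{3}$ by a factor of $2$, and closing this gap is the main obstacle.

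Two routes seem available. The first is to avoid rescaling by $2$. For $a'$ itself, the row/column-shift invariants behind \cref{thm:diverse_and_large_coeff} — alternating sums over $2\times2$ minors, which cancel the potentials $\lambda$ — are unaffected by half-integrality. So one could try to show that the linear lower bound holds for any real representation normalised so that these invariants are fixed, rather than only for integral ones. The second is to exploit a parity argument: in the bipartite graph, $\chi^R$ is not in the span of the degree constraints, but its contribution to the alternating-sum invariants is bounded. I would first try to redo the invariant-sum estimate from \cref{sec:complex-facets} directly for $a+\frac c2\chi^R$. Each invariant involves a fixed number of red edges, so $c$ enters the invariant only with a bounded multiple. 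This would give a bound of the form $\max|a_e|+|c|\cdot O(1)\gtrsim\frac{n-4}{2}$, and the claimed $\frac{n-4}{3}$ would follow if that constant is favourable. If this does not work, I would settle for the factor-$\frac32$ reasoning in the text, which bounds $\max|a'_e|\le\frac32 A$ directly, and keep track of how integrality of the normal vector is used in \cref{thm:diverse_and_large_coeff}.
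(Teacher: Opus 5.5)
Your bound on the number of distinct coefficients is correct. It is the paper's ``at most doubling'' argument; rescaling to $2a+c\chi^{\widehat R}$ is a harmless precaution, since that part of \cref{lem:complexity-C-induced} never uses integrality. Your bound $\frac{n-4}{6}$ on the largest coefficient is also rigorous.

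What is missing is the stated $\frac{n-4}{3}$. The paper obtains it exactly by your fallback. It bounds the entries of $a+\frac c2\chi^{\widehat R}$ by $\frac32\max(\max_e|a_e|,|c|)$ and then applies \cref{thm:diverse_and_large_coeff} to that vector. As you suspected, this step is not legitimate. Write $a+\frac c2\chi^{\widehat R}=\mu\alpha+\Lambda$, with $\alpha$ the canonical transformation of the $C$-induced constraint and $\Lambda$ a row/column shift. For odd $c$ the vector is only half-integral, and integrality is precisely what forces $|\mu|\ge 1$ in \cref{lem:complexity-C-induced}. With $c$ odd, the $(s,t)$ and $(v_1,v_2,v_3)$ invariants only give $\mu\in\frac12+\mathbb{Z}$. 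Neither of your routes can repair this, because for this family of facets the bound $\frac{n-4}{3}$ is false.

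\emph{Counterexample.} Let $n=2k+3$, $q=\lceil\frac{k-1}{2}\rceil$, $p=k-1-q$, $m=\lfloor\frac q2\rfloor$ and $c=1$. Define $a$ as follows:
\begin{itemize}
\item $a(u^+,v^-)=\frac{\ell(\{u,v\})-1}{2}-p$ for distinct $u,v\in V(C)$;
\item $a(u^+,u^-)=-p$ for $u\in V(C)$;
\item $a=m$ on every edge with exactly one endpoint in $\{s^+,s^-,t^+,t^-\}$;
\item $a=2m-q$ on the four edges among $s^\pm,t^\pm$.
\end{itemize}
Checking entrywise gives $a+\frac12\chi^{\widehat R}=\frac12\alpha+\Lambda$. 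Here $\Lambda(u^+,v^-)=\lambda_{u^+}+\lambda_{v^-}$, with $\lambda=-\frac p2$ on all $u^\pm$ for $u\in V(C)$ and $\lambda=m-\frac q2$ on $s^\pm,t^\pm$.

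Since $\Lambda^\top x$ is constant on the affine hull of $P_{(\widehat G,\widehat R)}$, this vector defines the same facet $F'$. By \cref{obs:facet_translation}, the integral pair $(a,c)$ therefore represents the lifted facet $F$; the right-hand side is automatically integral. All coefficients lie in $[-q,q]$, so the largest one is $\lceil\frac{n-5}{4}\rceil$. This is strictly below $\frac{n-4}{3}$ for every odd $n\ge 9$. For $n=9$ every coefficient, $c$ included, lies in $\{0,\pm1\}$; with $k=3$ the facet is $a^\top x+y\ge -6$.

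\emph{Matching lower bound.} Your first route does give a valid bound, but only of order $\frac n4$. From $2\mu-c\in\mathbb{Z}$ you get $|\mu|\ge\frac12$. The combination $a(v_1^+,v_3^-)+a(v_{2k+1}^+,v_2^-)-a(v_1^+,v_2^-)-a(v_{2k+1}^+,v_3^-)$ involves only red edges, so both $c$ and $\Lambda$ cancel. It equals $\mu\bigl((2k-1)+(2k-1)-1-3\bigr)=\mu(4k-6)$, which gives $\max_e|a_e|\ge\frac{n-6}{4}$.

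So, for this family, the first inequality of the corollary must be weakened to order $\frac n4$, or to your $\frac{n-4}{6}$. The distinct-coefficient bound, and hence \cref{thm:bimodular_intro}, is unaffected.
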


Note that the number of variables in the bimodular formulation of a red-odd perfect matching problem needs one variables per edge of the bipartite graph plus one additional variable (for $r$).
Hence, a dense $n$-vertex bipartite graph has $m=\Theta(n^2)$ edges, and the number of different coefficients in \cref{thm:diverse_and_large_coeff_bimodular} is in $\Omega(\sqrt{n}) = \Omega(m^{1/4})$, as stated in \cref{thm:bimodular_intro}.

\begingroup
\sloppy
\printbibliography
\endgroup

\appendix

\section[Missing proofs from Section~\ref{sec:approach}]{Missing proofs from \cref{sec:approach}}\label{sec:complex-facets}

\subsection[Facets of \texorpdfstring{$\domP(K_n)$}{the dominant of the odd cycle polytope of the complete graph on n vertices} with complex coefficients]{\boldmath Facets of \texorpdfstring{$\domP(K_n)$}{the dominant of the odd cycle polytope of the complete graph on n vertices} with complex coefficients\unboldmath}

We start by providing a detailed proof of \cref{lem:c-induced-facet}, namely that $\mathcal{C}$-induced constraints are facet-defining for $\domP(K_n)$.
As already pointed out in \cref{sec:approach}, the idea is to identify many odd cycles that are tight for a given $C$-induced constraint, and then show that the incidence vectors of these cycles linearly span the entire space $\mathbb{R}^E$.

\begin{proof}[Proof of \cref{lem:c-induced-facet}]
	Let $n=2k+3$ for some $k\in\mathbb{Z}_{\geq 0}$, let $C = (v_1, v_2, \dots, v_{2k+1})$ be an odd cycle in $G = K_{2k+3}$, and let $s, t$ be the two vertices in $V\setminus V(C)$.
	We show that the $C$-induced constraint 
	\begin{equation*}
		x(\{s,t\}) + k \cdot x(\delta(C)) + \sum_{e\in E[V\setminus \{s,t\}]} \ell(e) \cdot x(e) \geq 2k+1
	\end{equation*}
	is facet-defining for $\domP(G)$.
	
	First, we show that the constraint is valid for $\domP(G)$.
	We start with the observation that every variable $x_e$, for $e\in E$, has a strictly positive coefficient in the $C$-induced constraint.
	Thus, it suffices to show that the constraint is satisfied by the characteristic vector $\chi^D$ for every odd cycle $D$ in $G$.
	We distinguish two cases as follows.
	\begin{itemize}
	\item \emph{$D$ contains an edge of $\delta(C)$.}
	Then $D$ contains at least two edges of $\delta(C)$, plus at least one other edge.
	In this case, the two edges in $\delta(C)$ contribute at least $2k$ to the left-hand side of the $C$-induced constraint. The third edge contributes at least $1$, hence the vector $\chi^D$ satisfies the $C$-induced constraint.

	\item \emph{$D$ does not contain an edge of $\delta(C)$.}
	In this case, $D$ only connects vertices of $C$.
	We transform $D$ into a closed walk $\widebar{D}$ using only edges of $C$ by replacing each edge of the form $\{v_i, v_j\}$ by the odd-length $v_i$-$v_j$ path on $C$.
	By construction, $\widebar{D}$ is thus a concatenation of an odd number of odd-length paths, and hence of odd length itself.
	More precisely, as we replace $\{v_i,v_j\}$ by a path of length $\ell(\{v_i,v_j\})$, which is the coefficient of $x_{\{v_i,v_j\}}$ in the $C$-induced constraint, the length of $\widebar{D}$ is exactly the value of the left-hand side of the $C$-induced constraint evaluated at $x=\chi^D$.
	To show validity of the constraint, we thus need to show that the length of $\widebar{D}$ is at least $2k+1$.
	To this end, it is enough to show that $\widebar{D}$ contains every edge of $C$ at least once, as $C$ has length $2k+1$.

	Assume, for the sake of deriving a contradiction, that $\widebar{D}$ does not contain every edge of $C$, say, by symmetry, $\{v_{2k+1}, v_1\}\notin \widebar{D}$.
	Being a closed walk, $\widebar{D}$ crosses every cut an even number of times.
	For a fixed $i\in [2k]$, consider the cut induced by $S_i = \{v_1,\dots, v_i\}$.
	Then, $\delta_C(S_i) = \{\{v_{2k+1},v_1\}, \{v_i, v_{i+1}\}\}$.
	By assumption, $\widebar{D}$ does not contain $\{v_{2k+1}, v_1\}$, so we conclude that it covers the edge $\{v_i, v_{i+1}\}$ an even number of times.
	As $i$ was arbitrary in $[2k]$, we obtain that the total number of edges in $\widebar{D}$ is even.
	This is a contradiction to the previous observation that $\widebar{D}$ has odd length.
	\end{itemize}

	Next, we show that the $C$-induced constraint is facet-defining by investigating odd cycles that are tight for the constraint.
	Write the constraint as $a^\top x \geq b$ with $a\in \mathbb{Z}_{\geq 0}^E$ and $b = 2k+1$.
	Dominants are full-dimensional by definition, hence \smash{$\dim(\domP(K_{2k+3})) = |E|$}, and it suffices to show that the constraint defines a face $F \coloneqq \{x\in \domP(G)\colon a^\top x = b\}$ of dimension $|E|-1$.
	Let $S \coloneqq \linspan(\{\chi^C\colon C \text{ odd cycle in }G, \chi^C\in F\})$ be the linear span of the incidence vectors of odd cycles that are tight for the $C$-induced constraint.
	The origin $x=0$ is not feasible for the $C$-induced constraint.
	Thus, for the desired $\dim(F) = |E| - 1$ to hold, it is enough to show $S = \mathbb{R}^E$.
	We prove the latter by showing that $\chi^f\in S$ for every edge $f\in E$ by explicitly revealing linear combinations of points in $S$ that yield $\chi^f$, starting from points $\chi^C$ that we know to be in $S$ by definition.

	To this end, let us name some odd cycles that we are about to use explicitly.
	First, for $i\in [2k+1]$, let $T^i$ be the triangle $(s,t,v_i)$, and let $T_s^i$ and $T_t^i$ be the triangles $(s,v_i, v_{i+1})$ and $(t, v_i, v_{i+1})$, respectively.
	For $i,j\in [2k+1]$ with $i\neq j$, define $C_{i,j}$ to be the odd cycle obtained by combining $\{v_i,v_j\}$ with the even length path $P_{i,j}$ from $v_i$ to $v_j$ on $C$.
	A visualization of these definitions can be found in \cref{fig:visualization-T-C}.
	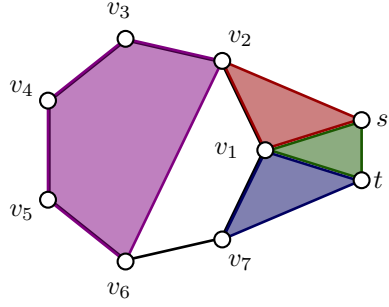
\begin{figure}[ht]
		\centering
	\begin{tikzpicture}[scale=0.4]

		\pgfdeclarelayer{background}
		\pgfsetlayers{background,main}

		\coordinate (v) at (7,1);
		\coordinate (w) at (7,-1);

		\begin{scope}[every	node/.style={node_style}]
			\node (v_node) at (v) {};
			\node (w_node) at (w) {};
		\end{scope}

		\node[right=2pt] at (v) {$s$};
		\node[right=1pt] at (w) {$t$};

		\begin{pgfonlayer}{background}
			\node[draw, line width = 1pt, regular polygon, regular polygon sides=7, minimum size=3cm, rotate=-90] (a) at (0,0) {};

		\newcommand{\shift}{1.3pt}
		\draw[darkgreen, line width=1pt, fill, fill opacity=0.5] (v) -- ($(w)+(0,\shift)$) -- ($(a.corner 1)+(0,\shift)$) -- ($(a.corner 1)-(0,\shift)$) -- ($(v)+(0,-\shift)$);
		\draw[darkred, line width=1pt, fill, fill opacity=0.5] ($(a.corner 1)+(0,\shift)$) -- ($(v)+(0, \shift)$) -- (v) -- (a.corner 2) -- (a.corner 1);
		\draw[darkblue, line width=1pt, fill, fill opacity=0.5] (w) -- (a.corner 7) -- ($(a.corner 1)-(0,\shift)$) -- ($(w)-(0,\shift)$);
		\draw[violet, line width=1pt, fill, fill opacity=0.5] (a.corner 2) -- (a.corner 3) -- (a.corner 4) -- (a.corner 5) -- (a.corner 6) -- (a.corner 2);

		\foreach \i in {1,2,...,7} {
			\node[node_style] (\i) at (a.corner \i) {};
		}

		\node[regular polygon, regular polygon sides=7, minimum size=3.8cm, rotate=-90] (labels) at (0,0) {};
		\foreach \i in {2,...,7} {
			\node (\i) at (labels.corner \i) {$v_\i$};
		}
		\node[xshift=-0.9cm] (1) at (labels.corner 1) {$v_1$};
	\end{pgfonlayer}
	\end{tikzpicture}
 		\caption{
			Cycles $T^i$, $T^i_s$, $T^i_t$, and $C_{i,j}$ for $k=3$.
			$T^1$ is marked in green, $T^1_s$ in red, $T^7_t$ in blue, and $C_{2,6}$ in violet.
		}
		\label{fig:visualization-T-C}
	\end{figure}
	We observe that all these odd cycles correspond to points of $\domP(G)$ that are tight for the $C$-induced constraint:
	\begin{itemize}
		\item For every $i\in [2k+1]$, the triangles $T^i, T_s^i,T_t^i$ all contain two edges of $\delta(C)$ with a coefficient of $k$ and one edge with a coefficient of $1$ in the $C$-induced constraint.
		Thus, $\chi^{T^i}, \chi^{T_s^i},\chi^{T_t^i}\in F$.
		\item For $i,j\in [2k+1]$ with $i\neq j$, we have $a^\top \chi^{C_{i,j}} = |P_{i,j}| + \ell(\{v_i, v_j\}) = |C| = 2k+1$.
		Thus, $\chi^{C_{i,j}}\in F$.
	\end{itemize}
	Additionally, we have $a^\top \chi^C = |C| = 2k+1$, and thus $\chi^C\in F$.
	We are now ready to show that $\chi^f\in S$ for every edge $f\in E$.

	First, consider $f = \{s,t\}$.
	The triangles $\{T^i\}_{i\in[2k+1]}$ cover every edge of $\delta(C)$ once, while the triangles $\{T^i_s\}_{i\in[2k+1]}\cup \{T^i_t\}_{i\in[2k+1]}$ cover each edge of $\delta(C)$ twice.
	To be more precise, we have
	\begin{equation*}
		\sum_{i=1}^{2k+1} \left(2\chi^{T^i} - \chi^{T^i_s} - \chi^{T^i_t} \right)=		(2k+1)\chi^{\{s,t\}} - 2\chi^C \enspace.
	\end{equation*}
	In other words, this expresses  $\chi^{\{s,t\}}$ as a linear combination of points in $S$, hence we conclude $\chi^{\{s,t\}}\in S$.
	Next, consider any edge $\{v_i,v_{i+1}\}$ for $i\in [2k+1]$.
	We have (see \cref{fig:clique-constraint-cycle-edge} for a visualization of the involved triangles)
	\begin{equation*}
		\chi^{T^i_s} + \chi^{T^i_t} - \chi^{T^i} - \chi^{T^{i+1}} =
		2\left(\chi^{\{v_i, v_{i+1}\}} - \chi^{\{s,t\}}\right)\enspace.
	\end{equation*}
	As $\chi^{T^i_s}, \chi^{T^i_t}, \chi^{T^i}, \chi^{T^{i+1}}\in F$ and $\chi^{\{s,t\}}\in S$ by the above, we get $\chi^{\{v_i, v_{i+1}\}}\in S$.
	\begin{figure}[ht]
		\centering

		\begin{tikzpicture}[scale=0.75]

			\pgfdeclarelayer{background}
			\pgfsetlayers{background,main}

			\begin{scope}[every
				node/.style={node_style
			}]
				\node (v1) at (0,1) {};
				\node (v2) at (0,-1) {};

				\node (v) at (2,1) {};
				\node (w) at (2,-1) {};
			\end{scope}

			\node[right=2pt] at (v) {$s$};
			\node[right=1pt] at (w) {$t$};
			\node[left=2pt] at (v1) {$v_1$};
			\node[left=2pt] at (v2) {$v_2$};

			\begin{pgfonlayer}{background}
				\draw[darkgreen, line width=1pt, fill=green] (v) -- (v2) -- (v1) -- (v);
				\draw[darkgreen, line width=1pt] (w) -- (v2) -- (v1) -- (w);
				\fill[fill = darkgreen, opacity=0.5] (w.center) --
				(v2.center) -- (v1.center);
				\fill[fill = darkgreen, opacity=0.5] (v.center) --
				(v2.center) -- (v1.center);
			\end{pgfonlayer}

			\begin{scope}[xshift=5cm]

				\begin{scope}[every
					node/.style={node_style}]
					\node (v1) at (0,1) {};
					\node (v2) at (0,-1) {};

					\node (v) at (2,1) {};
					\node (w) at (2,-1) {};
				\end{scope}
				\node[right=2pt] at (v) {$s$};
				\node[right=1pt] at (w) {$t$};
				\node[left=2pt] at (v1) {$v_1$};
				\node[left=2pt] at (v2) {$v_2$};

				\begin{pgfonlayer}{background}
					\draw[darkred, line width=1pt, fill=darkred, fill opacity=0.5]
					(v.center) -- (w.center) -- (v1.center) -- (v.center);
					\draw[darkred, line width=1pt, fill=darkred, fill opacity=0.5]
					(v.center) -- (w.center) -- (v2.center) -- (v.center);
				\end{pgfonlayer}
			\end{scope}

		\end{tikzpicture}
 		\caption{Adding the two green triangles and subtracting the two red ones gives $\chi^{\{v_1,v_2\}}-\chi^{\{s,t\}}$.}
		\label{fig:clique-constraint-cycle-edge}
	\end{figure}
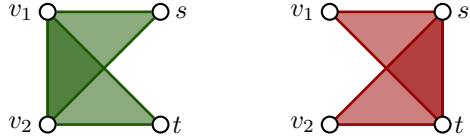

	Next, consider any edge $\{v_i,v_j\}$ for $i,j\in [2k+1]$ with $|i-j|\notin\{0,1\} \pmod*{2k+1}$, i.e., an edge connecting non-consecutive vertices of $C$.
	Then $\chi^{C_{i,j}}$ is in $F$ and $C_{i,j}$ contains, except for the edge $\{v_i,v_j\}$, only edges from $C$.
	For $f\in C$, we already know $\chi^f\in S$, hence we conclude $\chi^{\{v_i,v_j\}}\in S$.
	
	Finally, it remains to show $\chi^f\in S$ for edges $f\in\delta(C)$.
	Without loss of generality, we consider an edge $f=\{v_i,s\}\in\delta(C)\cap\delta(s)$; the proof for edges in $\delta(C)\cap\delta(t)$ is symmetric.
	Observe that we have
	\begin{align*}
			\chi^{T_s^i} - \chi^{T_s^{i+1}} + \ldots - \chi^{T_s^{i-2}} + \chi^{T_s^{i-1}}
		= 2 \chi^{\{v_i,s\}} + \chi^{\{v_i, v_{i+1}\}} - \chi^{\{v_{i+1}, v_{i+2}\}} + \ldots - \chi^{\{v_{i-2}, v_{i-1}\}} + \chi^{\{v_{i-1}, v_i\}}\enspace.
	\end{align*}
	Here, we take the sums cyclically around $C$ and use alternating signs (see \cref{fig:facet-cut-edge} for a visualization).
	All terms on the left-hand side of this equality are in $F$.
	Additionally, we already showed that $\chi^{\{v_j, v_{j+1}\}}\in S$ for all $j\in [2k+1]$.
	Hence, also $\chi^{\{v_i,s\}}\in S$.
	\begin{figure}[ht]
		\begin{subfigure}{0.48\textwidth}
			\begin{center}
			\begin{tikzpicture}[scale=0.5, rotate=-90]

				\pgfdeclarelayer{background}
				\pgfsetlayers{background,main}

				\begin{scope}[every
					node/.style={node_style}]
					\node (v) at (0,6) {};
				\end{scope}

				\node[above=2pt] at (v) {$s$};

				\begin{pgfonlayer}{background}
					\node[regular polygon, regular polygon sides=7, minimum size=3cm, rotate=-90]
					(a) {};
				\end{pgfonlayer}

				\foreach \i in {1,2,...,7} {
					\node[node_style] (\i) at (a.corner \i) {};
				}
				\node (8) at (a.corner 1) {};

				\begin{pgfonlayer}{background}
					\foreach \i in {1,3,...,7} {

						\pgfmathtruncatemacro{\j}{\i+1 }

						\draw[line width=1pt, darkgreen, fill=darkgreen, fill
						opacity=0.5] (v.center) -- (\i.center) -- (\j.center) -- (v.center);
					}
				\end{pgfonlayer}

		\node[regular polygon, regular polygon sides=7, minimum size=3.8cm, rotate=-90] (labels) {};
		\foreach \i in {2,3,...,7} {
			\node (\i) at (labels.corner \i) {$v_\i$};
		}
		\node[xshift=-0.9cm] (1) at (labels.corner 1) {$v_1$};
			\end{tikzpicture}
 		\end{center}
		\end{subfigure}
		\hfill
		\begin{subfigure}{0.48\textwidth}
			\begin{center}
			\begin{tikzpicture}[scale=0.5, rotate=-90]

				\pgfdeclarelayer{background}
				\pgfsetlayers{background,main}
				\begin{scope}[every
					node/.style={node_style}]
					\node (v) at (0,6) {};
				\end{scope}

				\node[above=2pt] at (v) {$s$};

				\begin{pgfonlayer}{background}
					\node[regular polygon, regular polygon sides=7, minimum	size=3cm, rotate=-90] (a) {};
				\end{pgfonlayer}

				\foreach \i in {1,2,...,7} {
					\node[node_style] (\i) at (a.corner \i) {};
				}
				\node (8) at (a.corner 1) {};

				\begin{pgfonlayer}{background}
					\foreach \i in {2,4,6} {

						\pgfmathtruncatemacro{\j}{\i+1 }

						\draw[line width=1pt, darkred, fill=red, fill=darkred,
						fill opacity=0.5] (v.center) --  (\i.center) --
						(\j.center) -- (v.center);
					}
				\end{pgfonlayer}

		\node[regular polygon, regular polygon sides=7, minimum size=3.8cm, rotate=-90] (labels) {};
		\foreach \i in {2,3,...,7} {
			\node (\i) at (labels.corner \i) {$v_\i$};
		}
		\node[xshift=-0.9cm] (1) at (labels.corner 1) {$v_1$};
			\end{tikzpicture}
 		\end{center}
		\end{subfigure}
		\caption{To show that $\chi^{\{v_i, s\}}\in S$, we use an alternating sum of incidence vectors of triangles.
			Adding the incidence vectors of all green triangles on the left and subtracting those of the red ones on the right, we end up with a linear combination containing twice the edge $\{v_1,s\}$, and additionally the edges of $C$ with alternating coefficients $1$ and $-1$.}
		\label{fig:facet-cut-edge}
	\end{figure}
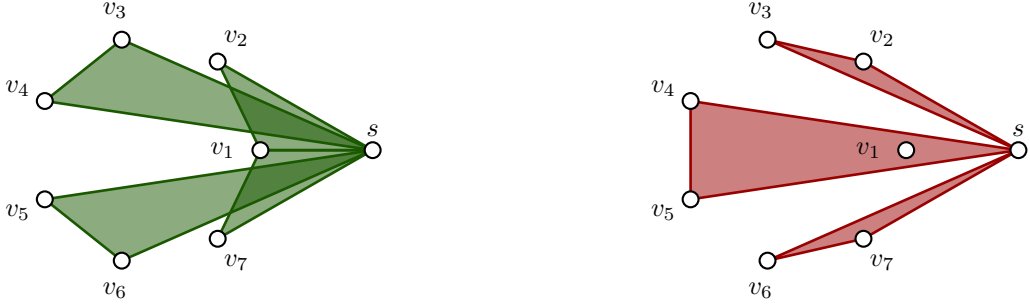

	Altogether, we thus showed $\chi^f\in S$ for every $f\in E$, hence we conclude $\dim(S) = |E|$, finishing the proof.
\end{proof}

We remark that the $\mathcal{C}$-induced facets are also facets of $P_{\textup{odd}}(K_n)$, as the coefficient vector is strictly positive.
(To see this, one could also observe that the arguments in the above proof only exploit characteristic vectors of odd cycles, and not of general points in the dominant.)

\subsection{Extending to facets of the odd matching polytope}

In the following, we prove \cref{lem:expressing-facet-transfer,lem:n-1expressing}, showing that we can transform the $\mathcal{C}$-induced facets of $\domP(K_n)$ into facets of the odd-red bipartite perfect matching polytope $P_{(\widehat{K}_n, \widehat{R})}$.
We begin with providing the proof of \cref{lem:expressing-facet-transfer},~i.e., we show that transforming an $(n-1)$-expressible facet-defining constraint of $\domP(G)$ yields a facet-defining constraint of $P_{(\widehat{G}, \widehat{R})}$.

\begin{proof}[Proof of \cref{lem:expressing-facet-transfer}]
	Recall that we consider a graph $G = (V,E)$ and the corresponding complete bipartite red-blue colored graph $\widehat{G} = (\widehat{V}, \widehat{E})$ with red edges $\widehat{R}$.
	Let $\kappa$ be an $(n-1)$-expressible constraint of $\domP(G)$ that defines the facet $F$, and denote by $\widehat{\kappa}$ the canonical transformation of $\kappa$.
	To show that
	\begin{align*}
		\widehat{F} \coloneqq \{x\in P_{(\widehat{G}, \widehat{R})}\colon \widehat{\kappa}\text{ is tight for $x$}\}
	\end{align*}
	is a facet of $P_{(\widehat{G}, \widehat{R})}$, we argue about the dimension of $\widehat{F}$.
	To start with, note that because $\kappa$ is facet-defining for $\domP(G)$, there is an odd cycle in $G$ whose characteristic vector is not tight for $\kappa$.
	In particular, transforming such a cycle into an odd-red perfect matching of $\widehat{G}$ yields a point in $P_{(\widehat{G}, \widehat{R})}$ that is not contained in $\widehat{F}$.
	Consequently, $\widehat{F}$ is a proper face of $P_{(\widehat{G}, \widehat{R})}$, and we have $\dim(\widehat{F}) < \dim(P_{(\widehat{G}, \widehat{R})})$.
	Also, because $0\notin\widehat{F}$, we have $\dim(\widehat{F}) = \dim(\linspan(\widehat{F}))-1$.
	Finally, $P_{(\widehat{G}, \widehat{R})}$ is contained in the perfect matching polytope of $\widehat{G}$, hence its dimension is at most $|\widehat{E}| - |\widehat{V}| + 1$.
	Together, these observations yield
	\[
		\dim(\linspan(\widehat{F})) - 1 = \dim(\widehat{F}) < \dim(P_{(\widehat{G}, \widehat{R})}) \leq |\widehat{E}| - |\widehat{V}| + 1\enspace.
	\]
	To conclude that $\widehat{F}$ is a facet of $P_{(\widehat{G}, \widehat{R})}$, it thus suffices to show that
	\[
		\dim(\linspan(\widehat{F})) \geq |\widehat{E}| - |\widehat{V}| + 1 = |\widehat{E}| - (2|V| - 1)\enspace.
	\]
	In the following, we achieve this by constructing a set $H\subseteq \mathbb{R}^{\widehat{E}}$ with $|H| = 2|V| - 1$ and $\linspan(\widehat{F} \cup H) = \mathbb{R}^{\widehat{E}}$.
	To build such a set $H$, we use the fact that $\kappa$ is $(n-1)$-expressing.
	Let $Q \subseteq E$ be a set of edges such that all of $E$ is $Q$-expressible and $|Q| = n - 1$.
	For every edge $e = \{v,w\} \in E$, let $\overrightarrow{e} \in \widehat{E}$ be an arbitrary but fixed one of the two edges $\{v^+, w^-\}$ and $\{v^-, w^+\}$ in $\widehat{G}$ that correspond to $e$.
	We define
	\[
		H \coloneqq \{\chi^{\overrightarrow{e}}\colon e \in Q\} \cup \{\chi^{\{{v}^+, v^-\}}\colon v\in V\}
	\]
	and claim that this set $H$ has the desired property.
	To show that $\linspan(\widehat{F} \cup H) = \mathbb{R}^{\widehat{E}}$, we show that $\chi^{\widehat{e}}\in \linspan(\widehat{F} \cup H)$ for every $\widehat{e}\in \widehat{E}$.
	First, this is clear if $\widehat{e} = \{v^+, v^-\}$ for some $v\in V$, as then $\chi^{\widehat{e}}\in H$ by definition.
	To deal with the remaining cases, we will show that for every $Q$-expressible edge $e=\{v,w\}\in E$, both edges $\{v^+, w^-\}$ and $\{v^-, w^+\}$ corresponding to $e$ have their characteristic vectors in $\linspan(\widehat{F} \cup H)$.
	Thus, let $e=\{v,w\}\in E$ be a $Q$-expressible edge.
	There are two cases to consider: either $e\in Q$, or there is an odd cycle $C$ in $G$ that---apart from~$e$---contains only $Q$-expressible edges such that $\chi^C$ is tight for $\kappa$.
	The first case is covered by the following claim.

	\begin{claim}\label{claim:transfer-expressibility-base}
		For $e=\{v,w\}\in Q$, both $\chi^{\{v^+, w^-\}}$ and $\chi^{\{v^-, w^+\}}$ are in $\linspan(\widehat{F} \cup H)$.
	\end{claim}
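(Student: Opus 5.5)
The plan is to obtain the missing orientation from the one already in $H$, by showing that the \emph{symmetric} vector
\[
\sigma_e \coloneqq \chi^{\{v^+,w^-\}} + \chi^{\{v^-,w^+\}}
\]
lies in $\linspan(\widehat F \cup H)$. By definition of $H$, one of the two vectors $\chi^{\{v^+,w^-\}}$ and $\chi^{\{v^-,w^+\}}$ is $\chi^{\overrightarrow{e}}\in H$. Hence, once $\sigma_e\in\linspan(\widehat F\cup H)$, the other one equals $\sigma_e-\chi^{\overrightarrow{e}}$ and is in the span as well. Note that this route does not actually use $e\in Q$ beyond $\chi^{\overrightarrow{e}}\in H$; the work lies in producing $\sigma_e$.

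\textbf{Step 1: symmetric vectors of tight cycles.} Let $D=(u_1,\dots,u_r)$ be an odd cycle in $G$ with $\chi^D$ tight for $\kappa$. It yields two odd-red perfect matchings of $\widehat G$:
\begin{itemize}
\item the forward matching $M_D^{\rightarrow}$, with edges $\{u_i^+,u_{i+1}^-\}$;
\item the backward matching $M_D^{\leftarrow}$, with edges $\{u_i^-,u_{i+1}^+\}$.
\end{itemize}
Both also contain $\{u^+,u^-\}$ for every $u\notin V(D)$. Each of them has exactly one copy of every edge of $D$, so the left-hand side of $\widehat\kappa$ equals $a^\top\chi^D=b$ at both, and both lie in $\widehat F$. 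Their sum is
\[
\chi^{M_D^{\rightarrow}}+\chi^{M_D^{\leftarrow}} = \sum_{f\in D}\sigma_f + 2\sum_{u\notin V(D)}\chi^{\{u^+,u^-\}} .
\]
The vectors $\chi^{\{u^+,u^-\}}$ belong to $H$. Hence $\Sigma(\chi^D)\in\linspan(\widehat F\cup H)$, where $\Sigma\colon\mathbb{R}^E\to\mathbb{R}^{\widehat E}$ is the linear map $\chi^f\mapsto\sigma_f$.

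\textbf{Step 2: reduce to a statement about odd cycles.} By linearity of $\Sigma$, it suffices to show that $\chi^e$ lies in the linear span $S$ of the incidence vectors of odd cycles tight for $\kappa$; then $\sigma_e=\Sigma(\chi^e)\in\linspan(\widehat F\cup H)$. I would show $S=\mathbb{R}^E$ as follows. Since $\kappa$ is facet-defining for the full-dimensional polyhedron $\domP(G)$, the face $F$ has affine dimension $|E|-1$, and $F$ is the convex hull of its vertices plus the rays $\chi^f$ with $a_f=0$. Its vertices are tight odd cycles. If all coefficients $a_f$ are positive, which is the case for $\mathcal C$-induced constraints, there are no such rays. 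Then $\operatorname{aff}(F)$ is spanned affinely by tight cycles, and since $0\notin\operatorname{aff}(F)$ because $b>0$, their linear span has dimension $|E|$. This is exactly the argument pattern used in the proof of \cref{lem:c-induced-facet}.

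\textbf{Main obstacle: zero coefficients.} The delicate step is the general case of the lemma, where some $a_f=0$, so that rays contribute to $F$. I would first try to avoid this altogether by observing that in this setting the canonical transformation is only applied to constraints with strictly positive coefficients. Otherwise, I would use the $Q$-expressibility hypothesis directly: propagate $\sigma$-membership along the expressing cycles, since for a tight cycle $D$ containing $f$ with all other edges already handled, Step 1 gives $\sigma_f$. The base edges $e\in Q$ are the only place where something extra is needed. There I would look for a tight cycle through $e$ and otherwise fall back on the positive-coefficient argument above.
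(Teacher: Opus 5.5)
Your argument is correct and is essentially the paper's proof. The paper likewise writes $\chi^e$ as a linear combination of tight odd cycles and lifts each cycle $C$ to a point $y^C\in\widehat F$, which is exactly half of your $\chi^{M_C^{\rightarrow}}+\chi^{M_C^{\leftarrow}}$; your bookkeeping is actually cleaner, because the paper's identity \eqref{eq:linear-combination-odd-cycles-transfer} drops the diagonal terms $\chi^{\{u^+,u^-\}}$, which you correctly absorb into $H$.

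The strict positivity you make explicit is exactly what the paper's step ``$\kappa$ is facet-defining, hence $\chi^e$ is a combination of tight odd cycles'' silently needs, and it holds for $\mathcal{C}$-induced constraints. However, you should drop your zero-coefficient fallback, because without positivity the claim is false. Take $G=K_4$ with the $3$-expressing facet $x_{ab}+x_{bc}+x_{ca}\geq 1$ (not tight for the triangle $abc$), $Q=\{ad,bd,cd\}$ and $\overrightarrow{ad}=\{a^+,d^-\}$. The tight odd-red matchings of $\widehat G=K_{4,4}$ are the six matchings coming from directed triangles through $d$. Each edge $\{x^+,y^-\}$ with distinct $x,y\in\{a,b,c\}$ lies in exactly one of them and in no vector of $H$, so $\chi^{\{a^-,d^+\}}\notin\linspan(\widehat F\cup H)$.
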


	\begin{proof}[Proof of \cref{claim:transfer-expressibility-base}]
		One of the two characteristic vectors, say $\chi^{\{v^+, w^-\}}$ without loss of generality, lies in $H$ by definition of $H$.
		To show that the other lies in $\linspan(\widehat{F} \cup H)$ as well, it is enough to show that $\chi^{\{v^+, w^-\}} + \chi^{\{v^-, w^+\}} \in \linspan(\widehat{F})$.

		Because $\kappa$ is facet-defining for $\domP(G)$ and $\domP(G)$ is full-dimensional, we have $\linspan(\widehat{F}) = \mathbb{R}^{\widehat{E}}$, and we can therefore write $\chi^{e}$ as a linear combination
		\begin{align}\label{eq:linear-combination-odd-cycles}
			\chi^e = \sum_{C\in\mathcal{C}} \alpha_C \chi^C\enspace,
		\end{align}
		where $\mathcal{C}$ is a set of odd cycles that are tight for $\kappa$.
		For every $C\in\mathcal{C}$, define a point $y^C\in \mathbb{R}^{\widehat{E}}$ by
		\[
		y^C({\widehat{f}}) \coloneqq
		\begin{cases}
			\frac{1}{2} & \text{if } \widehat{f} = \{x^+, y^-\} \text{ or } \widehat{f} = \{x^-, y^+\} \text{ for some } \{x,y\}\in C\\
			1 & \text{if } \widehat{f} = \{{u}^+, {u}^-\} \text{ for some } u\in V\setminus V(C)\\
			0 & \text{else}
		\end{cases}
		\]
		for all $\widehat{f}\in \widehat{E}$.
		From~\eqref{eq:linear-combination-odd-cycles}, we obtain
		\begin{align}\label{eq:linear-combination-odd-cycles-transfer}
			\frac12\left(\chi^{\{v^+, w^-\}} + \chi^{\{v^-, w^+\}}\right) = \sum_{C\in\mathcal{C}} \alpha_C y^C\enspace.
		\end{align}
		Note that $y^C$ is tight for $\widehat{\kappa}$ by definition of the canonical transformation, and thus, in particular, $y^C\in \widehat{F}$.
		Together with~\eqref{eq:linear-combination-odd-cycles-transfer}, this implies $\chi^{\{v^+, w^-\}} + \chi^{\{v^-, w^+\}} \in \linspan(\widehat{F})$, as desired.
	\end{proof}

	Next, we deal with the case that $e\notin Q$.
	We first establish the following claim.

	\begin{claim}\label{claim:transfer-expressibility}
		Let $C$ be a cycle in $G$ that is tight for $\kappa$, and let $e=\{v,w\}\in C$.
		If, for every edge $f=\{x,y\}\in C\setminus \{e\}$, both $\chi^{\{x^+, y^-\}}$ and $\chi^{\{x^-, y^+\}}$ are in $\linspan(\widehat{F} \cup H)$, then also both $\chi^{\{v^+, w^-\}}$ and $\chi^{\{v^-, w^+\}}$ are in $\linspan(\widehat{F} \cup H)$.
	\end{claim}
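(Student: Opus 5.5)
Write the tight odd cycle as $C=(u_1,u_2,\dots,u_m,u_{m+1}=u_1)$ with $m$ odd and $e=\{u_m,u_1\}$, so $v=u_m$ and $w=u_1$. The idea is to lift $C$ to two odd-red perfect matchings of $\widehat G$, one for each direction of traversal, and take linear combinations. Let $M^{\rightarrow}$ consist of the edges $\{u_i^+,u_{i+1}^-\}$ for $i\in[m]$, and let $M^{\leftarrow}$ consist of the edges $\{u_i^-,u_{i+1}^+\}$ for $i\in[m]$. Both are completed by the edges $\{u^+,u^-\}$ for all $u\in V\setminus V(C)$. Each is a perfect matching with exactly $m$ red edges, which is odd. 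The left-hand side of $\widehat\kappa$ at each of them equals $a(C)=b$, because the edges $\{u^+,u^-\}$ carry coefficient $0$ in the canonical transformation. Hence $\chi^{M^{\rightarrow}},\chi^{M^{\leftarrow}}\in\widehat F$.

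Next, the edges $\{u^+,u^-\}$ with $u\notin V(C)$ have characteristic vectors in $H$, so they can be subtracted. By hypothesis, the characteristic vectors of both lifts of every edge of $C\setminus\{e\}$ lie in $\linspan(\widehat F\cup H)$, so these can be subtracted as well. From $M^{\rightarrow}$, the only remaining edge is the lift of $e$ given by $\{u_m^+,u_1^-\}=\{v^+,w^-\}$. From $M^{\leftarrow}$, the only remaining edge is $\{u_m^-,u_1^+\}=\{v^-,w^+\}$. Therefore both $\chi^{\{v^+,w^-\}}$ and $\chi^{\{v^-,w^+\}}$ lie in $\linspan(\widehat F\cup H)$, as claimed.

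There is essentially no obstacle here. One only needs to check that each $M$ really is a perfect matching: every $u_i^+$ and every $u_i^-$ is covered exactly once, because $C$ is a simple cycle. This is where the assumption that $C$ is a genuine cycle, and not merely a closed walk, is used.
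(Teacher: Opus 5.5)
Your proof is correct and is essentially the paper's argument. You orient the tight odd cycle so that it traverses the arc $(v,w)$, lift it to an odd-red perfect matching in $\widehat F$, and peel off every edge other than the lift of $e$ using $H$ and the hypothesis, which places that lift in $\linspan(\widehat F \cup H)$. The only difference is that you write out both orientations and check the perfect-matching and parity properties explicitly, whereas the paper treats one orientation and invokes symmetry.
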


	\begin{proof}[Proof of \cref{claim:transfer-expressibility}]
		We show that $\chi^{\{v^+, w^-\}}\in \linspan(\widehat{F} \cup H)$; the rest follows by symmetry.
		Let $\overrightarrow{C} = (w_1, \dots, w_t)$ be the orientation of $C$ that contains the arc $(v,w)$.
			Define a point $y^{\overrightarrow{C}}\in \mathbb{R}^{\widehat{E}}$ by
			\[
			y^{\overrightarrow{C}}({\widehat{f}}) \coloneqq
			\begin{cases}
				1 & \text{if } \widehat{f} = \{w_i^+, w_{i+1}^-\} \text{ for } i\in [t],\\
				1 & \text{if } \widehat{f} = \{u^+, u^-\} \text{ for } u\in V\setminus V(C),\\
				0 & \text{else,}
			\end{cases}
			\]
			for all $\widehat{f}\in \widehat{E}$, where we set $w_{t+1} \coloneqq w_1$.
			Because $C$ is tight for $\kappa$, $y^{\overrightarrow{C}}$ is tight for $\widehat{\kappa}$, and hence $y^{\overrightarrow{C}}\in \widehat{F}$.
			Observe that
			\[
				y^{\overrightarrow{C}} = \sum_{i=1}^t \chi^{\{w_i^+, w_{i+1}^-\}} + \sum_{u\in V\setminus V(C)} \chi^{\{{u}^+, {u}^-\}}\enspace.
			\]
			Here, all terms but $\chi^{\{v^+, w^-\}}$ are in $\linspan(\widehat{F} \cup H)$, hence also $\chi^{\{v^+, w^-\}}\in \linspan(\widehat{F} \cup H)$.
		\end{proof}

		Using \cref{claim:transfer-expressibility}, we see that expressibility of edges $\{v, w\}$ in $G$ directly translates to the characteristic vectors $\chi^{\{v^+, w^-\}}$ and $\chi^{\{v^-, w^+\}}$ lying in $\linspan(\widehat{F} \cup H)$.
		Because every edge of $G$ is $Q$-expressible, we thus conclude that $\chi^{\widehat{e}}\in \linspan(\widehat{F} \cup H)$ for every $\widehat{e}\in \widehat{E}$, finishing the proof.
\end{proof}

To finish the transformation, we need to prove \cref{lem:n-1expressing}.

\begin{proof}[Proof of \cref{lem:n-1expressing}]
	Consider a cycle $C$ of length $n-2$ in $K_n$, and let $\kappa$ be the corresponding $C$-induced constraint.
	Let $s$ and $t$ be the two vertices in $K_n$ that are not contained in $C$.
	We claim that every edge in $E$ is $Q$-expressible for $Q = \delta(s)$.
	Before we show this, note that $|Q| = n - 1$, hence the claim indeed implies that $\kappa$ is $(n-1)$-expressing.
	To prove the claim, let $e\in E$ be an arbitrary edge.
	We distinguish the following cases:
	\begin{itemize}
		\item If $e = \{s,v\}$ for some $v\in V(C) \cup \{t\}$, then $e\in Q$ and thus $e$ is trivially $Q$-expressible.
		\item If $e = \{t,v\}$ for some $v\in V(C)$, then consider the triangle $T_v = (s, t, v)$.
			Note that $\chi^{T_v}$ is tight for $\kappa$.
			Additionally, $\{s,v\},\{s,t\}\in Q$, which implies in particular that these edges are $Q$-expressible.
			Hence, by definition of $Q$-expressibility, $e$ is $Q$-expressible as well.
		\item If $e = \{v_i, v_{i+1}\}$ for $i\in [n-3]$, then consider the triangle $T^s_i = (s, v_i, v_{i+1})$.
			Again, $\chi^{T^s_i}$ is tight for $\kappa$.
			Additionally, $\{s,v_i\},\{s,v_{i+1}\}\in Q$, and we can again conclude that $e$ is $Q$-expressible.
		\item If $e = \{v_i, v_j\}$ for non-adjacent vertices $v_i, v_j$ of $C$, then let $C_{i,j}$ be the cycle consisting of $e$ and the even length path between $v_i$ and $v_j$ on $C$.
			As before, $\chi^{C_{i,j}}$ is tight for $\kappa$, and all edges of $C$ except for $e$ are $Q$-expressible by the previous case, so we conclude that $e$ is $Q$-expressible, as well.
	\end{itemize}
	Consequently, every $e\in E$ is $Q$-expressible, as desired.
\end{proof}

\subsection[Asserting complexity of canonical transformations of \texorpdfstring{$\mathcal{C}$}{C}-induced facets]{Asserting complexity of canonical transformations of \texorpdfstring{\boldmath$\mathcal{C}$\unboldmath}{C}-induced facets}

In this section, we conclude the proof of \cref{thm:diverse_and_large_coeff} by showing that canonical transformations of $\mathcal{C}$-induced facets do not admit significantly simpler representations than the original $\mathcal{C}$-induced facets themselves.
We again start from the complete graph $G=K_n$ on $n=2k+3$ vertices, and let $C$ be a cycle of length $n-2$ in $G$; $s$ and $t$ are the two vertices in $V$ not contained in $C$.
Let $a\in\mathbb{Z}^{\widehat E}$ and $b\in\mathbb{Z}$ be such that $a^\top y \leq b$ is the canonical transformation of the $C$-induced constraint, where, as before, $\widehat{G}=(\widehat{V}, \widehat{E})$ is the complete bipartite graph on vertex set $\widehat{V}=V^+ \cup V^-$ with red edges $\widehat{R}$ resulting from the transformation of~$G$.
We recall that the representation of the constraint is not unique, not even up to scaling, since the polytope $P_{(\widehat{G}, \widehat{R})}$ is not full-dimensional.
Concretely, one may add any linear combination of the degree (equality) constraints to $a^\top y \leq b$ without changing the face defined by the constraint.%
\footnote{In fact, one can observe that up to scaling, any equivalent representation of a facet-defining constraint $a^\top y \leq b$ can be obtained by adding a linear combination of degree constraints, as the coefficient vectors of degree constraints span the linear space orthogonal to the affine hull of $P_{(\widehat{G}, \widehat{R})}$.}
Our goal is to show that no such transformation can yield a significantly simpler representation of the facet, i.e., a representation with significantly smaller or fewer different coefficients.

The aforementioned constraint transformations are particularly convenient to deal with when viewing the coefficient vector $a$ as a matrix in $\mathbb{Z}^{V^+\times V^-}$, where for $u\in V^+$ and $v\in V^-$, the matrix entry $a(u, v)$ is the coefficient $a_{\{u, v\}}$ of the variable associated with the edge $\{u, v\}\in \widehat E$.
To be explicit, we have
\begin{equation*}
	a(u^+, v^-) 
	= \begin{cases}
		0 & \text{if } u=v\\
		\bigl| 2k+1 - 2|i-j|\bigr| & \text{if } u=v_i, v=v_j, i\neq j\\
		k & \text{if } |\{u,v\}\cap \{s,t\} |= 1 \\
		1 & \text{if } \{u,v\} = \{s,t\}
	\end{cases}\enspace.
\end{equation*}
A presentation of $a$ in matrix form can be found in \cref{fig:matrix-example-2}.

\begin{figure}[ht]
	\centering
	\setlength{\tabcolsep}{0.05cm}
	\begin{tabular}{c|@{\hspace{0.1cm}}ccccccc@{\hspace{0.2cm}}cc}
		& $v_1^-$ & $v_3^-$ & $v_5^-$ & $v_7^-$ & $v_2^-$ & $v_4^-$ & $v_6^-$ & $s^-$ & $t^-$ \\[0.05cm]
		\hline
		\rule{0pt}{1.2em}
		$v_1^+$ & 0 & 5 & 3 & 1 & 1 & 3 & 5 & 3 & 3 \\
		$v_3^+$ & 5 & 0 & 5 & 3 & 1 & 1 & 3 & 3 & 3 \\
		$v_5^+$ & 3 & 5 & 0 & 5 & 3 & 1 & 1 & 3 & 3 \\
		$v_7^+$ & 1 & 3 & 5 & 0 & 5 & 3 & 1 & 3 & 3 \\
		$v_2^+$ & 1 & 1 & 3 & 5 & 0 & 5 & 3 & 3 & 3 \\
		$v_4^+$ & 3 & 1 & 1 & 3 & 5 & 0 & 5 & 3 & 3 \\
		$v_6^+$ & 5 & 3 & 1 & 1 & 3 & 5 & 0 & 3 & 3 \\[0.2cm]
		$s^+$   & 3 & 3 & 3 & 3 & 3 & 3 & 3 & 0 & 1 \\
		$t^+$   & 3 & 3 & 3 & 3 & 3 & 3 & 3 & 1 & 0 \\
	\end{tabular}
	\caption{Example of a coefficient vector $a$ in matrix form for $n=9$.}
	\label{fig:matrix-example-2}
\end{figure}

Generally, a transformation of the constraint $a^\top y \leq b$ through scaling and addition of linear combinations of the degree constraints results in a new constraint of the form
\begin{equation*}
	\mu \cdot a^\top y + \sum_{v\in \widehat V} \lambda_v \cdot y(\delta(v)) \geq \mu \cdot b + \sum_{v\in \widehat V} \lambda_v
\end{equation*}
with $\mu \in \mathbb{R}\setminus \{0\}$ and $\lambda \in \mathbb{R}^{\widehat V}$, and a corresponding coefficient vector $a^{\mu, \lambda}$ defined by
\begin{equation*}
	a^{\mu,\lambda}(u^+, v^-) = \mu \cdot a(u^+, v^-) + \lambda_{u^+} + \lambda_{v^-} \quad \text{for all } u,v\in V\enspace.
\end{equation*}
Thus, in matrix form, a constraint transformation of this type corresponds to scaling the matrix by a factor $\mu$, adding a constant $\lambda_{u^+}$ to all entries in the row corresponding to $u^+$ for every $u\in V$, and adding a constant $\lambda_{v^-}$ to all entries in the column corresponding to $v^-$ for every $v\in V$.
We may further restrict our attention to transformations that result in integral matrix entries.
For these, we prove the following complexity result.

\begin{lemma}\label{lem:complexity-C-induced}
	Let $a\in \mathbb{Z}^{\widehat{E}}$ and $b\in \mathbb{Z}$ such that $a^\top y \leq b$ is the canonical transformation of a $\mathcal{C}$-induced constraint of $\domP(K_n)$.
	Let $\mu \in \mathbb{R}\setminus \{0\}$ and $\lambda\in \mathbb{R}^{\widehat{V}}$ be such that $a^{\mu,\lambda}$ is integral.
	Then the following holds.
	\begin{enumerate}
		\item\label{item:small_coeffs} The largest absolute value of an entry in $a^{\mu,\lambda}$ is at least $\frac{n-4}{2}$.
		\item\label{item:diverse_coeffs} The number of distinct entries in $a^{\mu,\lambda}$ is at least $\sqrt{\frac{n-1}{2}}$.
	\end{enumerate}
\end{lemma}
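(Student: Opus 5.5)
The plan is to find quantities of the matrix $a$ that are unchanged by the row and column shifts $\lambda$ and only get multiplied by $\mu$. Two kinds will do. For rows $u^+,u'^+$ and columns $v^-,v'^-$, the alternating $2\times 2$ sum
\[
\sigma(u,u';v,v') \coloneqq a(u^+,v^-)-a(u^+,v'^-)-a(u'^+,v^-)+a(u'^+,v'^-)
\]
satisfies $\sigma^{\mu,\lambda}=\mu\,\sigma$, because every $\lambda$-term enters once with each sign. For two fixed rows $u^+,u'^+$, the difference vector $v\mapsto a^{\mu,\lambda}(u^+,v^-)-a^{\mu,\lambda}(u'^+,v^-)$ equals $\mu\bigl(a(u^+,v^-)-a(u'^+,v^-)\bigr)+(\lambda_{u^+}-\lambda_{u'^+})$. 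This is an affine image, with nonzero slope $\mu$, of the corresponding difference vector of $a$. Throughout I use the explicit entries of $a$ displayed before the lemma, with $n=2k+3$.

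The first step is to show $|\mu|\ge 1$. Every $\sigma^{\mu,\lambda}$ is an integer because $a^{\mu,\lambda}$ is integral, so $\mu\,\sigma\in\mathbb{Z}$ for every alternating sum $\sigma$ of $a$. It therefore suffices to exhibit one $\sigma$ with $|\sigma|=1$. I would take rows $v_i^+,v_{i+1}^+$ and columns $v_i^-,s^-$. The entries are $a(v_i^+,v_i^-)=0$, $a(v_i^+,s^-)=k$, $a(v_{i+1}^+,v_i^-)=\ell(\{v_i,v_{i+1}\})=1$ and $a(v_{i+1}^+,s^-)=k$. This gives $\sigma=0-k-1+k=-1$, hence $\mu\in\mathbb{Z}\setminus\{0\}$.

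For \ref{item:small_coeffs}, choose $i,j$ with $\ell(\{v_i,v_j\})=2k-1$. Such a pair exists for $k\ge1$, for instance $v_i,v_j$ at cyclic distance $2$ on $C$; for $k=0$ the bound $\frac{n-4}{2}<0$ is trivial. Take rows $v_i^+,v_j^+$ and columns $v_i^-,v_j^-$. The diagonal entries are $0$ and the off-diagonal entries are $2k-1$, so $\sigma=-2(2k-1)$. Consequently
\[
4\max|a^{\mu,\lambda}|\ \ge\ |\mu|\,|\sigma|\ \ge\ 4k-2,
\]
which gives a maximum absolute entry of at least $k-\tfrac12=\frac{n-4}{2}$.

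For \ref{item:diverse_coeffs}, suppose $a^{\mu,\lambda}$ has $d$ distinct entries. Then the difference of any two of its rows takes at most $d^2$ distinct values. Since that difference is an injective affine image of the corresponding row difference of $a$, it suffices to find two rows of $a$ whose difference takes at least $k+1=\frac{n-1}{2}$ distinct values; then $d\ge\sqrt{(n-1)/2}$. I would take rows $v_i^+$ and $s^+$.
\begin{itemize}
\item On a column $v_j^-$ with $j\neq i$, the difference is $\ell(\{v_i,v_j\})-k$. As $j$ ranges over the cycle, $\ell$ attains every odd value $1,3,\dots,2k-1$, which gives $k$ distinct values.
\item On column $v_i^-$ the difference is $0-k=-k$.
\item On column $s^-$ it is $k-0=k$, and on column $t^-$ it is $k-1$.
\end{itemize}
The value $k$ differs from every $\ell-k\le k-1$, so at least $k+1$ distinct values occur, as needed.

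The main obstacle is the normalization $|\mu|\ge1$. Without it the scaling could shrink all coefficients, so the whole argument hinges on finding a unit alternating sum. The rest is bookkeeping on the entries of $a$, and checking that the small cases $k\in\{0,1\}$ hold.
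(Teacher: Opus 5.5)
Your proof is correct and follows essentially the same approach as the paper. Both arguments use alternating sums that are invariant under the row and column shifts to force $\mu\in\mathbb{Z}$, then bound the largest entry via the sum over $v_i^\pm,v_j^\pm$ with $\ell(\{v_i,v_j\})=2k-1$, and compare rows $v_i^+$ and $s^+$ to get $\sqrt{k+1}$ distinct entries. Your single unit-valued $2\times 2$ sum (in place of the paper's two cyclic sums with coprime values $2$ and $2k+1$) and your difference-set count (in place of the paper's pigeonhole over the shifts $\lambda_{v_i^-}$) are just tidier packagings of the same ideas.
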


\begin{proof}
	Fix a cycle $C = (v_1, v_2, \dots, v_{2k+1})$ of length $n-2$ in $K_n$, and let $s$ and $t$ be the two vertices in $V$ not contained in $C$.
	We consider the canonical transformation $a^\top y \leq b$ of the corresponding $C$-induced constraint, and show the results for arbitrary fixed $\mu\in\mathbb{R}\setminus\{0\}$ and $\lambda\in \mathbb{R}^{\widehat{V}}$ such that $a^{\mu,\lambda}$ is integral.

	Let us first argue why not all entries in $a^{\mu,\lambda}$ can be small in absolute value, i.e., prove \cref{item:small_coeffs}. 
	To this end, we repeatedly use that for any sequence of vertices $(u_1, u_2, \dots, u_t)$ in $G$, defining $u_{t+1}\coloneqq u_1$, we have
	\begin{equation}\label{eq:alternating_sum}
		\begin{aligned}
			\sum_{i=1}^t a(u_i^+, u_{i+1}^-) &= \sum_{i=1}^t a(u_i^+, u_{i+1}^-) - \sum_{i=1}^t a(u_i^+, u_i^-) \\
			&= \frac{1}{\mu} \left(\sum_{i=1}^t a^{\mu,\lambda}(u_i^+, u_{i+1}^-) - \sum_{i=1}^t a^{\mu,\lambda}(u_i^+, u_i^-)\right)\enspace.
		\end{aligned}
	\end{equation}
	Here, the first equality holds because $a(u_i^+, u_i^-) = 0$ for all $i\in [t]$, and the second equality follows from the definition of $a^{\mu,\lambda}$ and the fact that the contributions of the row and column additions cancel out in the difference because the right-hand side sum has one term with positive sign and one term with negative sign per involved row and column of $a^{\mu,\lambda}$.
	In particular, from \eqref{eq:alternating_sum}, we conclude the following:
	\begin{enumerate}
		\item\label{item:denominator} As $a$ and $a^{\mu,\lambda}$ are integral, $\mu$ is a rational number and its denominator (in fully reduced form) divides the left-hand side $\sum_{i=1}^t a(u_i^+, u_{i+1}^-)$ of \eqref{eq:alternating_sum}.
		\item\label{item:absvalue} Because the maximum of several nonnegative numbers is at least their average, and by the triangle inequality, we have 
			\begin{align*}
				\max_{i \in [t]} \max &\{|a^{\mu,\lambda}(u_i^+, u_{i+1}^-)|, |a^{\mu,\lambda}(u_i^+, u_i^-)|\} \\
				&\geq \frac{\sum_{i=1}^t \left|a^{\mu,\lambda}(u_i^+, u_{i+1}^-)\right| + \sum_{i=1}^t \left|a^{\mu,\lambda}(u_i^+, u_i^-)\right|}{2t} \\
				&\geq \frac{\big|\sum_{i=1}^t a^{\mu,\lambda}(u_i^+, u_{i+1}^-) - \sum_{i=1}^t a^{\mu,\lambda}(u_i^+, u_i^-)\big|}{2t}
				\stackrel{\text{\eqref{eq:alternating_sum}}}{=} \frac{|\mu|}{2t} \cdot \left|\sum_{i=1}^t a(u_i^+, u_{i+1}^-)\right| .
			\end{align*}
	\end{enumerate}
	Concretely, this lets us draw the following conclusions.
	\begin{itemize}
		\item \emph{Sequence $(s,t)$:} We have $a(s^+, t^-) = a(t^+, s^-) = 1$, and thus $a(s^+, t^-) + a(t^+, s^-) = 2$, hence the denominator of $\mu$ divides $2$ by \cref{item:denominator}.
		\item \emph{Sequence $(v_1, v_2, v_3)$:} We have $a(v_1^+, v_2^-) = 1$, $a(v_2^+, v_3^-) = 1$, and $a(v_3^+, v_1^-) = 2k-1$, hence $a(v_1^+, v_2^-) + a(v_2^+, v_3^-) + a(v_3^+, v_1^-) = 2k+1$. 
			Consequently, the denominator of $\mu$ divides $2k+1$ by \cref{item:denominator}.
			Together with the previous point, we conclude that the denominator of $\mu$ must be $\pm 1$, so $\mu$ must be an integer.
			In particular, we have $|\mu| \geq 1$.
		\item \emph{Sequence $(v_1, v_3)$:} We have $a(v_1^+, v_3^-) = a(v_3^+, v_1^-) = 2k-1$, hence $a(v_1^+, v_3^-) + a(v_3^+, v_1^-) = 4k-2$.
			By \cref{item:absvalue}, we get
			\[
				\max \{|a^{\mu,\lambda}(v_1^+, v_3^-)|, |a^{\mu,\lambda}(v_3^+, v_1^-)|\} \geq \frac{|\mu|}{4} (4k-2) \geq \frac{n-4}{2}\enspace.
			\]
	\end{itemize}
	
	Next, we show that $a^{\mu,\lambda}$ contains many different entries.
	To this end, consider the two rows of $a$ indexed by $v_1^+$ and $s^+$.
	Note that in the columns of $v_1^-$, $v_2^-$, \dots, $v_{k+1}^-$, the row of $v_1^+$ contains the values $\{0, 1, 3, 5, \dots, 2k-1\}$ once each, while the row of $s^+$ contains only the value $k$.
	As the $s^+$-row of $a^{\mu,\lambda}$ is obtained from the $s^+$-row of $a$ by scaling and adding $\lambda_{v_i^-}$ to the entry corresponding to column $v_i^-$, we have
	\begin{equation*}
		|\{a^{\mu,\lambda}(s^+, v_i^-)\colon i \in [k+1]\}| \geq |\{\lambda_{v_i^-}\colon i\in [k+1]\}| \eqqcolon r \enspace.
	\end{equation*}
	On the other hand, by the pigeonhole principle, there is a value $L$ that appears at least $\lceil\sfrac{k+1}{r}\rceil$ times among the values $(\lambda_{v_1^-}, \ldots, \lambda_{v_{k+1}^-})$.
	Let $I\subseteq [k+1]$ be the set of indices $i$ such that $\lambda_{v_i^-} = L$.
	Then, the values of $a^{\mu,\lambda}(s^+, v_i^-)$ for $i\in I$ are all different, as $a^{\mu,\lambda}(s^+, v_i^-) = a(s^+, v_i^-) - \lambda_{s^+} - L$.
	Thus, the number of different values in $a^{\mu,\lambda}$ is at least $|I| \geq \lceil\sfrac{k+1}{r}\rceil$.
	Combining both bounds, we obtain that $a^{\mu,\lambda}$ contains at least $\max\{r, \lceil\sfrac{k+1}{r}\rceil\}$ different values.
	Regardless of the value of $r$, this is at least $\sqrt{k+1} = \sqrt{\frac{n-1}{2}}$, finishing the proof.
\end{proof}

Now we are ready to prove \cref{thm:diverse_and_large_coeff}.
\begin{proof}[Proof of \cref{thm:diverse_and_large_coeff}]
	Let $G=(V,E)$ be the complete graph on $n=2k+3$ vertices and let $C$ be a cycle of length $n-2$ in $G$.
	By \cref{lem:c-induced-facet}, the canonical transformation of the $C$-induced constraint is facet-defining for $P_{(\widehat{G}, \widehat{R})}$.
	Let the canonical transformation be the constraint $a^\top y \leq b$ defined by $a\in \mathbb{Z}^{\widehat{E}}$ and $b\in \mathbb{Z}$.
	Any other equivalent representation of the constraint has a coefficient vector of the form $a^{\mu,\lambda}$ for some $\mu\in \mathbb{R}\setminus \{0\}$ and $\lambda\in \mathbb{R}^{\widehat{V}}$. 
	By \cref{lem:complexity-C-induced}, such integral coefficients satisfy that
	\begin{enumerate}
		\item the largest absolute value of an entry in $a^{\mu,\lambda}$ is at least $\frac{n-4}{2}$, and
		\item the number of distinct entries in $a^{\mu,\lambda}$ is at least $\sqrt{\frac{n-1}{2}}$.
	\end{enumerate}
	To finish the proof, it remains to show that canonical transformations of $\mathcal{C}$-induced facets indeed result in exponentially many different facets of $P_{(\widehat{G}, \widehat{R})}$.
	To see this, note that for two distinct cycles $C_1, C_2 \in \mathcal{C}$, we have that $\chi^{C_1}$ is tight for the $C_1$-induced constraint but not for the $C_2$-induced constraint, hence the constraints define different facets.
	This carries over to the canonical transformations of $\mathcal{C}$-induced constraints.
	Indeed, let $x^{C_1} \in P_{(\widehat{G}, \widehat{R})}$ be the point defined by
	\[
		x^{C_1}({\widehat{e}}) \coloneqq
		\begin{cases}
			\frac{1}{2} & \text{if } \widehat{e} = \{u^+, v^-\} \text{ for } \{u,v\}\in C_1,\\
			1 & \text{if } \widehat{e} = \{{w}^+, {w}^-\} \text{ for } w\in V\setminus V(C_1),\\
			0 & \text{else,}
		\end{cases}
	\]
	for all $\widehat{e}\in \widehat{E}$.
	Then $x^{C_1}\in P_{(\widehat{G}, \widehat{R})}$ and $x^{C_1}$ is tight for the canonical transformation of the $C_1$-induced constraint, but not for the canonical transformation of the $C_2$-induced constraint.
	Consequently, the corresponding facets of $P_{(\widehat{G}, \widehat{R})}$ are different, and we constructed an injective mapping from $\mathcal{C}$ to the set of facets of $P_{(\widehat{G}, \widehat{R})}$ that are canonical transformations of $\mathcal{C}$-induced facets.
	Since there are exponentially many different cycles in $\mathcal{C}$, we conclude that we indeed obtain exponentially many facets with the claimed complexity.
\end{proof}

In the remainder of this section, we complement the result of \cref{thm:diverse_and_large_coeff} by showing that the complexity of canonical transformations of $\mathcal{C}$-induced facets in terms of the number of different coefficients that are required can indeed be reduced to a sublinear bound, concretely to $O(n^{\sfrac{2}{3}})$ different coefficients.
This is formalized below.

\begin{theorem}\label{thm:upper-bound}
    Let $F$ be a facet of $P_{(\widehat G, \widehat R)}$ defined by the canonical transformation of a $\mathcal{C}$-induced constraint of $\domP(G)$ for $G=K_n$ with odd $n\in\mathbb{Z}_{\geq 3}$.
    Then, there exists a coefficient vector $\widehat a\in \mathbb{Z}^{\widehat E}$ and $\widehat b\in \mathbb{Z}$ such that $F = \{y\in P_{(\widehat G, \widehat R)}\colon \widehat a^\top y = \widehat b\}$ and
	\[
		\left|\left\{\widehat a_{\widehat e}\colon \widehat e \in \widehat E\right\}\right| = O(n^{\sfrac{2}{3}})\enspace.
	\]
\end{theorem}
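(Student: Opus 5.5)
The plan is to construct $\widehat a$ explicitly as $a^{\mu,\lambda}$ with $\mu=1$ and a well-chosen integral $\lambda\in\mathbb{Z}^{\widehat V}$. By the discussion preceding \cref{lem:complexity-C-induced}, any such choice defines the same facet $F$ with $\widehat b=b+\sum_v\lambda_v$. The rows and columns indexed by $s^\pm,t^\pm$ contribute only $O(1)$ values per distinct $\lambda$-value used on the cycle part. I would therefore give $s^\pm,t^\pm$ the value $0$ and make sure the $\lambda$-values on cycle vertices take only $O(n^{2/3})$ distinct values. The main work is the $m\times m$ block with $m=2k+1$, indexed by the cycle vertices. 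There the entry is $a(v_i^+,v_j^-)=g\bigl((i-j)\bmod m\bigr)$ with the tent function $g(r)=|m-2r|$. This is a circulant matrix whose generating function is piecewise affine with slopes $\mp 2$ and breakpoints at $r=0$ and $r\approx m/2$.

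The key observation is that on a piece where $g$ is affine, the entry has the form $\pm(m-2i+2j)$. This is a sum of a row term and a column term, so row and column additions can cancel it. A single global choice cannot cancel both slopes, so I would work at scales. Fix a block length $b$ (to be chosen as $\Theta(n^{1/3})$) and write each index as $i=bp+q$ with $0\le q<b$. Then I would choose
\[
\lambda_{v_i^+}=\alpha(p)+\beta(q),\qquad \lambda_{v_j^-}=\alpha'(p')+\beta'(q').
\]
The intended effect is as follows. For a pair of blocks $(p,p')$ lying entirely inside one affine piece of $g$, the fine part $\pm 2(q-q')$ is either cancelled by $\beta,\beta'$ or left as one of $O(b)$ residual values. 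At the same time, the coarse part $\pm 2b(p-p')$ plus $\alpha(p)+\alpha'(p')$ should collapse to few constants. Concretely, I would try $\alpha(p)=2bp$ and $\alpha'(p')=-2bp'$ together with $\beta\equiv\beta'\equiv 0$. This makes all entries on the slope $-2$ piece equal to $m-2(q-q')$, which gives $O(b)$ values.

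The slope $+2$ piece is the main obstacle. There the entries become $4b(p-p')+2(q-q')-m$, which naively gives $\Theta(m)$ values again. The fix I would attempt is a second coarser scale $B=b^2$. The block difference $p-p'$ is itself split as $d=BP+r$, and the coarse offsets on the $+2$ side are absorbed by adding to $\alpha,\alpha'$ a correction that depends only on $p\bmod b$ (respectively $p'\bmod b$). This correction should be chosen so that it vanishes, or is constant, on the $-2$ side. The result is that on the $+2$ side, each entry is determined by about $m/B$ super-block classes times $O(b)$ fine offsets, while the $-2$ side keeps $O(b)$ values.

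Separately, I would handle the $O(m/b)$ "boundary" block pairs, namely those whose difference range straddles a breakpoint of $g$ (near $r=0$, $r\approx m/2$, or the cyclic wrap). Each such pair contributes at most $O(b)$ new values per boundary diagonal, since the entries within one block pair only depend on $q-q'$ up to a constant. Because $g$ has $O(1)$ breakpoints, these pairs add $O(b)\cdot O(1)$ values per distinct coarse constant occurring there.

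Summing up, the goal is a count of the form $O(b)+O(m/b^2\cdot b)+O(b^2)$, possibly with the middle term replaced by $O(m/b^2)$ depending on how well the second scale collapses. Balancing $m/b=b^2$ with $b=\Theta(m^{1/3})$ gives $O(n^{2/3})$ distinct entries in total. The step I expect to be hardest, and would check first on the explicit matrix of \cref{fig:matrix-example-2} for moderate $k$, is making the coarse corrections for the two slopes compatible. That is, I need a single pair $(\alpha,\alpha')$ that simultaneously collapses the coarse constants on both affine pieces, up to $O(m^{2/3})$ residual classes. If that fails, the fallback is to accept $O(m/b)$ coarse classes times $O(1)$ fine classes on one side, with the fine variation pushed to the other side, and to rebalance the scales accordingly.
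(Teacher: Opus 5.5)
Your setup matches the paper's: $\mu=1$, $\lambda$ vanishing on $s^\pm,t^\pm$, and on each affine piece of the tent function an entry equals $\pm m$ plus a row term plus a column term. Your analysis of the first choice $\alpha(p)=2bp$, $\alpha'(p')=-2bp'$ is also correct. There is a genuine gap, however. The whole content of the theorem is the step you leave open: exhibiting one integral $\lambda$ that collapses \emph{both} slopes at once. Moreover, the condition you impose on the correction cannot be met in a useful way. Take $\alpha(p)=2bp+\gamma(p\bmod b)$ and $\alpha'(p')=-2bp'+\gamma'(p'\bmod b)$, and write $u=p\bmod b$, $u'=p'\bmod b$. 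Then the slope $-2$ entries become $m-2(q-q')+\gamma(u)+\gamma'(u')$. Once $m>4b^2$ (true for $b=\Theta(m^{1/3})$ and large $m$), this piece contains index pairs realising every residue pair $(u,u')$; for instance, take $p'=u'$ and $p=b+u$. So requiring $\gamma(u)+\gamma'(u')$ to vanish or be constant there forces $\gamma$ and $\gamma'$ to be constant, and then nothing changes on the slope $+2$ piece. Your boundary-pair count has a related problem. ``$O(b)$ values per distinct coarse constant'' is only a bound once the number of coarse constants is controlled, and that again depends on the missing construction.

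The idea you need is to let the correction cost $O(b)$ extra values on the slope $-2$ piece. For example, write $i=b^2P+bu+q$ with $0\le u,q<b$, and set $\lambda_{v_i^+}=2f_i$, $\lambda_{v_i^-}=-2f_i$ with $f_i=b^2P-bu$. In your notation this is $\gamma(u)=-4bu$ and $\gamma'=-\gamma$. Then $f_i-i=-(2bu+q)\in(-2b^2,0]$ and $f_i+i=2b^2P+q$. Hence:
\begin{itemize}
\item the slope $-2$ entries $\pm m+2(f_i-i)-2(f_j-j)$ take $O(b^2)$ values;
\item the slope $+2$ entries $\pm m+2(f_i+i)-2(f_j+j)$ take $O(m/b^2\cdot b)=O(m/b)$ values;
\item $f_i$ itself, which governs the $s,t$ rows and columns, takes $O(m/b)$ values.
\end{itemize}
Choosing $b=\lceil m^{1/3}\rceil$ gives $O(n^{2/3})$. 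If you count these differences over \emph{all} pairs $(i,j)$, as the paper does, no boundary analysis is needed. This is a variant of the paper's \cref{lem:shift-properties}. That lemma picks $0\le f_j\le m_1+m_2$ with $m_1\mid f_j+j$ and $(f_j-j)\bmod m_2\in\{0,\dots,m_1\}$, where $m_1=\lceil n^{1/3}\rceil$ and $m_2=\lceil n^{2/3}\rceil$. It is the same two-scale structure that makes both differences $(f_i\pm i)-(f_j\pm j)$ take few values simultaneously.
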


\begin{proof}
	Let $G=(V,E)$ be the complete graph on $n=2k+3$ vertices, and let $C=(v_1, \dots, v_{2k+1})$ be a cycle of length $n-2$ in $G$.
	Let $\{s,t\} = V\setminus V[C]$.
	Recall that we can write the canonical transformation of the $C$-induced constraint as $a^\top y \leq b$ with $a\in \mathbb{Z}^{\widehat E}$ given by
	\begin{equation*}
		a(u^+, v^-) 
		= \begin{cases}
			0 & \text{if } u=v\\
			\bigl| 2k+1 - 2|i-j|\bigr| & \text{if } u=v_i, v=v_j, i\neq j\\
			k & \text{if } |\{u,v\}\cap \{s,t\} |= 1 \\
			1 & \text{if } \{u,v\} = \{s,t\}
		\end{cases}
	\end{equation*}
	for all $u,v\in V$.
	We now explicitly construct a vector $\lambda\in \mathbb{R}^{\widehat{V}}$ for which $a^{1,\lambda}$ has $O(n^{\frac{2}{3}})$ different entries, i.e., we use $\mu = 1$ and only add linear combinations of degree constraints to the original constraint.
	Our vector $\lambda$ takes the form
	\[
		\lambda_w = \begin{cases}
			0 & \text{if } w \in \{s^+, s^-, t^+, t^-\}\\
			2f_i & \text{if } w = v_i^+ \text{ for some } i\in [2k+1]\\
			-2f_i & \text{if } w = v_i^- \text{ for some } i\in [2k+1]
		\end{cases}
	\]
	for some $f_i \in \mathbb{Z}$ to be defined later for all $i\in [2k+1]$.
	Taking into account that $a^{1, \lambda}(u^+, v^-) = a(u^+, v^-) + \lambda_{u^+} + \lambda_{v^-}$ for all $u,v\in V$, we have that
	\begin{multline*}
		\{a^{1,\lambda}(u^+, v^-)\colon u,v\in V\} = \{0, 1\} \cup \{|2k+1 - 2|i-j|| + 2f_i - 2f_j \colon i,j\in [2k+1], i\neq j\} \\
		\cup \{k + 2f_i\colon i\in [2k+1]\} \cup \{k - 2f_i\colon i\in [2k+1]\}\enspace.
	\end{multline*}
	Here, the first set stems from entries $a^{1,\lambda}(u^+, u^-)$ with $u\in V$, the second set from entries $a^{1,\lambda}(v_i^+, v_j^-)$ for distinct $i, j\in[2k+1]$, and the last two sets from entries $a^{1,\lambda}(u^+, v^-)$ with $|\{u,v\} \cap \{s,t\}| = 1$.
	Note that
	\[
		|2k+1 - 2|i-j|| + 2f_i - 2f_j = \pm (2k+1) \pm 2(i-j) + 2f_i - 2f_j\enspace,
	\]
	where the combination of signs depends on whether $j < i$ and whether $|i-j|$ is larger than $k$.
	The shift by $\pm(2k+1)$ can at most double the number of different values that this expression takes compared to the number of different values taken by $\pm 2(i-j) + 2f_i - 2f_j$.
	Going further, the latter is either $2(f_i+i) - 2(f_j+j)$ or $2(f_i - i) - 2(f_j - j)$, depending on the combination of signs.
	We thus conclude an upper bound on the number of different entries in $a^{1,\lambda}$ of the form
	\begin{multline}\label{eq:diffcount}
		\left|\{a^{1,\lambda}(u^+, v^-)\colon u,v\in V\}\right|
		\leq 2 + 2 \cdot |\{f_i: i\in[2k+1]\}| + |\{(f_i + i) - (f_j + j)\colon i,j\in[2k+1], i\neq j\}| \\
		+ |\{(f_i - i) - (f_j - j)\colon i,j\in[2k+1], i\neq j\}|\enspace.
	\end{multline}
	Our concrete choice of the values $f_i$ is determined by the following lemma, which we will apply with $m_1 = \lceil n^{\sfrac{1}{3}}\rceil$ and $m_2 = \lceil n^{\sfrac{2}{3}}\rceil$.
	Here, for $a,b\in \mathbb{Z}$, we denote by $(a \bmod b) \in \{0, \dots, b-1\}$ the remainder of $a$ when divided by $b$.

	\begin{lemma}\label{lem:shift-properties}
		For all $m_1, m_2 \in \mathbb{Z}$, there exist an integral sequence $(f_j)_{j\geq 0}$ such that for all $j\in \mathbb{Z}_{\geq 0}$, we have
		\begin{enumerate}[label=\normalfont(\roman*)]
			\item\label{item:bounded} $0 \leq f_j \leq m_1 + m_2$,
			\item\label{item:divisible} $f_j + j$ is divisible by $m_1$, and
			\item\label{item:residue} $((f_j - j) \bmod m_2) \in \{0, 1, \dots, m_1\}$.
		\end{enumerate}
	\end{lemma}

	Before proving the lemma, we show how these numbers $f_j$ can be used for our application with $m_1 = \lceil n^{\sfrac{1}{3}}\rceil$ and $m_2 = \lceil n^{\sfrac{2}{3}}\rceil$.
	First, \cref{item:bounded} directly implies that $|\{f_i\colon i\in [2k+1]\}| \leq m_2 + m_1 + 1 \in O(n^{\sfrac{2}{3}})$.
	Next, \cref{item:divisible} implies that $(f_i + i) - (f_j + j)$ is divisible by $m_1$ for all $i,j\in [2k+1]$, while being of absolute value at most $n+m_1+m_2$.
	Thus, $(f_i + i) - (f_j + j)$ can take at most $2\cdot\frac{n + m_2 + m_1}{m_1} + 1 \in O(n^{\sfrac{2}{3}})$ different values.
	Finally, \cref{item:residue} implies that $((f_i - i) - (f_j - j) \bmod m_2) \in\{0, \ldots, m_1\}\cup\{m_2 - m_1, \ldots, m_2 - 1\}$ while also being of absolute value at most $n+m_1+m_2$.
	Thus, $(f_i - i) - (f_j - j)$ can take at most $2\cdot \frac{n + m_2 + m_1}{m_2} \cdot (2m_1 + 1) \in O(n^{\sfrac{2}{3}})$ different values.
	Plugging these findings into \eqref{eq:diffcount}, we conclude that $a^{1,\lambda}$ has at most $O(n^{\sfrac{2}{3}})$ different entries, as claimed.
\end{proof}

For completeness, let us also provide a simple existential proof of \cref{lem:shift-properties}.

\begin{proof}[Proof of \cref{lem:shift-properties}]
	To determine $f_j$, consider the elements of the set
	\[
		R \coloneqq\{ (j \bmod m_2), (j \bmod m_2) + 1, \dots, (j \bmod m_2) + m_1 - 1 \} \enspace.
	\]
	All of these elements satisfy \cref{item:absvalue,item:residue}.
	Furthermore, $R$ contains $m_1$ many consecutive integers, hence for precisely one $r\in R$, $j+r$ is divisible by $m_1$.
	We set $f_j$ equal to this $r$, so that \cref{item:divisible} is satisfied as well.
\end{proof}

\begin{remark}
	While it is not relevant for our application, note that the values $f_j$ guaranteed by \cref{lem:shift-properties} can be computed explicitly by the formula
	\[
		f_j = (j \bmod m_2) + m_1 - (((j \bmod m_2) + j) \bmod m_1) \enspace.
	\]
\end{remark}

\section{Hardness of separation over label constraints}\label{sec:separation}

In this section, we show that it is \NP-complete to decide whether a given point $x$ satisfies $x\notin Q_{(G, R)}$, i.e., \cref{thm:separation}.
Note that this non-containment problem is in \NP{} because any violated constraint is a suitable certificate for non-membership.
\NP-completeness of the non-containment problem directly implies that the containment problem, i.e., the decision problem of whether $x\in Q_{(G, R)}$, is also \NP-hard.

We reduce from \cubicmaxcut, which is the unweighted version of the \maxcut problem where the input graph is cubic, i.e., every vertex has degree three.%
\footnote{%
Our reduction is quite general and can easily be adapted to work with other variants of \maxcut.
We decided to use \cubicmaxcut as the source problem for our reduction because it is a well-known variant of \maxcut and allows us to keep the presentation of our reduction simple.%
}
We formally define \cubicmaxcut as follows, where for a graph $G=(V,E)$ and a set $S\subseteq V$, we let $\delta_G(S)$ denote the set of edges with precisely one endpoint in $S$.

\begin{mdframed}[userdefinedwidth=0.95\linewidth]
	\linkdest{prb:cubic-max-cut}{\textbf{\cubicmaxcut:}}
	Given a cubic graph $G=(V,E)$ and $k\in \mathbb{Z}_{\geq 0}$, decide if $\exists S\subseteq V$ with $|\delta_G(S)|\geq k$.
\end{mdframed}
\cubicmaxcut is known to be \NP-complete \cite{bermanTighterInapproximabilityResults1999}.
As stated above, to show \NP-completeness of the non-containment problem, we will reduce from \cubicmaxcut.
In our reduction, cuts in an instance of the \cubicmaxcut problem will correspond to potentially violated constraints in the non-containment problem.

\begin{proof}[Proof of \cref{thm:separation}]
	Let $G=(V,E)$ be a cubic graph and let $k\in \mathbb{Z}_{\geq 0}$.
	Our goal is to construct an instance $(\widehat G, \widehat R, x)$ of the non-containment problem such that $x\notin Q_{(\widehat G, \widehat R)}$ if and only if $(G, k)$ is a yes-instance of \cubicmaxcut.
	To this end, we define an auxiliary graph $\widehat G=(\widehat V, \widehat E)$ by
	\[
		\widehat V \coloneqq \bigcup_{v\in V}\{v^+, v^-\}
		\quad\text{and}\quad
		\widehat E \coloneqq \big\{\{v^+, v^-\}\colon v\in
			V\big\} \cup
			\left(\bigcup_{\{v,w\}\in E} \{\{v^+,w^-\},\{v^-,w^+\}\} \right)
	\]
	and a set $\widehat R\subseteq \widehat E$ of red edges by
	\[
		\widehat R \coloneqq \big\{\{v^+, v^-\}\colon v\in V\big\}\enspace.
	\]
	In other words, we duplicate the vertex set $V$ and get two copies $V^+, V^-$.
	The red edges $\widehat R$ are those connecting the two copies corresponding to the same original vertex.
	The remaining edges are obtained by connecting, for every edge $e=\{v,w\}$ in $E$, the copy $v^+$ of one endpoint to the copy $w^-$ of the other endpoint and, symmetrically, the copy $w^+$ to the copy $v^-$.
	Observe that all edges have one endpoint in $V^+$ and the other endpoint in
	$V^-$, hence $\widehat{G}$ is bipartite.
	An example of this construction can be found in \cref{fig:reduction-graph}.

	\begin{figure}[ht]
		\begin{center}
\begin{tikzpicture}[scale=0.4]
	\begin{scope}
		\begin{scope}[every node/.style={node_style}]
			\node (c) at (6.60,3.0) {};
			\node (d) at (3.60,6.0) {};
			\node (a) at (0.60,3.0) {};
			\node (b) at (3.60,0.0) {};
			\node (e) at (3.6, 2) {};
			\node (f) at (3.6, 4) {};
		\end{scope}

		\begin{scope}
			\node[right=2pt] at (c) {$c$};
			\node[above=2pt] at (d) {$d$};
			\node[left=2pt] at (a) {$a$};
			\node[below=2pt] at (b) {$b$};
			\node[right=2pt] at (e) {$e$};
			\node[left=2pt] at (f) {$f$};
		\end{scope}

		\begin{scope}[
				every path/.style={
					standard_line,
				},
			]
			\draw (c) -- (d);
			\draw (d) -- (a);
			\draw (d) -- (f) -- (e) -- (b);
			\draw (a) -- (b);
			\draw (b) -- (c);
			\draw (a) -- (e);
			\draw (c) -- (f);
		\end{scope}
	\end{scope}

	\begin{scope}[xshift=15cm, yshift=20mm]
		\begin{scope}[every node/.style={node_style}]
			\node (ap) at (0,6) {};
			\node (bp) at (0,4) {};
			\node (cp) at (0,2) {};
			\node (dp) at (0,0) {};
			\node (ep) at (0,-2) {};
			\node (fp) at (0,-4) {};

			\node (am) at (5,6) {};
			\node (bm) at (5,4) {};
			\node (cm) at (5,2) {};
			\node (dm) at (5,0) {};
			\node (em) at (5,-2) {};
			\node (fm) at (5,-4) {};
		\end{scope}

		\begin{scope}
			\node[left=1pt] at (ap) {$a^+$};
			\node[left=1pt] at (bp) {$b^+$};
			\node[left=1pt] at (cp) {$c^+$};
			\node[left=1pt] at (dp) {$d^+$};
			\node[left=1pt] at (ep) {$e^+$};
			\node[left=1pt] at (fp) {$f^+$};

			\node[right=2pt] at (am) {$a^-$};
			\node[right=2pt] at (bm) {$b^-$};
			\node[right=2pt] at (cm) {$c^-$};
			\node[right=2pt] at (dm) {$d^-$};
			\node[right=2pt] at (em) {$e^-$};
			\node[right=2pt] at (fm) {$f^-$};
		\end{scope}

		\begin{scope}[
				every path/.style={
					standard_line,
					black
				},
			]
			\draw (ap) -- (bm);
			\draw (am) -- (bp);
			\draw (ap) -- (dm);
			\draw (am) -- (dp);
			\draw (ap) -- (em);
			\draw (am) -- (ep);

			\draw (bp) -- (cm);
			\draw (bm) -- (cp);
			\draw (bp) -- (em);
			\draw (bm) -- (ep);

			\draw (cp) -- (dm);
			\draw (cm) -- (dp);
			\draw (cp) -- (fm);
			\draw (cm) -- (fp);

			\draw (dp) -- (fm);
			\draw (dm) -- (fp);

			\draw (ep) -- (fm);
			\draw (em) -- (fp);
		\end{scope}

		\begin{scope}[
				every path/.style={
					standard_line,
					red
				},
			]
			\draw (ap) -- (am);
			\draw (bp) -- (bm);
			\draw (cp) -- (cm);
			\draw (dp) -- (dm);
			\draw (ep) -- (em);
			\draw (fp) -- (fm);
		\end{scope}
	\end{scope}
\end{tikzpicture}
 		\end{center}
		\caption{Constructing an auxiliary graph $\widehat G=(\widehat V, \widehat E)$ with red edges $\widehat R$ (right) from an original graph $G$ (left).
		We define a vector $x$ on the edges of the auxiliary graph by assigning an $x$-value of $1-3\alpha$ to each red edge and an $x$-value of $\alpha$ to all other edges, where $\alpha\coloneqq\frac{1}{2(|E|-k+1)}$.}
		\label{fig:reduction-graph}
	\end{figure}
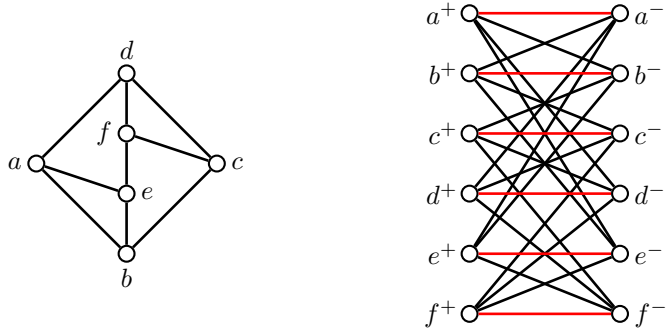

	Next, we define a point $x\in\mathbb{R}^{\widehat E}$ that we use in the non-containment problem.
  To this end, we let
  \begin{equation*}
		\alpha \coloneqq \frac{1}{2(|E| - k + 1)}\enspace,
	\end{equation*}
	and define $x$ by setting, for $e\in \widehat E$,
	\[
		x_e =
		\begin{cases}
			1-3\alpha & \text{if } e = \{v^+, v^-\} \in \widehat R \text{ for some } v\in V\\
			\alpha & \text{if } e \in \widehat E \setminus \widehat R
		\end{cases}\enspace.
	\]
	We can assume without loss of generality that $\alpha \leq \frac{1}{6}$, as otherwise we would have $|E|-k = O(1)$, in which case \cubicmaxcut can trivially be solved in polynomial time because there are only polynomially many edge sets with more than $k$ edges.
	Hence, every red edge has an $x$-value of at least $\sfrac{1}{2}$.
	Also, note that $x$ is in the perfect matching polytope of $\widehat G$ because there is precisely one unit of $x$-value on the edges incident to every vertex, i.e., degre constraint are satisfied.

	This finishes the construction of the instance $(\widehat G, \widehat R, x)$ of the non-containment problem.
	Recall that the corresponding label constraint relaxation is defined as
	\[
	Q_{(\widehat{G},\widehat R)}= \left\{
		x \in \mathbb{R}_{\geq 0}^{\widehat{E}} \colon
			\begin{array}{rl}
				x(\delta(u)) = 1 & \forall u\in \widehat V\\
				x(E_L)  \geq 1 & \forall L \in \Lall(\widehat{G})
			\end{array}
	\right\}\enspace.
	\]
	The containment problem at hand is to determine whether $x\notin Q_{(\widehat{G},\widehat R)}$.
	As degree constraints are satisfied by construction, $x\notin Q_{(\widehat{G},\widehat{R})}$ is equivalent to the existence of a labeling $L\in \Lall(\widehat G)$ with $x(E_L) < 1$.
	We first observe that $x$ satisfies the constraints defined by labelings $L$ with at least one red edge in $E_L$.

	\begin{claim}
		For all $L\in \Lall(\widehat{G})$ with $E_{L} \cap \widehat R \neq \emptyset$, we have $x(E_L) \geq 1$.
	\end{claim}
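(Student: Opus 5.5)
This claim is essentially immediate from the choice of $\alpha$. Fix $L\in \Lall(\widehat G)$ such that $E_L$ contains some red edge $\{v^+,v^-\}\in\widehat R$. Since all entries of $x$ are nonnegative, we can lower-bound $x(E_L)$ using only this red edge and its neighborhood. By construction the red edge carries value $1-3\alpha$, so it remains to collect $3\alpha$ more units from other edges of $E_L$.

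The key step is a local parity argument at $v$. The red edge $\{v^+,v^-\}$ lies in $E_L$ exactly when $L(v^+)\neq L(v^-)$. Let $w_1,w_2,w_3$ be the three neighbors of $v$ in the cubic graph $G$. The non-red edges $\{v^+,w_i^-\}$ and $\{v^-,w_i^+\}$ lie in $E_L$ exactly when $L(v^+)=L(w_i^-)$, respectively $L(v^-)=L(w_i^+)$.

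Next, fix $i$ and look at the red edge $\{w_i^+,w_i^-\}$. If $L(w_i^+)\neq L(w_i^-)$, then this red edge is itself in $E_L$, and we gain another $1-3\alpha\geq \tfrac12\geq 3\alpha$ (recall $\alpha\le\tfrac16$). In that case we are done immediately.

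Otherwise $L(w_i^+)=L(w_i^-)$. Because $L(v^+)\neq L(v^-)$ and labels are in $\{0,1\}$, the common label of $w_i^\pm$ equals exactly one of $L(v^+)$ and $L(v^-)$. Hence exactly one of $\{v^+,w_i^-\}$ and $\{v^-,w_i^+\}$ lies in $E_L$, contributing $\alpha$.

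So either some second red edge contributes at least $\tfrac12$, or every $i\in\{1,2,3\}$ contributes $\alpha$. In both cases the total is at least $1-3\alpha+3\alpha=1$, i.e., $x(E_L)\geq 1$.

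The only subtle point is to make sure the three contributing non-red edges are distinct. This holds since the $w_i$ are distinct neighbors; the paper's graphs are simple cubic graphs, so no parallel edges arise. The proof also uses $1-3\alpha\geq 3\alpha$, which is guaranteed by the assumption $\alpha\le\tfrac16$. I expect no real obstacle here; the claim is a local counting check.
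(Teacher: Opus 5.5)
Your proof is correct, but it takes a different route from the paper's.

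The paper argues globally via parity. Since $G$ is cubic, $|V|$ is even, so every $L\in\Lall(\widehat G)$ has $|L^{-1}(1)|$ even. Now $|L^{-1}(1)|=\sum_{v\in V}\bigl(L(v^+)+L(v^-)\bigr)$ has the same parity as the number of $v$ with $L(v^+)\neq L(v^-)$, which is exactly the number of red edges in $E_L$. Hence $E_L$ contains at least two red edges, each of value $1-3\alpha\geq\frac12$.

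You instead count locally at a single vertex $v$ with $L(v^+)\neq L(v^-)$. Either some neighbor's red edge also lies in $E_L$, or each of the three neighbors $w_i$ contributes exactly one of $\{v^+,w_i^-\}$ and $\{v^-,w_i^+\}$, which exactly makes up the deficit $3\alpha$ of the red edge. Your case analysis is sound, and both proofs use $\alpha\leq\frac16$.

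What your argument buys is generality. It never uses the parity condition defining $\Lall(\widehat G)$, nor that $|V|$ is even. So it shows $x(E_L)\geq 1$ for every labeling $L\colon\widehat V\to\{0,1\}$ with a red edge in $E_L$.

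What the paper's argument buys is brevity and robustness. It only needs every red value to be at least $\frac12$, and is insensitive to the non-red edges. Yours relies on the red value at $v$ being exactly $1$ minus $\alpha$ times the degree of $v$ (i.e., on the degree constraint at $v^+$) and on the specific adjacency pattern of $\widehat G$.
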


	\begin{proof}[Proof of claim]
		Assume that $L\in \Lall$ satisfies $E_{L} \cap \widehat R \neq \emptyset$, i.e., there are red edges in $E_L$.
		Recall that the red edges in $\widehat{G}$ are precisely the edges $\{v^+, v^-\}$, for $v\in V$.
		For them to be in $E_L$, we must have $L(v^+) \neq L(v^-)$.
		Note that $n=|V|$ is even because $G$ is a cubic graph, and every graph has an even number of vertices of odd degree.
		Hence, $|L^{-1}(1)|$ is even, too, which implies that there can only be an even number of red edges in $E_L$.
		As there are some red edges in $E_L$, it must thus contain at least two of them.
		Because the $x$-value on every red edge is at least $\sfrac{1}{2}$, we have $x(E_L) \geq \sum_{e\in E_L \cap \widehat R} x_e \geq 2\cdot \frac12 = 1$.
	\end{proof}

	Thus, only labelings $L$ for which no red edges appear in $E_L$, i.e., where $L(v^+) = L(v^-)$ for all $v\in V$, can lead to potentially violated constraints.
	Observe that there is a one-to-one correspondence between such labelings and cuts in $G$:
	from a labeling $L$ with $E_L\cap \widehat R=\emptyset$, we can construct a cut $S\subseteq V$ by setting $S \coloneqq \{v\in V\colon L(v^+) = L(v^-) = 1\}$; from a cut $S \subseteq V$ we can construct a labeling $L$ with $E_L\cap \widehat R=\emptyset$ by setting $L(v^+) = L(v^-) = 1$ for all $v\in S$ and $L(v^+) = L(v^-) = 0$ for all $v\in V\setminus S$.
	Beyond that, we can relate the $x$-value of the edges in $E_L$ to the number of edges in the corresponding cut $S$ as follows.
	
	Consider an edge $e=\{v,w\}\in E\setminus\delta_G(S)$, i.e., an edge of $G$ that is not in the cut defined by $S$.
	By construction, $e$ corresponds to two edges $\{v^+, w^-\}$ and $\{v^-, w^+\}$ of $\widehat G$ in $\widehat E \setminus \widehat R$.
	By definition of $S$, we have $L(v^+) = L(w^-) = L(v^-) = L(w^+)$.
	In particular, $\{v^+, w^-\}, \{v^-,w^+\}\in E_L$.
	Vice versa, for every edge $e = \{v^+, w^-\}\in E_L$, we have that $\{v,w\}\in E\setminus\delta(S)$.
	Consequently,
	\[
		x(E_L) = \alpha \cdot |E_L| = \alpha \cdot  2|E \setminus \delta(S)| = 2\alpha(|E| - |\delta(S)|)\enspace.
	\]
	Thus, there is a labeling $L$ on $\widehat{G}$ with $x(E_L)<1$ if and only if there is an $S\subseteq V$ such that $2\alpha (|E| - |\delta(S)|) < 1$.
	The latter can be rearranged to
	\[
		|\delta(S)| > |E| - \frac{1}{2\alpha} = k - 1 \enspace.
	\]
	We thus obtain that there is a violated label constraint if and only if the original graph $G$ has a cut of size at least $k$.
	This finishes the reduction and thus the proof.
\end{proof}

\section{An explicit example where label constraints do not suffice}
\label{sec:explicit-example}

We give an explicit example of a bipartite graph $G=(V,E)$ and a set of red edges $R\subseteq E$ for which the label constraint relaxation $Q_{(G,R)}$ of \textcite{jia_2023_exactMatching} is not equal to the odd-red perfect bipartite matching polytope $P_{(G,R)}$.
Concretely, consider the bipartite graph $G$ together with the red edges $R$ presented in \cref{fig:example_graph}.

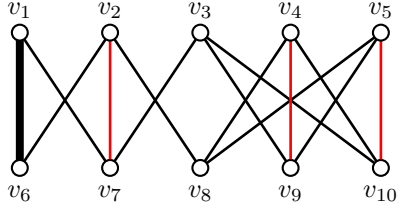
\begin{figure}[!ht]
	\begin{center}
\begin{tikzpicture}[scale=0.6]
	\newcommand{\xoffset}{2}
	\newcommand{\yoffset}{3}

	\foreach \i in {1,2,3,4,5} {
		\coordinate (\i) at (\i*\xoffset, \yoffset) {};
	}
	\foreach \i in {6,7,8,9,10} {
		\coordinate (\i) at (\i*\xoffset - 5*\xoffset, 0) {};
	}

	\begin{scope}[
		every path/.style={
			standard_line,
			black
		},
		]
		\draw[line width=3] (1) -- (6);
		\draw (1) -- (7);
		\draw (2) -- (8);
		\draw (2) -- (6);
		\draw (3) -- (7);
		\draw (3) -- (9);
		\draw (3) -- (10);
		\draw (4) -- (8);
		\draw (4) -- (10);
		\draw (5) -- (8);
		\draw (5) -- (9);

		\draw[red] (2) -- (7);
		\draw[red] (4) -- (9);
		\draw[red] (5) -- (10);
	\end{scope}

	\begin{scope}[every
		node/.style={node_style
		}]
		\foreach \i in {1,2,3,4,5} {
			\node (\i) at (\i) {};
		}
		\foreach \i in {6,7,8,9,10} {
			\node (\i) at (\i) {};
		}
	\end{scope}
		\foreach \i in {1,2,3,4,5} {
			\node[above=2pt] at (\i) {$v_{\i}$};
		}
		\foreach \i in {6,7,8,9,10} {
			\node[below=3pt] at (\i) {$v_{\i}$};
		}
\end{tikzpicture}
 	\end{center}
	\caption{Example of a graph $G=(V,E)$ with the indicated three red edges $R\subseteq E$ for which the polytope $Q_{(G,R)}$ is not integral.
		The point $y$ we consider to prove this has a weight of
		$\sfrac{2}{3}$ on the edge $(v_1,v_6)$ marked in bold, and
		$\sfrac{1}{3}$ on every other edge.}
	\label{fig:example_graph}
\end{figure}
\noindent
To be precise, we use $R\coloneqq \{\{v_2,v_7\}, \{v_4,v_9\}, \{v_5,v_{10}\}\}$.
We claim that $P_{(G,R)} \neq Q_{(G,R)}$ for this graph $G$.
To prove this, we define a point $y\in \mathbb{R}^E$ with value $\sfrac{1}{3}$ on every edge but the edge $\{v_1, v_6\}$, where it has value $\sfrac{2}{3}$ instead, i.e.,
\[
y(e) =
\begin{cases}
	\frac{1}{3} & \text{if } e \neq \{v_1,v_6\}\\
	\frac{2}{3} & \text{if } e = \{v_1,v_6\}
\end{cases}\quad\text{for all } e\in E\enspace.
\]
First of all, it is easy to see that $y$ does not lie in $P_{(G,R)}$.

\begin{lemma}\label{obs:y-not-in-P}
	The point $y$ cannot be expressed as a convex combination of
	incidence vectors of odd-red perfect matchings in $G$.
\end{lemma}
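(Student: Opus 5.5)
The plan is to enumerate all odd-red perfect matchings of $G$ and then find one linear relation between two edge values that every convex combination of them must satisfy but $y$ does not. The graph has only ten vertices, so a complete case analysis on how $v_6$ and $v_7$ are matched is short.

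\textbf{Step 1: enumerate the odd-red perfect matchings.} Since $\delta(v_6)=\{\{v_1,v_6\},\{v_2,v_6\}\}$, every perfect matching $M$ contains exactly one of these two edges.
\begin{itemize}
\item Suppose $\{v_1,v_6\}\in M$. Then $v_2$ is matched to $v_7$ or to $v_8$.
\begin{itemize}
\item If $\{v_2,v_8\}\in M$, then $v_7$ must be matched to $v_3$, since its only neighbours are $v_1,v_2,v_3$. This leaves $v_4,v_5$ to be matched to $v_9,v_{10}$. The two completions use either both red edges $\{v_4,v_9\},\{v_5,v_{10}\}$ or neither. So these matchings contain $0$ or $2$ red edges, and none is odd-red.
\item The remaining option is $\{v_2,v_7\}\in M$.
\end{itemize}
Hence every odd-red perfect matching containing $\{v_1,v_6\}$ also contains the red edge $\{v_2,v_7\}$.
\item Suppose $\{v_2,v_6\}\in M$. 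Then $v_1$ is forced to $v_7$.
\end{itemize}
Checking the few completions on $\{v_3,v_4,v_5\}$ versus $\{v_8,v_9,v_{10}\}$ gives exactly four odd-red perfect matchings. The two containing $\{v_1,v_6\}$ both contain $\{v_2,v_7\}$. For the main argument, only the implication ``$\{v_1,v_6\}\in M \Rightarrow \{v_2,v_7\}\in M$'' is needed.

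\textbf{Step 2: derive a valid inequality.} By Step 1, every incidence vector $\chi^M$ of an odd-red perfect matching satisfies
\[
\chi^M(\{v_2,v_7\}) \geq \chi^M(\{v_1,v_6\}).
\]
This inequality is linear, so it is preserved under convex combinations. Therefore $z(\{v_2,v_7\})\geq z(\{v_1,v_6\})$ holds for all $z\in P_{(G,R)}$.

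\textbf{Step 3: check that $y$ violates it.} We have $y(\{v_2,v_7\})=\tfrac13<\tfrac23=y(\{v_1,v_6\})$. Hence $y\notin P_{(G,R)}$.

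The only step needing care is the exhaustive case check in Step 1, in particular verifying that the branch $\{v_1,v_6\},\{v_2,v_8\}\in M$ has no odd-red completion. This is routine since each branch has at most two completions.
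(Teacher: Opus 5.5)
Your proof is correct and rests on the same observation as the paper's: a perfect matching containing $\{v_1,v_6\}$ and $\{v_2,v_8\}$ is forced through $\{v_3,v_7\}$ and then has $0$ or $2$ red edges. The paper separates $y$ using the fact that no odd-red perfect matching contains $\{v_3,v_7\}$, i.e.\ $x(\{v_3,v_7\})=0$, rather than your inequality $x(\{v_2,v_7\})\geq x(\{v_1,v_6\})$. The two are equivalent, because the degree constraints at $v_1$ and $v_7$ give $x(\{v_2,v_7\})+x(\{v_3,v_7\})=x(\{v_1,v_6\})$.
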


\begin{proof}
	Consider the edge $e = \{v_3, v_7\}$.
	Every perfect matching that contains this edge must also contain
	the edges $\{v_1, v_6\}$ and $\{v_2, v_8\}$.
	This implies that the perfect matching either uses both red edges
	$\{v_4, v_9\}$ and $\{v_5, v_{10}\}$, or none of them.
	Consequently, there is no odd-red matching containing $e$, finishing the
	proof, as $y(e)=\sfrac13$ is non-zero.
\end{proof}

However, we claim that $y$ is feasible for $Q_{(G,R)}$.

\begin{lemma}\label{claim:y-in-Q}
	The point $y$ is feasible for $Q_{(G,R)}$.
\end{lemma}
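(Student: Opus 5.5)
The plan is to verify the two kinds of constraints in $Q_{(G,R)}$ directly. The degree constraints are a routine check. Each $v_i$ with $i\notin\{1,6\}$ has three incident edges of value $\frac13$. The vertices $v_1$ and $v_6$ each have one edge of value $\frac13$ plus the edge $\{v_1,v_6\}$ of value $\frac23$.

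For the label constraints we reformulate the condition $y(E_L)<1$. Since all values are multiples of $\frac13$ and only $\{v_1,v_6\}$ carries $\frac23$, a violated label constraint means one of two things:
\begin{enumerate}
\item $E_L=\{\{v_1,v_6\}\}$, or
\item $\{v_1,v_6\}\notin E_L$ and $|E_L|\le 2$.
\end{enumerate}
So it suffices to show that every $E_L$ with $L\in\Lall(G)$ either contains at least three edges other than $\{v_1,v_6\}$, or contains $\{v_1,v_6\}$ together with at least one further edge.

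The main tool is parity. Write $S=L^{-1}(1)$; then $E\setminus E_L = (E\setminus R)\,\triangle\,\delta(S)\,\triangle\,R$ restricted appropriately, so $E_L$ is $\delta(S)$ complemented and then flipped on $R$.
\begin{itemize}
\item \emph{Cycles.} For every cycle $D$ of $G$, summing $L(u)+L(v)$ over the edges of $D$ gives an even number. Since $|D|$ is even, this yields $|D\cap E_L|\equiv |D\cap R| \pmod 2$.
\item \emph{Perfect matchings.} For every perfect matching $M$, using $|S|\equiv 5 \pmod 2$, the same computation gives $|M\cap E_L|\equiv |M\cap R| \pmod 2$.
\end{itemize}
Hence $E_L$ meets every odd-red cycle and every odd-red perfect matching an odd number of times, and every even-red cycle or matching an even number of times.

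First we exclude case (i). Consider the odd-red perfect matching
\[
\{\{v_1,v_7\},\{v_2,v_6\},\{v_3,v_9\},\{v_4,v_8\},\{v_5,v_{10}\}\},
\]
which avoids $\{v_1,v_6\}$. It would have to meet $E_L$, so $E_L\neq\{\{v_1,v_6\}\}$. Similarly, $E_L$ cannot equal $\{\{v_1,v_6\},e\}$ unless every odd-red matching and odd-red cycle is consistent with this, but that case is harmless anyway, since $y(E_L)=1$ there.

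For case (ii), we list a small family of odd-red cycles in $G$ avoiding $\{v_1,v_6\}$, together with some odd-red perfect matchings. Natural candidates for the cycles are:
\begin{itemize}
\item the $4$-cycles through a single red edge, e.g.\ $v_4v_9v_5v_{10}$-type cycles and $v_3v_9v_4v_{10}$;
\item cycles through $v_2v_7$ such as $v_2v_7v_3v_9v_5v_8$.
\end{itemize}
Each of these must contain an odd number of $E_L$-edges. The goal is to show that no set of at most two edges (avoiding $\{v_1,v_6\}$) hits all of them oddly while hitting all even-red cycles evenly. The cleanest way is to pick, for each potential pair $\{e,f\}$, a witness: an odd-red cycle or matching avoiding both, or an even-red cycle containing exactly one of them. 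The symmetry swapping $v_4\leftrightarrow v_5$ and $v_9\leftrightarrow v_{10}$ roughly halves the work.

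The main obstacle is organising this case analysis without brute-force enumeration. As a fallback, the label constraints depend only on $S$ up to complementation (complementing $S$ preserves $E_L$ and the parity of $|S|$, since $|V|=10$). This leaves $2^9/2=256$ effective labelings. One can fix $L(v_1)=0$, branch on the labels of the low-degree vertices $v_1,v_6,v_2,v_7$ first, and propagate the forced edges. In each branch we would exhibit three edges of $E_L$, or $\{v_1,v_6\}$ plus one more edge.
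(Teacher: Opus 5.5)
\textbf{There is a genuine gap: case (ii) is never proven.} Several parts are correct: the degree check, the reduction to cases (i)/(ii), both parity facts, and the exclusion of case (i) via an odd-red perfect matching avoiding $\{v_1,v_6\}$. There is one small slip: $E\setminus E_L=\delta(S)\triangle R$, whereas your displayed expression simplifies to $E\setminus\delta(S)$. Your verbal description $E_L=(E\setminus\delta(S))\triangle R$ is the right one.

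Case (ii) is the whole content of the lemma. There you only describe how one might search: a witness for every pair $\{e,f\}$, or branching over $256$ labelings. You carry out neither. As written, nothing excludes a labeling $L\in\Lall(G)$ with $|E_L|\le 2$ and $\{v_1,v_6\}\notin E_L$, so the proof is incomplete. Also, your sample cycle $v_4v_9v_5v_{10}$ is not odd-red. It contains both red edges $\{v_4,v_9\}$ and $\{v_5,v_{10}\}$, so your own parity fact only says $E_L$ meets it an even number of times, possibly zero.

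\textbf{The missing idea is a small covering argument, which removes the need for any case distinction.} Take the four cycles $C=(v_1,v_6,v_2,v_7)$, $C_1=(v_5,v_{10},v_3,v_9)$, $C_2=(v_4,v_{10},v_5,v_8)$ and $C_3=(v_3,v_{10},v_4,v_9)$. Each contains exactly one red edge, so by your cycle parity fact $E_L$ meets each of them, for every $L\in\Lall(G)$.

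The cycle $C$ is edge-disjoint from $C_1\cup C_2\cup C_3$. No single edge lies in all three of $C_1,C_2,C_3$: their pairwise intersections are only $\{v_5,v_{10}\}$; $\{v_3,v_9\},\{v_3,v_{10}\}$; and $\{v_4,v_{10}\}$. So $E_L$ needs at least two edges to meet $C_1,C_2,C_3$, plus at least one more edge in $C$. Hence $|E_L|\ge 3$, and since $y\ge\frac13$ on every edge, $y(E_L)\ge 1$.

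This is exactly the paper's argument. It needs neither the perfect-matching parity nor any special treatment of the $\frac23$-edge $\{v_1,v_6\}$.
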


Together, \cref{obs:y-not-in-P,claim:y-in-Q} immediately imply that our construction is an example in which the label constraints do not suffice to describe the odd-red perfect matching polytope.
To show \cref{claim:y-in-Q}, we first repeat a lemma from \cite{jia_2023_exactMatching}, including a proof for completeness.
To this end, we recall from \cref{sec:candidate-descripton} the definitions
$$
\Lall(G) \coloneqq \{L \colon V \rightarrow \{0,1\} \text{
with } |L^{-1}(1)| \equiv n \pmod*{2}\}\enspace,
$$
and, for $L\in \Lall(G)$,
$$
  E_L = \{e\in E\setminus R \colon L(u) = L(v)\} \cup \{e \in
  R \colon L(u) \neq L(v)\}\enspace.
$$

\begin{lemma}[{\cite[Claim 5.2]{jia_2023_exactMatching}}]\label{obs:odd-cycles}
	Let $C$ be a cycle in $G$ with $|C\cap R|$ odd and let $L\in \Lall(G)$.
	Then $|E_L\cap C|$ is odd.
	In particular, the intersection is non-empty.
\end{lemma}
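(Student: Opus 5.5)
The plan is a parity count: express $|E_L\cap C|$ modulo $2$ as a sum over the edges of $C$, and evaluate that sum using the fact that a closed cycle crosses any labeling boundary an even number of times.

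For an edge $e=\{u,v\}$, let $d(e)\coloneqq 1$ if $L(u)\neq L(v)$ and $d(e)\coloneqq 0$ otherwise. By the definition of $E_L$, an edge $e$ lies in $E_L$ exactly when
\[
\mathbb{1}[e\in E_L] \equiv 1 + d(e) + \mathbb{1}[e\in R] \pmod*{2}.
\]
To check this, take a non-red edge: it is in $E_L$ iff $d(e)=0$, i.e.\ iff $1+d(e)\equiv 1$. Take a red edge: it is in $E_L$ iff $d(e)=1$, i.e.\ iff $1+d(e)+1\equiv 1$.

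Summing this congruence over all edges of $C$ gives
\[
|E_L\cap C| \equiv |C| + \sum_{e\in C} d(e) + |C\cap R| \pmod*{2}.
\]
Each of the three terms has a known parity.
\begin{itemize}
\item Since $G$ is bipartite, the cycle $C$ has even length, so $|C|\equiv 0$.
\item Walk once around $C$ and follow the label $L$. It starts and ends at the same vertex, so its value changes an even number of times. Hence $\sum_{e\in C} d(e)$ is even.
\item By assumption, $|C\cap R|$ is odd.
\end{itemize}
Therefore $|E_L\cap C|$ is odd, and in particular $E_L\cap C\neq\emptyset$.

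The argument never uses the parity condition $|L^{-1}(1)|\equiv n$ defining $\Lall(G)$, so the statement holds for every labeling $L$. The only step needing any care is the edge-wise congruence; the rest is bookkeeping.
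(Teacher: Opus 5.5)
Your proof is correct and is essentially the paper's argument packaged differently. The paper starts from the all-ones labeling, where $E_L\cap C = C\setminus R$ has the same parity as $C\cap R$ because $C$ is even, and then notes that flipping the label of one vertex of $C$ toggles both incident cycle edges and so preserves the parity of $|E_L\cap C|$; your edge-wise congruence summed over $C$ does the same bookkeeping in one step, with the evenness of $\sum_{e\in C} d(e)$ playing the role of that flip invariance (and, as you note, neither version uses the parity condition on $|L^{-1}(1)|$).
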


\begin{proof}
	Assume first that $L(v)=1$ for every vertex $v$ on $C$.
	Then we have $E_L\cap C = C\setminus R$.
	As $C$ is an even cycle, we have $|C\setminus R| \equiv|C\cap R| \pmod*{2}$.
	Together, this gives $|E_L\cap C| \equiv |C\cap R| \equiv 1 \pmod*{2}$, as desired.

	For the general case, note that the parity of $|E_L\cap C|$ does not change when flipping the label of a single vertex $v$ of $C$, as this flips the membership or non-membership in $E_L$ of both edges of $C$ incident to $v$.
	By applying this observation successively to all vertices labeled $0$, we get the statement for any labeling $L\in \Lall(G)$.
\end{proof}

We finish this section by proving \cref{claim:y-in-Q} to finalize the example.

\begin{proof}[Proof of \cref{claim:y-in-Q}]
	We first check that the degree constraints are fulfilled.
	To this end, note that all vertices but $v_1$ and $v_6$ have degree
	$3$, and $y$ has value $\sfrac{1}{3}$ on all of their incident edges.
	The vertices $v_1$ and $v_6$ only have two incident edges, but with
	$y$-value $\sfrac{1}{3}$ and $\sfrac{2}{3}$, respectively.
	Consequently, all degree constraints are satisfied.

	Proving that the partition constraints $y(E_L)\geq 1$ hold for all
	$L\in \Lall(G)$ is a bit more involved.
	Consider the cycle $C = (v_1, v_6, v_2, v_7)$ as well as the cycles
	$C_1 = (v_5, v_{10}, v_3, v_9)$, $C_2 = (v_4, v_{10}, v_5, v_8)$
	and $C_3 = (v_3, v_{10}, v_4, v_9)$.
	Each of these cycles contains exactly one red edge.
	By \cref{obs:odd-cycles}, regardless of how we choose $L\in
	\Lall(G)$, for each of the four cycles, one edge will be part of $E_L$.
	Note that $C$ is disjoint from any of the cycles $C_1, C_2, C_3$,
	and the latter three do not have a common edge, i.e., $C_1\cap C_2\cap C_3 = \emptyset$.
	Thus, we can conclude that for every $L\in \Lall(G)$, $E_L$
	contains at least $3$ edges.
	As $y$ has value at least $\sfrac{1}{3}$ on every edge, we must
	have $y(E_L)\geq 1$, as claimed.
\end{proof}

\end{document}